\theoremstyle{theorem}
\newtheorem{theorem}{Theorem}[section]
\newtheorem{definition}[theorem]{Definition}
\newtheorem{lemma}[theorem]{Lemma}
\newtheorem{corollary}[theorem]{Corollary}
\newtheorem{remark}[theorem]{Remark}
\newcommand{\Ber}{\mathbf{Ber}}
\newcommand{\Lap}{\mathbf{Lap}}
\newcommand{\calA}{\mathcal{A}}
\newcommand{\AAA}{\mathcal{A}}
\newcommand{\calM}{\mathcal{M}}
\newcommand{\calX}{{\mathcal{X}}}
\newcommand{\calY}{{\mathcal{Y}}}
\newcommand{\calW}{{\mathcal{W}}}
\newcommand{\eps}{\varepsilon}
\newcommand{\SSS}{\mathcal{S}}
\newcommand{\supp}{\mathsf{supp}}
\newcommand{\xin}[1]{{\color{red} \tiny #1 --Xin}}
\renewcommand{\xin}[1]{}
\newcommand{\edith}[1]{{\color{purple} Edith: #1}}
\renewcommand{\edith}[1]{}
\newcommand{\ignore}[1]{}
\date{}
\title{The Target-Charging Technique \\ for Privacy Analysis across Interactive Computations}
\author{{\normalfont Edith Cohen}\thanks{Google Research and Tel Aviv University. \texttt{edith@cohenwang.com}.} \and Xin Lyu\thanks{UC Berkeley and Google Research. \texttt{lyuxin1999@gmail.com}.}}
\begin{document}






 \maketitle

\begin{abstract}
We propose the \emph{Target Charging Technique} (TCT), a unified privacy analysis framework for interactive settings where a sensitive dataset is accessed multiple times using differentially private algorithms. Unlike traditional composition, where privacy guarantees deteriorate quickly with the number of accesses, TCT allows computations that don't hit a specified \emph{target}, often the vast majority, to be essentially free (while incurring instead a small overhead on those that do hit their targets). TCT generalizes tools such as the sparse vector technique and top-$k$ selection from private candidates and extends their remarkable privacy enhancement benefits from noisy Lipschitz functions to general private algorithms.

\end{abstract}

\SetKwFunction{AboveThreshold}{AboveThreshold}
\SetKwFunction{NotPrior}{NotPrior}
\SetKwFunction{Boundary}{Boundary}
\SetKwFunction{BetweenThresholds}{BetweenThresholds}

\section{Introduction}

  \SetKwFunction{CR}{ConditionalRelease}
  \SetKwFunction{RCR}{ReviseCR}

In many practical settings of data analysis and optimization, the dataset $D$ is accessed multiple times interactively via different algorithms $(\calA_i)$, so that $\calA_i$ depends on the transcript of prior responses $(\calA_j(D))_{j<i}$.   When each $\calA_i$ is privacy-preserving, we are interested in tight end-to-end privacy analysis.
We consider the standard statistical
framework of differential privacy introduced in~\cite{DMNS06}.
{\em Composition} theorems~\cite{DBLP:conf/focs/DworkRV10} are a generic way to do that and achieve overall privacy cost that scales linearly or (via  ``advanced" composition) with square-root dependence in the number of private computations. We aim for a broad understanding of scenarios where the overall privacy bounds can be lowered significantly via the following paradigm:  Each computation is specified by a private algorithm $\calA_i$ together with a {\em target} $\top_i$, that is a subset of its potential outputs.  The total privacy cost depends only on computations where 
the output hits its target, that is $\calA_i(D)\in \top_i$.
This paradigm is suitable and can be highly beneficial when (i)~the specified targets are a good proxy for the actual privacy exposure and (ii)~we expect the majority of computations to not hit their target, and thus essentially be ``free'' in terms of privacy cost.

The Sparse Vector Technique (SVT) \cite{DNRRV:STOC2009,DBLP:conf/stoc/RothR10,DBLP:conf/focs/HardtR10,DBLP:books/sp/17/Vadhan17-dp-complex} is the quintessential special case.  
SVT is focused on specific type of computations that have the form of approximate threshold tests applied to Lipschitz functions.  Concretely, each such \AboveThreshold test is specified by a
$1$-Lipschitz function $f$ and a threshold value $t$ and we wish to test 
whether $f(D) \gtrsim t$. 
The textbook SVT algorithm compares a 
noisy value with a noisy threshold (independent Laplace noise for the values and threshold noise that can be updated only after positive responses). Remarkably, the overall privacy cost depends only on the number of positive responses, roughly, composition is applied to twice the number of positive responses instead of to the total number of computations. Using our terminology, the target of each test is a positive response. 

SVT privacy analysis benefits when the majority of \AboveThreshold test results are negative (and hence ``free''). This makes SVT a key ingredient in a range of methods~\cite{DworkRothBook2014}: private multiplicative weights~\cite{DBLP:conf/focs/HardtR10}, Propose-Test-Release~\cite{DworkLei:STOC2009}, fine privacy analysis via 
distance-to-stability~\cite{pmlr-v30-Guha13},
model-agnostic private learning~\cite{BassilyTT:NEURIPS2018},
and designing streaming algorithms that are robust to adaptive inputs~\cite{HassidimKMMS20,CLNSSS:ICML2022}.\footnote{Robustness was linked to privacy so that use of SVT allowed dependence on \emph{changes to the output} rather than on the typically much larger number of updates to the input.}


We aim to extend such SVT-like privacy analysis benefits to interactive applications of \emph{general} private algorithms (that is, algorithms that provide privacy guarantees but have no other assumptions): private tests, where we would hope to incur privacy cost only for positive responses, and private algorithms that return more complex outputs, e.g., vector average, cluster centers, a sanitized dataset, 
or a trained ML model, where the goal is to incur privacy cost only when the output satisfies some criteria. 
The textbook SVT, however, seems less amenable to such extensions:
First, SVT departs from the natural paradigm of applying private algorithms to the dataset and reporting the output. A natural implementation of private \AboveThreshold tests would add Laplace noise to the value and compare with the threshold. Instead, 
SVT takes as input the Lipschitz output of the non-private algorithms with threshold value and the privacy treatment is integrated (added noise both to values and threshold). The overall utility and privacy of the complete interaction are analyzed with respect to the non-private values, which is not suitable when the algorithms are already private. 
Furthermore, the technique of using a hidden shared threshold noise across multiple \AboveThreshold tests\footnote{We mention that\cite{DBLP:conf/focs/HardtR10} did not use noisy thresholds but nearly all followup works did} is specific for Lipschitz functions, introduces dependencies between responses (that are biased the same way and can be undesirable for downstream applications), and more critically, implies additional privacy cost for reporting noisy values.
Analytics tasks often require a value to be reported with an above-threshold test result, which incurs an additional separate privacy charge with SVT~\cite{DBLP:journals/pvldb/LyuSL17}.\footnote{Reporting the noisy value that was compared with the noisy threshold discloses information on the shared threshold noise.}

Private tests, mentioned above, are perhaps the most basic extension for which we seek SVT-like benefits.  The natural approach would be to apply each test once, report the result, and hope to incur privacy charge only on positive responses. Private testing was considered in prior works~\cite{LiuT19-private-select,CLNSS:ITCS2023} but in ways that significantly departed from this natural paradigm:
Instead, the approach of \cite{LiuT19-private-select} processed the private tests so that a positive answer is returned only when the probability $p$ of a  positive response by the private test is very close to 1.\footnote{
Assuming a {\em percentile oracle}, that provides the probability $p$ of a $1$ response, they applied \AboveThreshold to 
$\log(p/(1-p)$ (that is $2\eps$-Lipschitz when the test is $\eps$-private). Note that by adding $\Lap(1/\eps)$ noise to the log ratio, we effectively need to use a threshold that applies only for $p$ that is extremely close to $1$, that is, $1-p \approx 2^{-1/\eps}$. Therefore positive responses are reported with a different (and much lower) probability than the original test. For the case where a percentile oracle is not available, \cite{LiuT19-private-select} proposed an approximate DP computation with a somewhat less efficient bound.} This seems unsatisfactory: If the design goal of the private testing algorithm was to report only very high probabilities, then this could have been integrated into the design (possibly while avoiding the factor-2 privacy overhead), and if otherwise, then we miss out on acceptable positive responses with moderately high probabilities (e.g. 95\%). 

\emph{Top-$k$ selection} is another setting
where careful privacy analysis is hugely beneficial. Top-$k$ is  a basic subroutine in data analysis, where input algorithms $(\calA_i)_{i\in[m]}$ (aka candidates) that return results with quality scores are provided in a \emph{batch} (i.e., non interactively). The selection returns the $k$ candidates with highest quality scores on our dataset. The respective private construct, where the data is sensitive and the algorithms are private, had been intensely studied~\cite{McSherryMironov:KDD2009,FriendmanSchuster:KDD2010,SteinkeUllman:FOCS2017}. We might hope for privacy cost that is close to a composition over $k$ private computations, instead of over $m \gg k$.
The natural approach for top-$k$ (and what we would do on non-sensitive data) is \emph{one-shot} (Algorithm~\ref{algo:top-k}), where each algorithm is applied once and the responses with top-$k$ scores are reported. 
Prior works on private selection that achieve this analysis goal include those 
\cite{DBLP:conf/nips/DurfeeR19,DBLP:conf/icml/QiaoSZ21}
that use the natural one-shot selection but are 
tailored to Lipschitz functions (apply the Exponential Mechanism~\cite{DBLP:conf/focs/McSherryT07} or the Report-Noise-Max paradigm~\cite{DworkRothBook2014}) and works~\cite{LiuT19-private-select,DBLP:conf/iclr/Papernot022,CLNSS:ITCS2023} that do apply with general private algorithms but significantly depart from the natural one-shot approach: They make a randomized number of computations that is generally much larger than $m$, with each $\calA_i$ invoked multiple times or none. The interpretation of the selection deviates from top-$1$ and does not naturally extend to top-$k$.
\footnote{\cite{LiuT19-private-select} proposed two algorithms for selecting a top candidate from $\eps$-DP candidates.  Their design uses a randomized overall number of applications the candidates (that is, the algorithms $\{\calA_i\}$). Each call is made with $\calA_i$ where $i\sim[m]$ is selected uniformly at random. The first algorithm has overall privacy parameter that is $\approx 2\eps$ and the output has quantile guarantees.  The second algorithm has privacy parameter $3\eps$ and returns the top score over all calls. Due to the randomized invocations, a logarithmic factor increase in the number of calls is needed in order to make sure each algorithm is called at least once.}
We seek privacy analysis that applies to one-shot top-$k$ selection with candidates that are general private algorithms.


The departures made in prior works from the natural interactive paradigm and one-shot selection were essentially compromises:  Simple arguments (that apply with both top-$1$ one-shot private selection~\cite{LiuT19-private-select} and \AboveThreshold tests) show that SVT-like benefits are not possible: 
If we perform $m$ computations that are $\eps$-DP (that is, $m$ candidates or $m$ tests),  the privacy parameter value for a pure DP bound is $\Omega(m)\eps$ and the parameter values for an approximate DP bound are $(\Omega(\eps\log(1/\delta)),\delta)$. This is a daunting overhead -- the privacy charge is of $O(\log(1/\delta))$ instead of $O(1)$ invocations.  
The departure allowed for the appealing benefits of pure-DP and remarkably, for low privacy overhead (factor of 2 or 3 increase in the $\eps$ parameter) even with a single ``above'' response or a single selection. 

We revisit the natural paradigms for interactive accesses and one-shot selection, for their simplicity, interpretability, and generality, with a fresh approach. Considering the mentioned limitations, we take approximate DP to be a reasonable compromise (that is anyhow necessary with advanced composition and other divergences). Additionally, we aim for the regime 
where many private computations are performed on the same dataset and out of these many computations we expect multiple, say $\Omega(\log(1/\delta))$,  ``target hits'' (e.g. positive tests and sum of the $k$-values of selections).  
With these particular relaxations in mind, can we obtain 
SVT-like benefits (e.g. privacy charge that corresponds to $O(1)$ calls per ``target hit'') with the natural paradigm? 
Moreover, can we integrate private top-$k$ selections in a unified target-charging analysis, so that each top-$k$ selection  we perform amounts to $O(k)$ additional target hits?
Such unification would facilitate tighter analysis with advanced composition (performed over all target hits) and amortize overheads.

\section{Overview of Contributions}
We introduce the
{\em Target-Charging Technique (TCT)} for 
privacy analysis over interactive private computations (see Algorithm~\ref{algo:targetcharge}).
Each computation performed on the sensitive dataset $D$ is specified by a private algorithm $\calA_i$ and \emph{target} pairs $\top_i$. The interaction is halted after a pre-specified number $\tau$ of computations that satisfy $\calA_i(D) \in \top_i$.
We define targets as follows:
\begin{definition}[$q$-Target]\label{def:qtarget}
Let $\calM:X^n\to \calY$ be a randomized algorithm. 
For $q\in(0,1]$ and $\eps>0$, we say that a subset $\top \subseteq \calY$ of all possible outcomes is a \emph{$q$-Target} of $\calM$ if the following holds:
For any pair $D^0$ and $D^1$ of neighboring data sets, there exist
$p\in [0,1]$, and three distributions $\mathbf{C}$, $\mathbf{B}^0$ and $\mathbf{B}^1$ such that
\begin{enumerate}
\item The distributions $\calM(D^0)$ and $\calM(D^1)$ can be written as the following mixtures:
\begin{align*}
\calM(D^0) &\equiv p \cdot \mathbf{C} + (1-p) \cdot \mathbf{B}^0,
\\ \calM(D^1) &\equiv p\cdot \mathbf{C} + (1-p)\cdot \mathbf{B}^{1}.
\end{align*}
\item $\mathbf{B}^0,\mathbf{B}^{1}$ are $(\eps,0)$-indistinguishable,
\item $\min(\Pr[\mathbf{B}^0\in \top],\Pr[\mathbf{B}^{1}\in \top]) \ge q$.
\end{enumerate}
\end{definition}

The effectiveness of a target as a proxy of the actual privacy cost is measured by its $q$-value where $q\in (0,1]$. We interpret $1/q$ as the \emph{overhead factor} of the actual privacy exposure per target hit, that is, the number of private accesses that correspond to a single target hit. 
Note that an algorithm with a $q$-target for $\eps>0$ must be 
$(\eps,0)$-DP and that any $(\eps,0)$-DP algorithm has a $1$-target, as the set of all outcomes $\top = \calY$ is a $1$-target (and hence also a $q$-target for any $q\leq 1$). The helpful targets are ``smaller'' (so that we are less likely to be charged) with larger $q$ (so that the overhead per charge is smaller).
We establish the following privacy bounds.

\begin{lemma}[simplified meta privacy cost of target-charging]\label{metaprivacy:lemma}
The privacy parameters of Algorithm~\ref{algo:targetcharge}
(applied with $\eps$-DP  algorithms $\calA_i$ and $q$-targets $\top_i$ until targets are hit $\tau$ times)  is $(\eps',\delta)$ where $\eps' \approx \frac{\tau}{q} \eps$ and $\delta = e^{-O(\tau)}$. 

Alternatively, we obtain parameter values $(\eps',\delta')=(f_\eps(r,\eps),f_\delta(r,\eps)+e^{-O(\tau)})$ where $r\approx \tau/q$ and $(f_\eps(r,\eps),f_\delta(r,\eps))$ are privacy parameter values for advanced composition~\cite{DBLP:conf/focs/DworkRV10}  of $r$ $\eps$-DP computations.
\end{lemma}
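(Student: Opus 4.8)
The plan is to fix a pair of neighboring datasets $D^0,D^1$ and bound the privacy loss of the transcript of Algorithm~\ref{algo:targetcharge} by a \emph{coupling with hidden coins}, which reduces the entire interaction to an advanced composition over only the steps that genuinely leak. Concretely, I would process the interaction one step at a time: at step $i$ the pair $(\calA_i,\top_i)$ is determined by the released prefix $y_{<i}$, and Definition~\ref{def:qtarget} applied to $\calA_i$ on the fixed pair $D^0,D^1$ supplies a probability $p_i=p_i(y_{<i})$, a common distribution $\mathbf{C}_i$, and base distributions $\mathbf{B}^0_i,\mathbf{B}^1_i$ with $\calA_i(D^b)\equiv p_i\,\mathbf{C}_i+(1-p_i)\,\mathbf{B}^b_i$. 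I would then analyze the augmented mechanism $\widetilde\calM$ that at each step first samples and \emph{reveals} a coin $c_i\sim\Ber(1-p_i)$ and then outputs $y_i\sim\mathbf{C}_i$ when $c_i=0$ and $y_i\sim\mathbf{B}^b_i$ when $c_i=1$. Since $p_i$ is the same function of the prefix under either dataset, erasing the coins from $\widetilde\calM$'s transcript recovers exactly the distribution of Algorithm~\ref{algo:targetcharge}, so by post-processing it suffices to bound $\widetilde\calM$; the construction is symmetric in $0\leftrightarrow 1$, so it is enough to control the loss of $D^0$ relative to $D^1$.

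Next I would record the per-step privacy-loss structure of $\widetilde\calM$. Conditioned on the augmented prefix (outputs and coins), the revealed coin $c_i$ has the same law under $D^0$ and $D^1$ and leaks nothing; if $c_i=0$ then $y_i\sim\mathbf{C}_i$ leaks nothing; and if $c_i=1$ then $y_i\sim\mathbf{B}^b_i$ contributes a loss in $[-\eps,\eps]$ with conditional mean $O(\eps^2)$ by $(\eps,0)$-indistinguishability. Thus the total privacy loss is a sum of $N:=\#\{i:c_i=1\}$ bounded, near-mean-zero terms, which is exactly the supermartingale structure underlying advanced composition, and everything comes down to showing $N\lesssim\tau/q$ with very high probability. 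Here I would use the remaining clauses of Definition~\ref{def:qtarget}: whenever $c_i=1$, conditioned on the past, $y_i$ lands in $\top_i$ with probability at least $q$, and every such ``base hit'' ($c_i=1$ and $y_i\in\top_i$) is in particular a target hit; since Algorithm~\ref{algo:targetcharge} halts after $\tau$ target hits, at most $\tau$ base hits ever occur. Consequently, if $N>r$ then fewer than $\tau$ of the first $r$ base events are hits, and a Chernoff bound for sums of indicators that are conditionally $q$-biased given the past (via coupling with an i.i.d.\ $\Ber(q)$ sequence) gives probability $e^{-\Omega(\tau)}$ once $r$ is a suitable constant multiple of $\tau/q$; this estimate holds verbatim under both $D^0$ and $D^1$.

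To assemble the bounds I would truncate $\widetilde\calM$ at the $(r{+}1)$-st base event, obtaining $\widetilde\calM_r$ in which at most $r$ steps carry a nonzero (namely $(\eps,0)$-DP) contribution and all other steps are null; advanced composition~\cite{DBLP:conf/focs/DworkRV10} (in the form where the identity of the leaking steps is chosen adaptively from the revealed prefix, as provided by the standard privacy-filter argument) then makes $\widetilde\calM_r$ $(f_\eps(r,\eps),f_\delta(r,\eps))$-DP, while the crude deterministic bound $|L|\le N\eps\le r\eps$ gives the pure-DP parameter $r\eps$ with $\delta=0$. Since $\widetilde\calM$ and $\widetilde\calM_r$ differ only on the event $\{N>r\}$, which has probability $e^{-\Omega(\tau)}$ under both datasets, I would pass through the privacy-loss-tail characterization of $(\eps',\delta')$-DP so that this discrepancy enters as an \emph{additive} $e^{-\Omega(\tau)}$ term in $\delta$ rather than being inflated by an $e^{\eps'}$ factor. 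Taking $r=\Theta(\tau/q)$ then yields both claimed parameter pairs.

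The step I expect to be the main obstacle is making the concentration-plus-stopping argument fully rigorous under adaptivity: the length $N$ of the leaking sub-sequence is itself a stopping time of the adaptively chosen interaction, the decompositions $p_i,\mathbf{C}_i,\mathbf{B}^b_i$ depend on a random prefix whose law differs between the two datasets, and one must simultaneously (i) show $N\le r$ except with probability $e^{-\Omega(\tau)}$ and (ii) apply advanced composition to an adaptively determined number of $\eps$-DP steps. Pushing the constants so that the bound reads $\eps'\approx\frac{\tau}{q}\eps$ rather than a larger multiple forces the slack of $r$ over $\tau/q$ to shrink, which is harmless for utility but only degrades the constant hidden in the $e^{-O(\tau)}$ term.
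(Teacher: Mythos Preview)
Your proposal is correct and follows essentially the same approach as the paper: both fix neighboring datasets, use the $q$-target decomposition to introduce per-step coins separating a common component $\mathbf{C}_i$ from leaking components $\mathbf{B}^b_i$, bound the number of leaking steps via a Chernoff argument using Property~3 of Definition~\ref{def:qtarget}, and then apply (basic or advanced) composition to those steps. The paper packages the same argument through its simulation-with-data-holder framework (Lemma~\ref{lemma:intro-simulate}) and explicit failure-event accounting (Lemma~\ref{failureevent:lemma}), which directly handles the adaptivity concerns you flag at the end.
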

The proof is provided in Section~\ref{TC:sec} for a 
precise and more general statement that applies with approximate DP algorithms (in which case the $\delta$ privacy parameter values of all calls add up).  The proof idea is simple but surprisingly powerful: We compare the execution of Algorithm~\ref{algo:bwrapper} on two neighboring data sets $D^0,D^1$. Given a request $(\calA,\top)$, let $p,\mathbf{C},\mathbf{B},\mathbf{B}^0,\mathbf{B}^1$ be the decomposition of $\calA$ w.r.t. $D^0,D^1$ given by Definition~\ref{def:qtarget}. Then, running $\calA$ on $D^0,D^1$ can be implemented in the following equivalent way: we first flip a $p$-biased coin. With probability $p$, the algorithm samples from $\mathbf{C}$ and returns the result. Note that in this case, we do \emph{not} need to access $D^0,D^1$ at all! Otherwise, the algorithm needs to sample from $\mathbf{B}^0$ or $\mathbf{B}^1$, depending on whether the private data is $D^0$ or $D^1$. However, by Property~3 in Definition~\ref{def:qtarget}, there is a decent chance (e.g., with probability at least $q$) that Algorithm~\ref{algo:targetcharge} will ``notice'' the privacy-leaking computation by observing a result in the target set $\top$. If this indeed happens, the algorithm increments the counter. On average, each counter increment corresponds to $\frac{1}{q}$ many accesses to the private data. Finally, when $\tau$ is moderately large we apply a concentration inequality to bound the probability that the actual number of calls much exceeds its expectation of $\tau/q$.

The TCT analysis uses the number of target hits (multiplied by $1/q$) as a proxy for the actual privacy leak, with tail bounds applied to obtain high confidence bounds on the error. The multiplicative error decreases when the number $\tau$ of target hits is larger. In the regime $\tau > \ln(1/\delta)$,  we amortize the mentioned $O(\log(1/\delta))$ overhead of the natural paradigm and achieve  SVT-like bounds where each target hit results in privacy cost equivalent to $O(1/q)$ calls.
In the regime of very few target hits (e.g., few private tests or private selections), we still have to effectively ``pay'' for the larger $\tau =\Omega(\ln(1/\delta))$, but TCT still has some advantages over alternative approaches, due to its use of the natural paradigm and its applicability with general private algorithms. 

TCT can be extended to the case where algorithms have varied privacy parameter and target overhead values that may be adaptively chosen. 
A simple analysis can work with the smallest values encountered. With a tighter analysis, we can expect  
$\sum_i \eps_i/q_i$ to roughly replaces $\tau\eps/q$, but this requires calculation of tighter tail and composition bounds is more complex~\cite{KairouzOV:ICML2015,RogersRUV:NIPS2016} and does not have simple forms. These are useful (but technical) extensions that we leave for follow up work.

\begin{algorithm2e}[h]
\small{
    \caption{Target Charging}
    \label{algo:targetcharge}
    \DontPrintSemicolon
    \KwIn{
         Dataset $D = \{x_1,\dots, x_n\}\in X^n$. Integer $\tau\ge 1$ (Upper limit on the number of target hits). Fraction $q\in[0,1]$. 
    }
    $C \gets 0$\tcp*{Initialize target hit counter}
        \While(\tcp*[f]{Main loop}){$C< \tau$}{
            \textbf{Receive} $(\calA,\top)$ where $\calA$  is an $\eps$-DP mechanism, and $\top$ is a $q$-target for $\calA$ \; 
            $r\gets \calA(D)$\;
           \textbf{Publish} $r$ \;
           \lIf(\tcp*[f]{outcome is a target hit}){$r\in \top$}{$C\gets C+1$}
        } 
        }
\end{algorithm2e}

Despite its simplicity, TCT turns out to be surprisingly powerful due to the 
existence of natural targets with low overhead.  We present an expansive toolkit that is built on top of TCT and describe application scenarios.

\subsection{\texttt{NotPrior} targets} \label{notpriorintro:sec}

A \NotPrior target of an 
$\eps$-DP algorithm is specified by any outcome of our choice (the ``prior") that we  denote by $\bot$. The 
\NotPrior target is the set of all outcomes except $\bot$.  Surprisingly perhaps, 
this is an effective target (See Section~\ref{notprior:sec} for the proof that applies also with approximate-DP):
\begin{lemma} [Property of a \NotPrior target] \label{lemma:NotPriorprivacy}
Let $\calA : X \to \calY \cup \{ \bot\}$, where $\bot\not\in \calY$, be an $\eps$-DP
algorithm.  Then the set of outcomes $\calY$ constitutes an $\frac{1}{e^{\eps}+1}$-target for $\calA$.
\end{lemma}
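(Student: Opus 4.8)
The plan is to produce, for an arbitrary pair of neighboring datasets $D^0, D^1$, a decomposition of the required form from Definition~\ref{def:qtarget} with $\top = \calY$ and $q = \frac{1}{e^\eps+1}$. Write $a_b := \Pr[\calA(D^b) = \bot]$ for $b\in\{0,1\}$, and assume without loss of generality that $a_0 \le a_1$ (the three conditions of Definition~\ref{def:qtarget} are symmetric under swapping the two datasets together with $\mathbf{B}^0 \leftrightarrow \mathbf{B}^1$, so relabeling is free). The core idea is to take the common component $\mathbf{C}$ to be the point mass on $\bot$ and to route into it as much of the $\bot$-probability as possible while keeping the leftover components $\mathbf{B}^0,\mathbf{B}^1$ still $(\eps,0)$-indistinguishable. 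Note that the naive choice of pushing \emph{all} of $\min(a_0,a_1)=a_0$ into $\mathbf{C}$ fails: it would leave $\mathbf{B}^0$ with zero mass on $\bot$ but $\mathbf{B}^1$ with positive mass there, breaking indistinguishability on the event $\{\bot\}$. One must leave a balanced residual of $\bot$-mass in both.

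Concretely, I would set $t := (a_1-a_0)/(e^\eps-1)$ (with the convention $t:=0$ when $\eps=0$, where necessarily $a_0=a_1$), observe that $\eps$-DP applied to the event $\{\bot\}$ gives $a_1 \le e^\eps a_0$ and hence $t \le a_0$, and take $p := a_0 - t \in [0,1)$, with $\mathbf{C}$ the point mass on $\bot$. The components $\mathbf{B}^0,\mathbf{B}^1$ are then forced: on $\calY$ they agree, after scaling by $1/(1-p)$, with $\calA(D^0)$ and $\calA(D^1)$ respectively, and on $\{\bot\}$ they carry residual masses $t/(1-p)$ and $(a_1-a_0+t)/(1-p)$. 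I would then carry out the routine checks: (a) $\mathbf{B}^0,\mathbf{B}^1$ are bona fide probability distributions, using $1-p = 1-a_0+t$; (b) the mixture identities $\calA(D^b)\equiv p\mathbf{C}+(1-p)\mathbf{B}^b$ hold by construction; (c) $(\eps,0)$-indistinguishability of $\mathbf{B}^0,\mathbf{B}^1$, where on subsets of $\calY$ it is inherited directly from $\eps$-DP of $\calA$, while on the event $\{\bot\}$ the ratio of masses is exactly $(a_1-a_0+t)/t = e^\eps$ by the choice of $t$, and a general measurable set is handled by splitting it into its $\calY$-part and its $\{\bot\}$-part.

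For condition~3 I would compute $\Pr[\mathbf{B}^0\in\calY] = (1-a_0)/(1-a_0+t)$ and $\Pr[\mathbf{B}^1\in\calY] = (1-a_1)/(1-a_0+t)$; since $a_0\le a_1$, the latter is the minimum. Rewriting the common denominator as $(1-a_1) + (a_1-a_0)\,\frac{e^\eps}{e^\eps-1}$ and invoking $\eps$-DP on the event $\calY$ — namely $1-a_0 \le e^\eps(1-a_1)$, equivalently $a_1-a_0 \le (e^\eps-1)(1-a_1)$ — bounds the denominator by $(1-a_1)(1+e^\eps)$, so that $\min(\Pr[\mathbf{B}^0\in\calY],\Pr[\mathbf{B}^1\in\calY]) \ge \frac{1}{e^\eps+1}$, as claimed. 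The degenerate case $a_1=1$ (which forces $a_0=1$) is dispatched separately by taking $p=1$ and $\mathbf{B}^0=\mathbf{B}^1$ any point mass inside $\calY$.

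The only real content is the choice of residual $t = (a_1-a_0)/(e^\eps-1)$ — the smallest amount of $\bot$-mass one can leave behind while still having $\mathbf{B}^0,\mathbf{B}^1$ be $(\eps,0)$-indistinguishable on $\{\bot\}$ — together with the observation that the remaining $\calY$-mass is then pinned down by applying $\eps$-DP to the single event $\calY$; all the rest is bookkeeping. The point to be careful about is that the argument must hold in the worst case over neighboring pairs, so nothing about $a_0,a_1$ may be assumed beyond the two DP inequalities (on $\{\bot\}$ and on $\calY$), and indeed both are used.
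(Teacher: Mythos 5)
Your construction is correct and is essentially the paper's own proof: after relabeling (the paper's $\pi=1-a_1$, $\pi'=1-a_0$), your $p=a_0-t$ equals the paper's $p=1-\frac{\pi'e^{\eps}-\pi}{e^{\eps}-1}$, and your residual components $\mathbf{B}^0,\mathbf{B}^1$ (the conditionals on $\calY$ plus a balanced $\bot$-residual whose mass ratio is exactly $e^{\eps}$) coincide with those used in the paper's Lemma~\ref{lemma:limitednotprior}, with the same DP inequality on the event $\calY$ giving the bound $q\ge\frac{1}{e^{\eps}+1}$. The only cosmetic differences are that you parametrize by the $\bot$-probabilities and the residual $t$ instead of the $\calY$-probabilities, and that you explicitly dispatch the degenerate all-$\bot$ case, which the paper leaves implicit.
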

Note that for small $\eps$, we have $q$ approaching $1/2$ and thus the overhead factor is close to $2$. The TCT privacy analysis is beneficial over plain composition when the majority of all outcomes in our interaction match their prior $\bot$. 
We describe application scenarios for \NotPrior targets. For most of these scenarios, TCT is the only method we are aware of that provides the stated privacy guarantees in the general context. 

\paragraph{Private testing}
A private test is a private algorithm with a Boolean output.
By specifying our prior to be a negative outcome, we obtain an overhead of $2$ (for small $\eps$) for positive responses, which matches the overhead of SVT. TCT is the only method we are aware of that provides SVT-like guarantees with general private tests.

\paragraph{Pay-only-for-change} When we have a prior on the result of each computation and expect the results of most computations to agree with their respective prior, we set $\bot$ to be our prior.  We report all results but pay only for those that disagree with the prior. 
We describe some use cases where paying only for change can be very beneficial  (i) the priors are results of the same computations on an older dataset, so they are likely to remain the same (ii) In streaming or dynamic graph algorithms, the input is a sequence of updates where typically the number of changes to the output is much smaller than the number of updates.  Differential privacy was used to obtain algorithms that are robust to adaptive inputs~\cite{HassidimKMMS20,BKMNSS22} by private aggregation of non-robust copies. The pay-only-for-change allows for number of changes to output (instead of the much larger number of updates) that is quadratic in the number of copies.  Our result enables such gain with any private aggregation algorithm (that is not necessarily in the form of \AboveThreshold tests).


\subsection{Conditional Release}  We have a private algorithm $\calA : X \to \calY$ but are interested in the output $\calA(D)$ only when a certain condition holds (i.e., when the output is in  $\top \subseteq \calY$).
The condition may depend on the interaction transcript thus far (depend on prior computations and outputs). We expect most computations not to meet their release conditions and want to be ``charged'' only for the ones that do. Recall that with differential privacy, not reporting a result also leaks information on the dataset, so this is not straightforward. We define $A_\top := \CR(\calA,\top)$ as the operation that inputs a dataset $D$, computes
$y \gets \calA(D)$. If $y\in \top$, then publish $y$ and otherwise publish $\bot$.
We show that this operation can be analysed in TCT as a call with the 
algorithm and \NotPrior target pair $(\calA_{\top},\top)$, that is, a target hit occurs if and only if $y\in\top$:
\begin{lemma} [\CR privacy analysis] \label{CRprivacy:lemma}
$\calA_\top$ satisfies the privacy parameters of $\calA$ and $\top$ is a \NotPrior target of $\calA_\top$.
\end{lemma}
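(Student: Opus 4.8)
The plan is to split the statement into its two assertions, each of which reduces to a result already established. The first assertion --- that $\calA_\top$ satisfies the privacy parameters of $\calA$ --- is pure post-processing. Fix a symbol $\bot\notin\calY$ and define the deterministic, data-independent map $\phi:\calY\to\top\cup\{\bot\}$ by $\phi(y)=y$ when $y\in\top$ and $\phi(y)=\bot$ otherwise. Then, as randomized algorithms, $\calA_\top=\phi\circ\calA$ (sample $y\sim\calA(D)$, output $\phi(y)$), which is exactly the \CR construction. Since differential privacy is closed under post-processing, $\calA_\top$ inherits whatever privacy guarantee $\calA$ has; I would dispatch this in one line.

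For the second assertion I would first record the structural facts: the output distribution of $\calA_\top$ on every dataset is supported on $\top\cup\{\bot\}$ with $\bot\notin\top$, and $\calA_\top(D)\in\top$ holds exactly when $\calA(D)\in\top$. The latter justifies the sentence preceding the lemma --- inside Algorithm~\ref{algo:targetcharge} the event ``target hit'' for the request $(\calA_\top,\top)$ coincides with the release event $\calA(D)\in\top$. Now view $\calA_\top$ as an $\eps$-DP algorithm $X\to\top\cup\{\bot\}$ and apply Lemma~\ref{lemma:NotPriorprivacy} with the set $\top$ in the role of ``$\calY$'' there and $\bot$ as the distinguished prior: the lemma yields that $\top$ is an $\frac{1}{e^{\eps}+1}$-target of $\calA_\top$, which is precisely its \NotPrior target with prior $\bot$. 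That closes the argument.

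Since the lemma is essentially a repackaging, I do not expect a genuine technical obstacle; the only points needing care are bookkeeping. First, one must insist that $\bot$ lies outside $\top$ (and, so that the published transcript is unambiguous, outside $\calY$ as well), since ``publish $\bot$'' must be distinguishable from a legitimate release --- this disjointness is exactly the hypothesis under which Lemma~\ref{lemma:NotPriorprivacy} is stated. Second, for the approximate-DP version indicated in the surrounding text, one should invoke the $(\eps,\delta)$-strengthening of Lemma~\ref{lemma:NotPriorprivacy} from Section~\ref{notprior:sec} in place of the pure-DP statement quoted above; the post-processing half of the proof already carries over to $(\eps,\delta)$-DP verbatim, so nothing else changes.
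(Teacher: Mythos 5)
Your proposal is correct and follows essentially the same route as the paper: post-processing gives that $\calA_\top$ inherits the privacy parameters of $\calA$, and since $\calA_\top$ outputs values in $\top\cup\{\bot\}$, the set $\top$ is exactly a \NotPrior target (via Lemma~\ref{lemma:NotPriorprivacy}, or its approximate-DP version in Section~\ref{notprior:sec}) with prior $\bot$. Your write-up just makes the bookkeeping (the map $\phi$, the disjointness of $\bot$ from $\top$) explicit where the paper leaves it implicit.
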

\begin{proof}
$\calA_\top$ processes the output of the private algorithm $\calA$ and thus from post processing property is also private with the same privacy parameter values. 
Now note that $\top$ is a \NotPrior target of $\calA$, with respect to prior $\bot$.  
\end{proof}

We describe some example use-cases:
\begin{trivlist}
\item (i)
Private learning of models from the data (clustering, regression, average, ML model) but we are interested in the result only when its quality is sufficient, say above a specified threshold, or when some other conditions hold. 
\item (ii)
Greedy coverage or representative selection type applications, where we incur privacy cost only for selected items. To do so, we condition the release on the ``coverage'' of past responses. For example, when greedily selecting a subset of features that are most relevant or a subset of centers that bring most value.
\item (iii)
Approximate \AboveThreshold tests on Lipschitz functions, with release of above-threshold noisy values: As mentioned, SVT incurs additional privacy cost for the reporting whereas TCT (using \CR) does not, so TCT benefits in the regime of sufficiently many target hits.
\item (iv)
\AboveThreshold tests with sketch-based approximate distinct counts:
Distinct counting sketches~\cite{FlajoletMartin85,hyperloglog:2007,Concavesub:KDD2017} meet the privacy requirement by the built-in sketch randomness~\cite{SmithST:NeurIPS2020}. 
We apply
\CR and set $\top$ to be above threshold values.
In comparison, despite the function (distinct count) being 1-Lipschitz, the use of SVT for this task incurs higher overheads in utility (approximation quality) and privacy:
Even for the goal of just testing, a direct use of SVT treats the approximate value as the non-private input, which reduces accuracy due to the additional added noise.  Treating the reported value as a noisy Lipschitz still incurs accuracy loss due to the threshold noise, threshold noise introduces bias, and analysis is complicated by the response not following a particular noise distribution. For releasing values, SVT 
as a separate distinct-count sketch is needed to obtain an independent noisy value~\cite{DBLP:journals/pvldb/LyuSL17}, which increases both storage and privacy costs.

\end{trivlist}

\subsection{Conditional Release with Revisions} \label{condreleaseintro:sec}

We present an extension of Conditional Release that allows for followup \emph{revisions} of the target.
The initial \CR and the followup \RCR calls are described in Algorithm~\ref{algo:conditionalrelease}. 
The \CR call specifies a computation identifier $h$ for later reference, an algorithm and a target pair
$(\calA,\top)$.  It draws $r_h \sim \calA(D)$ and internally stores $r_h$ and a current target $\top_h\gets \top$. 
When $r_h\in \top$ then $r_h$ is published and a charge is made. Otherwise, $\bot$ is published. Each (followup) \RCR call specifies an identifier $h$ and a disjoint extension $\top'$ to its current target $\top_h$. If $r_h\in \top'$, then $r_h$ is published and a charge is made. Otherwise, $\bot$ is published. The stored current target for computation $h$ is augmented to include  $\top'$. Note that a target hit occurs at most once in a sequence of (initial and followup revise) calls and if and only if the result of the initial computation $r_h$ is in the final target $\top_h$.


\begin{algorithm2e}[h]
\small{
    \caption{Conditional Release and Revise Calls}
    \label{algo:conditionalrelease}
    \DontPrintSemicolon
    \tcp{Initial Conditional Release call: Analysed in TCT as a $(\eps,\delta)$-DP algorithm $\calA_{\top}$ and \NotPrior target $\top$}
    \SetKwProg{Fn}{Function}{:}{}
    \Fn(\tcp*[f]{unique identifier $h$, an $(\eps,\delta)$-DP algorithm $\calA\to \calY$, $\top\subset \calY$}){\CR{$h,\calA,\top$}}{ 
    $\top_h\gets \top$\tcp*{Current target for computation $h$}
    \textbf{TCT Charge} for $\delta$ \tcp*{If $\delta>0$, see Section~\ref{TC:sec}}
    $r_h\gets \calA(D)$\tcp*{Result for computation $h$}
      \eIf(\tcp*[f]{publish and charge only if outcome is in $\top_h$}){$r_h\in\top_h$}{ \textbf{Publish} $r_h$ \;
        \textbf{TCT Charge} for a \NotPrior target hit of an $\eps$-DP algorithm
      }
      {\textbf{Publish} $\bot$ \;
    }
 }    
\tcp{Revise call: Analysed in TCT as a $2\eps$-DP Algorithm $(\calA \mid \neg \top_h)_{\top'}$ and \NotPrior target $\top'$}
\Fn(\tcp*[f]{Revise target to include $\top'$}){\RCR{$h,\top'$}}{ 
\KwIn{An identifier $h$ of a prior \CR call, target extension $\top'$ where $\top'\cap  \top_h = \emptyset$}
  \eIf(\tcp*[f]{Result is in current target, publish and charge}){$r_h \in \top'$}
    {\textbf{Publish} $r_h$ \;
    \textbf{TCT Charge} for a \NotPrior target hit of an $2\eps$-DP algorithm
    }
    {\textbf{Publish} $\bot$ \;
    }
    $\top_h \gets \top_h \cup \top'$\tcp*{Update the target to include extension}
}
}
\end{algorithm2e}

We  show the following (Proof provided in Section~\ref{CondRelease:sec}):
\begin{lemma}[Privacy analysis for Algorithm~\ref{algo:conditionalrelease}]\label{lemma:analyze-revise}
Each \RCR call can be analysed in TCT as a call to a $2\eps$-DP algorithm with a \NotPrior target $\top'$. 
\end{lemma}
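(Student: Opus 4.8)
The plan is to exhibit, for each \RCR call, the explicit mechanism already named in Algorithm~\ref{algo:conditionalrelease}, prove it is $2\eps$-DP and that $\top'$ is a \NotPrior target of it, and then argue that within the overall interaction the \RCR call is distributionally identical to a generic TCT step (Algorithm~\ref{algo:targetcharge}) with that mechanism and target. Concretely, fix a \CR call with identifier $h$, $\eps$-DP algorithm $\calA$, and original target $\top$, and consider a later \RCR call for $h$ with extension $\top'$; write $\top_h$ for the target accumulated so far (the original $\top$ together with all previously appended extensions, so $\top'\cap\top_h=\emptyset$). Let $(\calA\mid\neg\top_h)$ denote $\calA$ conditioned on producing an output outside $\top_h$, and let $\calB:=(\calA\mid\neg\top_h)_{\top'}$ be its conditional-release version: sample $y\sim(\calA\mid\neg\top_h)(D)$, publish $y$ if $y\in\top'$, and publish $\bot$ otherwise.

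For the privacy of $\calB$ I would first record the standard fact that conditioning an $\eps$-DP mechanism on an output event costs a factor at most $e^{2\eps}$: for neighbouring $D^0,D^1$ with $E:=\calY\setminus\top_h$ of positive probability under both and any test set $S$, the numerator in $\Pr[\calA(D^0)\in S\mid E]=\Pr[\calA(D^0)\in S\cap E]/\Pr[\calA(D^0)\in E]$ is at most $e^{\eps}$ times the corresponding probability under $D^1$, while the denominator is at least $e^{-\eps}$ times its counterpart under $D^1$, giving an overall ratio of at most $e^{2\eps}$. Hence $(\calA\mid\neg\top_h)$ is $2\eps$-DP, and $\calB$ --- a post-processing of it via the map that fixes elements of $\top'$ and sends everything else to $\bot$ --- is $2\eps$-DP as well. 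Since $\calB$ outputs only elements of $\top'\cup\{\bot\}$ and $\bot\notin\top'$ (because $\top'\subseteq\calY$ and $\bot\notin\calY$), Lemma~\ref{lemma:NotPriorprivacy}, applied with its ``$\calY$'' instantiated as $\top'$ and its ``$\eps$'' as $2\eps$, immediately yields that $\top'$ is a $\frac{1}{e^{2\eps}+1}$-target of $\calB$, i.e.\ a \NotPrior target.

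The substantive step, which I expect to be the main obstacle, is justifying that the stored value $r_h$ used by the \RCR call may be treated --- for the purpose of the TCT accounting of the entire interaction --- as a fresh sample from $(\calA\mid\neg\top_h)(D)$. The key observation is that $r_h$ is drawn exactly once, at the \CR call, independently of everything else given $D$, and that between then and the present \RCR call the transcript reveals information about $r_h$ only through the outputs published by the calls referencing $h$; as long as no target hit for $h$ has occurred, each such call published $\bot$, and since the successive targets are nested with union $\top_h$, these publications reveal precisely the event $\{r_h\notin\top_h\}$ and nothing else. Consequently, conditioned on the transcript of the interaction up to this \RCR call, the law of $r_h$ is exactly $\calA(D)\mid\neg\top_h$, so the \RCR call --- publish $r_h$ and charge if $r_h\in\top'$, publish $\bot$ otherwise --- is exactly the action of running $\calB(D)$, publishing its output, and registering a \NotPrior target hit iff that output lies in $\top'$. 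Carrying this out rigorously requires phrasing it as a sequential deferred-randomness argument compatible with the adaptive choices of $\top'$ and of the other calls in the interaction, checking that it stays consistent with any later \RCR call for $h$ (after the present call the transcript reveals either $r_h$ itself --- in which case no further charge for $h$ is possible, as subsequent extensions are disjoint from $\top_h\cup\top'$ --- or the refined event $\{r_h\notin\top_h\cup\top'\}$), and disposing of the harmless edge case in which the conditioning event has probability zero under one of the two neighbours (then it is zero under both, contributes nothing to the comparison of transcripts, and the choice of conditional distribution is immaterial).

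With the equivalence established, the lemma follows by inspection: the \RCR call is precisely an invocation of the generic TCT step of Algorithm~\ref{algo:targetcharge} with the $2\eps$-DP algorithm $\calB$ and the \NotPrior target $\top'$, contributing at most one \NotPrior target hit of a $2\eps$-DP mechanism, so the meta privacy accounting of Lemma~\ref{metaprivacy:lemma} applies to the interaction verbatim.
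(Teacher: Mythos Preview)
Your proposal is correct and follows essentially the same approach as the paper: identify the \RCR call with the mechanism $(\calA\mid\neg\top_h)_{\top'}$, show it is $2\eps$-DP via the standard conditioning-costs-a-factor-$e^{2\eps}$ argument (the paper packages this as Lemma~\ref{lemma:conddouble}), invoke Lemma~\ref{lemma:NotPriorprivacy} for the \NotPrior target, and argue inductively that conditioned on the transcript so far the stored $r_h$ has law $\calA(D)\mid\neg\top_h$. Your treatment of the deferred-randomness step and the zero-probability edge case is in fact more careful than the paper's; the only thing the paper adds is an explicit handling of the $\delta>0$ case via the simulation decomposition, which lies outside the lemma as stated.
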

Thus, the privacy cost of conditional release followed by a sequence of revise calls is within a factor of 2 (due to the doubled privacy parameter on revise calls) of  a single \CR call made with the final target.

The revisions extension of conditional release facilitates our results for private selection, which are highlighted next.

\subsection{Private Top-$k$ Selection} \label{topkselectintro:sec}

One-shot top-$k$ selection is described in Algorithm~\ref{algo:top-k}: We call each algorithm once and report the $k$ responses with highest quality scores.
We establish the following:
\begin{lemma}[Privacy of One-Shot Top-$k$ Selection] \label{oneshotprivacy:lemma}
Consider one-shot  top-$k$ selection (Algorithm~\ref{algo:top-k}) on a dataset $D$ where $\{\calA_i\}$ are $(\eps,\delta_i)$-DP. 
This selection can be simulated exactly in TCT by a sequence of calls to $(2\eps,\delta)$-DP algorithms with \NotPrior targets that has $k$ target hits.

As a corollary, assuming $\varepsilon < 1$, Algorithm~\ref{algo:top-k} is $(O(\eps\sqrt{k\log(1/\delta)}),2^{-\Omega(k)}+\delta+\sum_i\delta_i)$-DP for every $\delta > 0$.
\end{lemma}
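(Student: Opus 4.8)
The plan is to prove the two halves of the statement in turn: first realize one‑shot top‑$k$ selection as a concrete sequence of \CR and \RCR calls of Algorithm~\ref{algo:conditionalrelease} with exactly $k$ target hits, and then feed the resulting parameters into the meta privacy bound of Lemma~\ref{metaprivacy:lemma}. \emph{The simulation.} For each candidate $i\in[m]$ open a computation $h_i$ with an initial call $\CR(h_i,\calA_i,\emptyset)$: this runs $\calA_i(D)$ exactly once, stores the outcome $r_i$ internally, publishes $\bot$, incurs no target hit, and costs only the $\delta_i$ of $\calA_i$. After this has been done for all $i$, the top‑$k$ outcomes are a deterministic post‑processing of the stored $(r_i)_i$, and it remains only to disclose them via \RCR calls. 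I would do this by a descending‑threshold ``peeling'': maintain a threshold initialized to $+\infty$ and, at each step, issue $\RCR(h_i,\top')$ to every not‑yet‑disclosed candidate $i$, where $\top'$ is the band of scores strictly between the current threshold and the region already covered by $\top_{h_i}$; lower the threshold, disclosing (and charging) a candidate exactly when its stored score first lands in such a band, and stop once $k$ candidates have been disclosed, breaking ties at the critical score level by a fixed rule (e.g.\ candidate index, by probing such candidates one at a time). The disclosed set is then exactly the top‑$k$; by Lemma~\ref{CRprivacy:lemma} each initial call is $(\eps,\delta_i)$‑DP with a \NotPrior target, by Lemma~\ref{lemma:analyze-revise} each \RCR call is $2\eps$‑DP with a \NotPrior target, and by the \RCR semantics each $h_i$ is charged at most once (a target hit occurs iff $r_i$ lies in its final target $\top_{h_i}$), so the whole interaction is a sequence of $(2\eps,\delta)$‑DP \NotPrior calls (the $\delta_i$'s of the initial draws adding up) with exactly $k$ target hits, whose transcript has the same law as Algorithm~\ref{algo:top-k}. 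This gives the first assertion.

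\emph{The corollary.} Now apply Lemma~\ref{metaprivacy:lemma} (advanced‑composition branch) to this interaction with $\tau=k$. The targets are \NotPrior targets of $2\eps$‑DP algorithms, so by Lemma~\ref{lemma:NotPriorprivacy} each has quality $q=\tfrac{1}{e^{2\eps}+1}$, and for $\eps<1$ this gives $1/q=e^{2\eps}+1=\Theta(1)$, hence $r\approx\tau/q=O(k)$. Advanced composition of $O(k)$ computations each $2\eps$‑DP yields privacy parameter $O\big(\eps\sqrt{k\log(1/\delta)}\big)$ at a freely chosen $\delta$, the concentration term of Lemma~\ref{metaprivacy:lemma} contributes $e^{-\Omega(\tau)}=2^{-\Omega(k)}$ to the failure probability, and the $\delta_i$'s of the approximate‑DP candidates accumulate additively; collecting the $\delta$ terms gives the claimed $\big(O(\eps\sqrt{k\log(1/\delta)}),\,2^{-\Omega(k)}+\delta+\sum_i\delta_i\big)$‑DP.

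\emph{Main obstacle.} The parameter bookkeeping in the corollary is routine; the delicate part is justifying ``simulated exactly''. I need the schedule of \RCR calls (which bands to probe, when to stop) to be a function of the \emph{published} transcript alone and not of the stored $r_i$'s --- the thresholds are read off from previously disclosed scores, so this should go through, but it must be spelled out; I need each peeling step to terminate and to grow the disclosed set in a controlled way, which requires the scores to be effectively discrete (or distinct with a positive minimum gap, which holds for finitely many candidates) together with a fixed tie‑break at the $k$‑th score so that the count lands on exactly $k$; and I need the exact equality in distribution between the simulated transcript and one‑shot top‑$k$, which rests on each $\calA_i$ being invoked exactly once and everything afterwards being data‑independent bookkeeping on the stored outcomes.
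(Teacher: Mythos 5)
Your proposal follows essentially the same route as the paper: realize the one-shot selection as an initial \CR call per candidate (empty/infinite-threshold target, so only the $\delta_i$'s are charged) followed by a descending-threshold sweep of \RCR calls, invoke Lemma~\ref{CRprivacy:lemma} and Lemma~\ref{lemma:analyze-revise} to view every disclosure step as a $2\eps$-DP call with a \NotPrior target hit, stop after $k$ hits, and then plug $\tau=k$, $q=\frac{1}{e^{2\eps}+1}$ into the TCT bound (Theorem~\ref{thm:TCprivacy}/Lemma~\ref{metaprivacy:lemma}) to get $(O(\eps\sqrt{k\log(1/\delta)}),\,2^{-\Omega(k)}+\delta+\sum_i\delta_i)$. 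That is exactly the paper's argument (Section~\ref{sec:selection}), and your parameter bookkeeping matches.

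The one place your sketch does not close the loop is the continuous-score case, which you flag as the ``main obstacle''. Your fallback --- that the realized scores are distinct with a positive minimum gap --- does not work as stated: the sweep granularity (which bands to probe next) may depend only on the \emph{published} transcript, not on the stored private scores, so you cannot calibrate the step size to the realized gap, and with a fixed grid two scores can land in the same band, breaking exactness. The paper resolves this differently: since the sweep is purely an analysis device, it fixes the two neighboring datasets $D^0,D^1$, builds a finite quantile grid $X$ from the \emph{distributions} of the scores under $D^0$ and $D^1$ (which is legitimate because privacy is a pairwise statement about those two fixed inputs), shows the gridded sweep reproduces one-shot top-$k$ up to $O(\eta)$ statistical distance (no point mass, WLOG after an infinitesimal perturbation), and takes $\eta\to 0$. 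Adding that discretization-and-limit step would make your argument complete.
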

To the best of our knowledge, our result is the first such bound for one-shot selection from general private candidates. For the case when the only computation performed on $D$ is a single top-$1$ selection, we match the ``bad example'' in~\cite{LiuT19-private-select} (see Theorem~\ref{thm:vanilla-selection}).
In the regime where $k > \log(1/\delta)$ our bounds generalize those specific to Lipschitz functions in~\cite{DBLP:conf/nips/DurfeeR19,DBLP:conf/icml/QiaoSZ21} (see Section~\ref{sec:selection}). 
Importantly, Lemma~\ref{oneshotprivacy:lemma}
allows for a unified privacy analysis of  
interactive computations that are interleaved with one-shot selections. We obtain 
$O(1)$ overhead per target hit when there are 
$\Omega(\log(1/\delta))$ hits in total.

\begin{algorithm2e}[h]
\small{
    \caption{One-Shot Top-$k$ Selection}
    \label{algo:top-k}
    \DontPrintSemicolon
    \KwIn{
        A dataset $D$. Candidate algorithms $\calA_1,\dots, \calA_m$. Parameter $k\le m$.
    }
    $S \gets \emptyset$ \;
    \For{$i=1,\dots, m$}{
        $(y_i,s_i)\gets \calA_i(D)$ \;
        $S\gets S \cup \{(i,y_i,s_i)\}$
    }
\KwRet{$L\gets$ the top-$k$ triplets from $S$,  by decreasing $s_i$}
}
\end{algorithm2e}

The proofs of Lemma~\ref{oneshotprivacy:lemma} and implications to selection tasks are provided in Section~\ref{sec:selection}. The proof utilizes Conditional Release with revisions (Section~\ref{condreleaseintro:sec}).

\subsubsection{Selection using Conditional Release} \label{appselectionintro:sec}
We analyse private selection tasks using conditional release (see Section~\ref{sec:selection} for details). First note that 
\CR calls (without revising) suffice for \emph{one-shot above-threshold} selection (release all results with quality score that exceeds a pre-specified threshold $t$), with target hits only on what was released: We simply specify the release condition to be $s_i > t$. What is missing in order to implement one-shot top-$k$ selection is an ability to find the ``right'' threshold (a value $t$ so that exactly $k$ candidates have quality scores above $t$), while incurring only $k$ target hits. 
The revise calls provide the functionality of lowering the threshold of previous conditional release calls (lowering the threshold  amounts to augmenting  the  target). This functionality allows us to simulate a sweep of the $m$ results of the batch in the order of decreasing quality scores.  We can stop the sweep when a certain condition is met (the condition must be based on the prefix of the ordered sequence that we viewed so far) and we incur target hits only for the prefix.
To simulate a sweep, we run a high threshold $t$ conditional release of all $m$ candidates and then incrementally lower the threshold $t\gets t-dt$ using sets of $m$ revise calls (one call per candidate).  The released results are in decreasing order of quality scores. 
The one-shot top-$k$ selection (Algorithm~\ref{algo:top-k}) is simulated exactly by such a sweep that stops after $k$ scores are released.  Hence, the same privacy analysis holds and Lemma~\ref{oneshotprivacy:lemma} follows.
We emphasize that the sweeping simulation is only for analysis. The implementation is described in Algorithm~\ref{algo:top-k}.

As mentioned, with this approach we can apply \emph{any stopping condition that depends on the prefix}. This allows us to use data-dependent selection criteria.
One natural such criteria (instead of using a rigid value of $k$) is to choose $k$ when there is a large gap in the quality scores, that the $(k+1)$st quality score is much lower than the $k$th score~\cite{pmlr-v151-zhu22e}.  This criterion can be implemented using  a  one-shot algorithm and  analyzed in the same way using an equivalent sweep.   Data-dependent criteria are also commonly used in applications such as clustering (choose ``the right'' number of clusters according to gap in clustering cost) and greedy selection of representatives.





\subsection{Best of multiple targets} \label{bestmultiintro:sec}

\emph{Multi-target} charging, 
described  in Algorithm~\ref{algo:ktargetcharge}, is a simple but useful extension of Algorithm~\ref{algo:targetcharge} (that is ``single target'').
With $k$-TCT, queries have the form $\big(\calA,(\top_i)_{i\in[k]}\big)$  where $\top_i$ for $i\in[k]$ are $q$-targets (we allow targets to overlap). The algorithm maintains $k$ counters $(C_i)_{i\in[k]}$. For each query, for each $i$, we increment $C_i$ if $r\in \top_i$.   We halt when 
$\min_i C_i = \tau$.  

The multi-target extension allows us to flexibly reduce the total privacy cost to that of the ``best'' among $k$ target indices {\em in retrospect} (the one that is hit the least number of times).  Interestingly, this extension is almost free in terms of privacy cost: The number of targets  $k$ only multiplies the $\delta$ privacy parameter (see Section~\ref{multiTCT:sec} for the proof):
\begin{lemma}[Privacy of multi-TCT]\label{lemma:kTCprivacy}
Algorithm~\ref{algo:ktargetcharge} satisfies $(\eps',k\delta')$-approximate DP bounds, where $(\eps',\delta')$ are privacy bounds for single-target charging (Algorithm~\ref{algo:targetcharge}).
\end{lemma}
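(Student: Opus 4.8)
The plan is to reduce the $k$-target analysis to the single-target analysis via a union bound over the $k$ counters. Recall that Algorithm~\ref{algo:ktargetcharge} halts when $\min_i C_i = \tau$, i.e., when \emph{every} counter has reached $\tau$. The key observation is that the halting event is controlled by any single counter reaching $\tau$: for each fixed $i\in[k]$, consider the ``$i$-th projection'' of the execution, where we ignore all counters except $C_i$ and pretend the algorithm halts as soon as $C_i = \tau$. This projected execution is exactly a run of single-target charging (Algorithm~\ref{algo:targetcharge}) with target parameter $\tau$ applied to the same sequence of $\eps$-DP algorithms and the targets $(\top_i)$ extracted from each query. Since the true $k$-TCT execution halts no earlier than every projection would (it waits for the \emph{last} counter), the transcript produced by $k$-TCT is a prefix-extension issue that needs care — but the cleaner route is to instead bound the number of rounds.

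First I would fix neighboring datasets $D^0, D^1$ and let $T$ denote the (random) number of rounds executed by Algorithm~\ref{algo:ktargetcharge}. The heart of the single-target proof is a concentration argument showing that, conditioned on the adaptively chosen decompositions, the number of rounds until $\tau$ target hits is $\tau/q$ in expectation with an exponential tail, so that $T$ exceeds some threshold $r \approx \tau/q$ only with probability $e^{-\Omega(\tau)}$; on the complementary event, the privacy loss over all rounds is controlled by advanced composition over $r$ copies of an $\eps$-DP mechanism. For multi-TCT, I would apply this same tail bound to \emph{each} counter $C_i$ separately: for each fixed $i$, the event ``$C_i$ has not reached $\tau$ after $r$ rounds'' has probability $e^{-\Omega(\tau)}$ by the identical argument (the evolution of $C_i$ depends only on whether $r \in \top_i$, and $\top_i$ is a $q$-target). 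Hence the event ``$T > r$'' is contained in $\bigcup_{i\in[k]}\{C_i < \tau \text{ after } r \text{ rounds}\}$, which by a union bound has probability at most $k \cdot e^{-\Omega(\tau)}$.

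The remaining step is to package this into the $(\eps', k\delta')$ bound. On the event $\{T \le r\}$, the transcript is a function of at most $r$ invocations of $\eps$-DP (or $(\eps,\delta)$-DP) mechanisms chosen adaptively, together with the bookkeeping of which outcomes landed in which $\top_i$ — and the latter is post-processing of the published outcomes. So conditioned on $\{T \le r\}$, the privacy loss is bounded by the same advanced-composition expression $(\,f_\eps(r,\eps),\,f_\delta(r,\eps)\,)$ used for single-target charging. Absorbing the failure probability $k\cdot e^{-\Omega(\tau)}$ into the $\delta$ slack, and noting that for approximate-DP mechanisms the per-call $\delta_i$ terms add up exactly as before (they are unaffected by how many counters we track), we get the parameters $(\eps', k\delta')$ with $\eps' = f_\eps(r,\eps)$ and $\delta' = f_\delta(r,\eps) + e^{-\Omega(\tau)} + \sum \delta_i$, i.e., the single-target parameters with $\delta'$ replaced by $k\delta'$.

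The main obstacle I anticipate is bookkeeping around the \emph{adaptive} choice of targets and the fact that the decompositions in Definition~\ref{def:qtarget} are per-neighboring-pair and chosen after seeing the query: one must be careful that the ``project onto counter $i$'' coupling is legitimate, i.e., that the same coin flips and the same mixture decompositions can be reused across all $k$ projections simultaneously, so that the union bound is over events on a single probability space rather than over incompatible couplings. Concretely, I would set up one global coupling of the two executions (shared $p$-biased coins and shared samples from $\mathbf{C}$ in each round, as in the single-target proof sketch), observe that under this coupling each counter $C_i$ individually stochastically dominates the ``target-hit count'' process analyzed in the single-target proof, and only then apply the tail bound and union bound. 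Once the coupling is fixed globally, everything else is a routine reprise of the single-target argument.
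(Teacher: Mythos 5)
There is a genuine gap, and it sits exactly where you anticipated trouble. First, your reduction mischaracterizes the single-target analysis: in the proof of Theorem~\ref{thm:TCprivacy} the concentration bound is on the number of \emph{data-holder calls} (rounds in which the simulator samples from the $\mathbf{B}^b$ component of the decomposition in Definition~\ref{def:qtarget}), not on the number of rounds $T$. The number of rounds is not bounded at all --- rounds resolved by the common component $\mathbf{C}$ are free and can be arbitrarily many --- and your key claim that ``$C_i$ has not reached $\tau$ after $r\approx\tau/q$ rounds'' has probability $e^{-\Omega(\tau)}$ is false: the $q$-target property lower-bounds $\Pr[\mathbf{B}^b\in\top_i]$, i.e.\ the hit probability \emph{conditioned} on a data-holder call, while the unconditional per-round hit probability can be arbitrarily small (take $p$ close to $1$ with $\mathbf{C}$ supported outside $\top_i$). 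Consequently conditioning on $\{T\le r\}$ and composing over $r$ rounds neither holds nor would it even recover the single-target bound. Second, even if you recast the argument in terms of data-holder calls, the per-counter tail bound plus union bound needs, in each round, a \emph{single} decomposition $(p,\mathbf{C},\mathbf{B}^0,\mathbf{B}^1)$ under which every $\top_1,\dots,\top_k$ simultaneously satisfies $\Pr[\mathbf{B}^b\in\top_i]\ge q$. Definition~\ref{def:qtarget} only guarantees a decomposition \emph{per target}, and different targets may be certified by incompatible decompositions; so the ``global coupling'' under which each $C_i$ stochastically dominates the single-target hit process is not available, and the domination claim is exactly the unproved step. (There is also the smaller, fixable issue that conditioning on the final event $\{T\le r\}$ should be replaced by an online failure designation as in Lemma~\ref{failureevent:lemma}.)

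For contrast, the paper avoids re-opening the concentration argument altogether and uses the single-target guarantee as a black box at the level of output transcripts: any halting transcript of Algorithm~\ref{algo:ktargetcharge} is, for at least one index $i$, exactly a halting transcript of the single-target algorithm $\calM_i$ run with targets $(\top_i^j)_j$ (the counter that reaches $\tau$ last has exactly $\tau$ hits, the last of which is the final outcome). Partitioning any event $H$ into pieces $H_i$ consisting of transcripts valid for $i$, one has $\Pr[\calM(D)\in H_i]=\Pr[\calM_i(D)\in H_i]$ since the two algorithms differ only in the halting rule, and applying the $(\eps',\delta')$ bound of each $\calM_i$ to $H_i$ and summing over the $k$ parts yields $e^{\eps'}\Pr[\calM(D')\in H]+k\delta'$. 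If you want to salvage your route, you would need to first prove a ``simultaneous decomposition'' lemma for $k$ targets; as it stands, the proposal does not establish the lemma.
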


Specifically, when we expect that one (index) of multiple outcomes
$\bot_1,\ldots, \bot_k$ will dominate our interaction but can not specify which one it is in advance, we can use $k$-TCT with \NotPrior targets with priors $\bot_1,\ldots,\bot_k$.  From
Lemma~\ref{lemma:kTCprivacy}, the overall privacy cost depends on the number of times that the reported output is different than the most dominant outcome.  
More specifically, for private testing, 
when we expect that one type of outcome would dominate the sequence but we do not know if it is $0$ or $1$, we can apply $2$-TCT. The total number of target hits corresponds to the less dominant outcome. The total number of privacy charges (on average) is at most (approximately for small $\eps$) double that, and therefore 
is always comparable or better to composition (can be vastly lower when there is a dominant outcome). 

\subsection{\texttt{BetweenThresholds} in TCT} \label{betweenintro:sec}

The \BetweenThresholds classifier is a refinement of the \AboveThreshold test. \BetweenThresholds reports if the noisy Lipschitz value is below, between, or above two thresholds $t_l<t_r$. \BetweenThresholds was analysed in \cite{BunSU:SODA2017} in the SVT framework (using noisy thresholds) and it was shown that the overall privacy costs may only depend on the ``between" outcomes. Their analysis required that  $t_r-t_l\geq (12/\eps)(\log(10/\eps)+\log(1/\delta)+1)$. 
We consider the ``natural'' private \BetweenThresholds classifier that compares  the value with added $\Lap(1/\eps)$ noise to the thresholds. We show 
(see Section~\ref{qvotbetween:sec})
that the 
``between'' outcome is a target with $q \geq (1-e^{-(t_r-t_l)\eps})\cdot \frac{1}{e^{\eps}+1}$.
Note that the $q$-value is smaller by a factor of 
$(1-e^{-(t_r-t_l)\eps})$ compared  with
\NotPrior targets. Therefore, there is smooth degradation in the effectiveness of the between outcome as the target as the gap $t_r-t_l$ decreases, and matching  \AboveThreshold when the gap is large. Also note that we require much smaller gaps $t_r-t_l$ compared with~\cite{BunSU:SODA2017}, also asymptotically ($O(\log(1/\eps))$ factor improvement).  Our result brings the use of \BetweenThresholds into the practical regime.

Taking a step back, we compare an \AboveThreshold test with a threshold $t$ with a \BetweenThresholds classifier with $t_l = t -1/\eps$ and $t_r = t+1/\eps$.
Counter-intuitively perhaps, despite \BetweenThresholds being {\em more informative} than \AboveThreshold, as it provides more granular information on the value, its privacy cost is {\em lower} for queries where values are either well above or well below the thresholds (since target hits are unlikely also when queries are well above the threshold). Somehow, the addition of a third outcome to the test allowed for finer privacy analysis! A natural question that arises is whether we can extend this benefit more generally -- inject a ``boundary outcome'' when our private algorithm does not have one, to tighten the privacy analysis. We introduce next a method that achieves this goal.

\subsection{The Boundary Wrapper method} \label{boundarywrapperintro:sec}

 When the algorithm is a tester or a classifier, the result is most meaningful when one outcome dominates the distribution $\calA(D)$.  Moreover, when performing a 
sequence of tests or classification tasks we might expect most queries to have high confidence labels (e.g., \cite{DBLP:conf/iclr/PapernotSMRTE18,BassilyTT:NEURIPS2018}). 
Our hope then is to incur privacy cost that depends only on the ``uncertainty,'' captured by the probability of non-dominant outcomes.

Recall that when we have for each computation a good prior on which outcome is most likely, this goal can be achieved using \NotPrior targets (Section~\ref{notpriorintro:sec}). When we expect the whole sequence to be dominated by one type of outcome, even when we don't know which one it is, this goal can be achieved via \NotPrior with multiple targets (Section~\ref{bestmultiintro:sec}).  But these approaches do not apply when a dominant outcome exists in most computations, but we have no handle on it and it can change arbitrarily between computations in the same sequence. 

For a private test $\calA$, can we somehow choose a moving target {\em per computation} to be the value with the smaller probability 
$\arg\min_{b\in\{0,1\}} \Pr[\calA(D) = b]$?  More generally, with a private classifier, can we somehow choose the target to be all outcomes except for the most likely one?  

Our proposed {\em boundary wrapper}, described in Algorithm~\ref{algo:bwrapper}, is a mechanism that achieves that goal.
The privacy wrapper $\calW$ takes any private algorithm $\calA$, such as a tester or a classifier, and wraps it to obtain algorithm $\calW(\calA)$. The wrapped algorithm has its outcome set augmented to include one \emph{boundary} outcome $\top$ that is designed to be a $q$-target. The wrapper returns $\top$ with some probability that depends on the distribution
of $\calA(D)$ and otherwise returns a sample from $\calA(D)$ (that is, the output we would get when directly applying $\calA$ to $D$).
We then analyse the wrapped algorithm in TCT. 

Note that the probability of the wrapper $\calA$ returning $\top$ is at most $1/3$ and is roughly proportional to the probability of sampling  an outcome other than the most likely from $\calA(D)$. When there is no dominant outcome the $\top$ probability tops at $1/3$.  Also note that
a dominant outcome (has probability $p\in [1/2,1]$ in $\calA(D)$) has probability $p/(2-p)$ to be reported.  This is at least $1/3$ when $p=1/2$ and is close to $1$ when $p$ is close to $1$. 
For the special case of $\calA$ being a private test, there is always a dominant outcome.

A wrapped \AboveThreshold test provides the benefit of \BetweenThresholds discussed in Section~\ref{betweenintro:sec} where we do not pay privacy cost for values that are far from the threshold (on either side). Note that this is achieved in a mechanical way without having to explicitly introduce two thresholds around the given one and defining a different algorithm.

\begin{algorithm2e}[h]
\small{
    \caption{Boundary Wrapper}
    \label{algo:bwrapper}
    \DontPrintSemicolon
    \KwIn{
         Dataset $D = \{x_1,\dots, x_n\}\in X^n$, a private algorithm $\calA$ 
    }
  $r^* \gets \arg\max_{r}\Pr[\calA(D) = r]$  \tcp*{The most likely outcome of $\calA(D)$} 
   $\pi(D) \gets 1 - \Pr[\calA(D) = r^*]$ \tcp*{Probability that $\calA$ does not return the most likely outcome}
   $c \sim \Ber(\min\left\{\frac{1}{3}, \frac{\pi}{1+\pi}\right\})$\tcp*{Coin toss for boundary}
    \leIf(\tcp*[f]{return boundary or value}){$c=1$}{\textbf{Return} $\top$}{\textbf{Return} $\calA(D)$}
    }
\end{algorithm2e}

We establish the following (proofs provided in Section~\ref{sec:boundarywrapper}).  
The wrapped algorithm is nearly as private as the original algorithm:
\begin{lemma} [Privacy of a wrapped algorithm] \label{wrapperprivacy:lemma}
If $\calA$ is $\eps$-DP then Algorithm~\ref{algo:bwrapper} applied to $\calA$ is $t(\eps)$-DP where
$t(\eps)\leq \frac{4}{3}\eps$.
\end{lemma}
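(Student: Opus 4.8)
I would argue by directly computing the output distribution of $\calW(\calA)$ and bounding its multiplicative sensitivity to a neighboring change. Write $\pi(D)=1-\Pr[\calA(D)=r^*]$ as in Algorithm~\ref{algo:bwrapper} and $\rho(D)=\min\{1/3,\ \pi(D)/(1+\pi(D))\}$, so that $\calW(\calA)(D)$ outputs $\top$ with probability $\rho(D)$ and each $y\in\calY$ with probability $(1-\rho(D))\Pr[\calA(D)=y]$. The first step is a change of variables: setting $\tilde\pi(D):=\min\{1/2,\pi(D)\}$ and using that $\pi/(1+\pi)=1/3\iff\pi=1/2$, one checks $\rho(D)=\tilde\pi(D)/(1+\tilde\pi(D))$ and $1-\rho(D)=1/(1+\tilde\pi(D))$ in all regimes. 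Hence $\calW(\calA)(D)$ is exactly the normalization $\mu_D/(1+\tilde\pi(D))$ of the sub-measure $\mu_D$ on $\calY\cup\{\top\}$ given by $\mu_D(y)=\Pr[\calA(D)=y]$ for $y\in\calY$ and $\mu_D(\top)=\tilde\pi(D)$, which has total mass $1+\tilde\pi(D)$.

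The crux is that $\pi$ is itself stable under a neighboring change. I claim that for neighboring $D^0,D^1$ we have $e^{-\eps}\pi(D^1)\le\pi(D^0)\le e^{\eps}\pi(D^1)$. Indeed, let $r^*_1$ be a mode of $\calA(D^1)$; since $r^*_0$ is a mode of $\calA(D^0)$, $\pi(D^0)=\Pr[\calA(D^0)\neq r^*_0]\le\Pr[\calA(D^0)\neq r^*_1]$, and applying the $\eps$-DP guarantee of $\calA$ to the event ``output $\neq r^*_1$'' bounds this by $e^{\eps}\Pr[\calA(D^1)\neq r^*_1]=e^{\eps}\pi(D^1)$; the reverse inequality is symmetric. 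Since truncation to $[0,1/2]$ preserves multiplicative closeness (if $a\le e^{\eps}b$ then $\min\{1/2,a\}\le e^{\eps}\min\{1/2,b\}$), the same bound holds for $\tilde\pi$: $\tilde\pi(D^0)/\tilde\pi(D^1)\in[e^{-\eps},e^{\eps}]$. Combining the $\eps$-DP of $\calA$ on subsets of $\calY$ with this bound on the $\top$-mass, and the mediant inequality for events straddling $\top$, every event $E\subseteq\calY\cup\{\top\}$ satisfies $\mu_{D^0}(E)/\mu_{D^1}(E)\in[e^{-\eps},e^{\eps}]$ (degenerate cases — a non-unique mode, or zero-probability outcomes, where pure DP forces matching supports — make both sides vanish and are handled trivially).

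It remains to pay for the differing normalizations. For any event $E$,
\[
\frac{\Pr[\calW(\calA)(D^0)\in E]}{\Pr[\calW(\calA)(D^1)\in E]}
=\frac{1+\tilde\pi(D^1)}{1+\tilde\pi(D^0)}\cdot\frac{\mu_{D^0}(E)}{\mu_{D^1}(E)},
\]
so it suffices to show $\tfrac{1+\tilde\pi(D^1)}{1+\tilde\pi(D^0)}\le e^{\eps/3}$ (the reverse bound being symmetric), which yields an overall ratio of at most $e^{\eps}\cdot e^{\eps/3}=e^{4\eps/3}$. Using $\tilde\pi(D^0)\ge e^{-\eps}\tilde\pi(D^1)$, the left-hand side is at most $\tfrac{1+\tilde\pi(D^1)}{1+e^{-\eps}\tilde\pi(D^1)}$, which is nondecreasing in $\tilde\pi(D^1)\in[0,1/2]$ and hence maximized at $\tilde\pi(D^1)=1/2$, where it equals $\tfrac{3}{2+e^{-\eps}}$. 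Finally, $\tfrac{3}{2+e^{-\eps}}\le e^{\eps/3}$ is equivalent to $2e^{\eps/3}+e^{-2\eps/3}\ge3$, which holds because the left-hand side is a convex function of $\eps$ that, together with its derivative, equals its target value at $\eps=0$.

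The main obstacle I anticipate is not a single hard step but careful bookkeeping: checking that the change of variables to $\tilde\pi$ and the sub-measure $\mu_D$ is exact when $\pi(D)\ge1/2$ for one or both datasets, that the mediant argument correctly covers events containing $\top$, and that the degenerate cases above do not break the ratio bounds. The one genuinely computational ingredient is the elementary inequality $\tfrac{3}{2+e^{-\eps}}\le e^{\eps/3}$, and this is exactly what pins down the constant $4/3$ (the true bound is $t(\eps)=\eps+\ln\tfrac{3}{2+e^{-\eps}}$, which is strictly below $\tfrac{4}{3}\eps$ for $\eps>0$).
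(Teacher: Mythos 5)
Your proof is correct, and it takes a genuinely different route from the paper's. The paper proves the lemma first for private tests by a direct case analysis: it writes out the likelihood ratios of the three outcomes (dominant, non-dominant, $\top$) in the several regimes of how the dominant outcome can shift between neighboring datasets, and verifies five separate max-ratio inequalities over ranges of $\pi$, then extends to general classifiers by remarking that the same arguments cover sets of outcomes and the no-dominant-outcome cases. You instead handle a general $\eps$-DP algorithm $\calA$ in one pass: the key observations are (i) the reparametrization $\tilde\pi=\min\{1/2,\pi\}$, under which $\calW(\calA)(D)$ is exactly the normalization of the sub-measure $\mu_D$ that agrees with $\calA(D)$ on $\calY$ and puts mass $\tilde\pi(D)$ on $\top$; (ii) the fact that $\pi$ itself is $e^{\eps}$-multiplicatively stable across neighbors (via comparing against the other dataset's mode), so every event's $\mu$-mass ratio is within $e^{\eps}$ by the mediant inequality; and (iii) the normalization ratio $\frac{1+\tilde\pi(D^1)}{1+\tilde\pi(D^0)}$ is at most $\frac{3}{2+e^{-\eps}}\le e^{\eps/3}$, which is the single computation that pins down the constant $4/3$. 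Your argument is shorter, covers the general (non-binary, possibly no-dominant-outcome) case directly rather than by the paper's extension remark, and cleanly separates the two sources of privacy loss; the paper's case analysis, on the other hand, exhibits which transitions cost only $\eps$ versus $\frac{4}{3}\eps$ and runs parallel to its subsequent $q$-value computation for the boundary target. One small caveat: your parenthetical that the ``true bound'' is $t(\eps)=\eps+\ln\frac{3}{2+e^{-\eps}}$ should be read as ``the bound this argument yields''; you have not shown the two factors are simultaneously attained, but exact tightness is not needed for the lemma.
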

The $q$ value of the boundary target of a wrapped algorithm is as follows:
\begin{lemma} [$q$-value of the boundary target] \label{boundaryq:lemma}
The outcome $\top$ of a boundary wrapper (Algorithm~\ref{algo:bwrapper}) of an $\eps$-DP algorithm is 
a $\frac{e^{t(\eps) - 1}}{2(e^{\eps+t(\eps)} - 1)}$-target.
\end{lemma}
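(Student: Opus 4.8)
The plan is to verify Definition~\ref{def:qtarget} directly for the boundary wrapper $\calW(\calA)$, with $\top$ the distinguished boundary outcome and $\eps$ replaced by $t(\eps)$ (since by Lemma~\ref{wrapperprivacy:lemma} the wrapped algorithm is $t(\eps)$-DP, so the target $q$-value is taken relative to the privacy parameter $t(\eps)$). Fix neighboring data sets $D^0,D^1$. The key structural observation is that on input $D^b$, the wrapped algorithm returns $\top$ with probability $\pi_b/(1+\pi_b)$ truncated at $1/3$, where $\pi_b = 1-\max_r\Pr[\calA(D^b)=r]$, and otherwise returns a sample from $\calA(D^b)$. So I would write $\calW(\calA)(D^b)$ as a mixture: a "$\mathbf{C}$" part that is shared between the two inputs, and a residual "$\mathbf{B}^b$" part that carries all the dependence on $b$ and must be shown to hit $\top$ with probability at least $q$ and to be $(t(\eps),0)$-indistinguishable across $b$.

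\emph{Building the decomposition.} I would set $p = 1 - \max_b \alpha_b$ where $\alpha_b := \min\{1/3,\pi_b/(1+\pi_b)\}$ is the $\top$-probability on $D^b$ --- or, more carefully, take $\mathbf{C}$ to be whatever the two output distributions have in common. The cleanest route: let $\mathbf{C}$ be the largest common component of $\calW(\calA)(D^0)$ and $\calW(\calA)(D^1)$, i.e. $p\cdot\mathbf{C} = \min(\calW(\calA)(D^0),\calW(\calA)(D^1))$ pointwise, and let $\mathbf{B}^b$ be the normalized residual. Property~1 of Definition~\ref{def:qtarget} is then automatic and $p$ is the total statistical overlap. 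Property~2 follows because $\mathbf{B}^0,\mathbf{B}^1$ are obtained from $(t(\eps),0)$-indistinguishable distributions by removing a \emph{common} sub-distribution and renormalizing, which cannot decrease indistinguishability (this is a standard fact; removing $p\mathbf{C}$ from both sides and dividing by $1-p$ preserves the multiplicative $e^{t(\eps)}$ bound). The real content is Property~3: lower-bounding $\min_b \Pr[\mathbf{B}^b \in \top]$.

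\emph{Bounding the residual mass on $\top$ (the main obstacle).} This is where the stated $q = \frac{e^{t(\eps)}-1}{2(e^{\eps+t(\eps)}-1)}$ must come out. The point is that the $\top$-mass sits almost entirely in the residual, because $\top$ is never produced by the "sample from $\calA(D^b)$" branch (it is a fresh outcome not in $\calY$). So $\Pr[\mathbf{B}^b\in\top] = \frac{\alpha_b - p\Pr[\mathbf{C}\in\top]}{1-p}$, and I need to show this is $\ge q$. Two regimes: if some input has a very dominant outcome, then that input has small $\alpha_b$ but also small $\pi_b$, and the other input's $\pi$ can differ by only an $e^\eps$ factor in the relevant ratios (neighboring, $\eps$-DP $\calA$), so $\alpha_0,\alpha_1$ are comparable; if neither is dominant, both $\alpha_b$ are at/near $1/3$. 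In all cases I expect $\alpha_b \ge$ (something like) $\frac{\alpha_{1-b}}{e^\eps}\cdot(\text{const})$, and $p\Pr[\mathbf{C}\in\top]\le \min_b\alpha_b$, so the residual retains a constant fraction of the minority $\top$-mass; chasing the constants through $\min\{1/3,\pi/(1+\pi)\}$ and the $e^\eps$-closeness of the $\pi_b$'s, together with $1-p \le$ (the larger $\alpha$, roughly), should land exactly on $\frac{e^{t(\eps)}-1}{2(e^{\eps+t(\eps)}-1)}$. The delicate part is handling the truncation at $1/3$ and the case where $\pi_0,\pi_1$ straddle the truncation threshold; I would treat "both untruncated", "both truncated", and "one of each" separately, and in each case use that the \emph{argmax} outcomes $r^*$ for $D^0$ and $D^1$ need not coincide, so I should bound $\pi_b$ in terms of $\pi_{1-b}$ using $\eps$-DP applied to the event $\{\calA = r^*_0\}$ and $\{\calA=r^*_1\}$ simultaneously. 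I expect this two-sided argmax bookkeeping to be the one genuinely fiddly step; everything else is mixture algebra and the post-removal indistinguishability fact.
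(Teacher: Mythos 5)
There is a genuine gap, and it sits exactly at the step you call ``automatic.'' Your proposed decomposition takes $p\cdot\mathbf{C}$ to be the pointwise minimum of the two wrapped output distributions and lets $\mathbf{B}^b$ be the normalized residuals. But those residuals have disjoint supports: for every outcome $y$, at most one of the two residuals puts mass on $y$, so unless the two distributions coincide, $\mathbf{B}^0$ and $\mathbf{B}^1$ are mutually singular and are not $(t(\eps),0)$-indistinguishable --- Property~2 of Definition~\ref{def:qtarget} fails completely. The ``standard fact'' you invoke (removing a common sub-distribution from both sides and renormalizing preserves the multiplicative bound) is false; for example $P=(0.5,0.5)$ and $Q=(0.4,0.6)$ are $\ln(1.25)$-indistinguishable, yet removing the common part $(0.4,0.5)$ leaves, after normalization, the point masses $(1,0)$ and $(0,1)$. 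The entire content of the definition is that $\mathbf{C}$ must be chosen \emph{strictly smaller} than the maximal overlap so that the residuals retain enough shared mass to stay indistinguishable; the tension between making $\mathbf{C}$ large (so that $\top$ is a large fraction of $\mathbf{B}^b$) and keeping $\mathbf{B}^0\approx_{t(\eps)}\mathbf{B}^1$ is precisely where the constant $\frac{e^{t(\eps)}-1}{2(e^{\eps+t(\eps)}-1)}$ comes from, and your sketch never confronts it --- the subsequent ``chase the constants'' step has nothing concrete to chase once the decomposition is broken.

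For comparison, the paper's proof (first for tests, Lemma~\ref{testboundaryq:lemma}) takes $\mathbf{C}$ to be a point mass on the dominant outcome and puts into $\mathbf{B}^b$ all of the $\top$ mass and all of the minority-outcome mass, \emph{plus} a calibrated slice of the dominant-outcome mass: with $\Delta=\frac{2\pi'}{1+\pi'}-\frac{2\pi}{1+\pi}$, the slice is $\Delta\frac{e^{t(\eps)}}{e^{t(\eps)}-1}$ on one side and $\Delta\frac{1}{e^{t(\eps)}-1}$ on the other, so the two paddings differ by exactly the factor $e^{t(\eps)}$ and $\mathbf{B}^0\approx_{t(\eps)}\mathbf{B}^1$ holds by construction. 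Then $q\ge\frac{\pi/(1+\pi)}{1-p}$ with $1-p=\frac{2}{e^{t(\eps)}-1}\bigl(e^{t(\eps)}\frac{\pi'}{1+\pi'}-\frac{\pi}{1+\pi}\bigr)$, and the inequality $\frac{\pi'}{1+\pi'}\le e^{\eps}\frac{\pi}{1+\pi}$ --- this is where the $\eps$-DP of the \emph{unwrapped} $\calA$ enters, separately from $t(\eps)$ --- yields the stated $q$. The classification case (differing argmaxes, or no dominant outcome) is handled by the same construction applied to the set of non-dominant outcomes, with $\mathbf{C}$ allowed to be empty; your worry about two-sided argmax bookkeeping is legitimate but is absorbed by that case analysis, not by any maximal-overlap argument. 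To repair your proposal you would have to abandon the pointwise-minimum decomposition and introduce an explicit, indistinguishability-preserving split of the dominant-outcome mass of this kind.
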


For small $\eps$ we obtain $q\approx t(\eps)/(2(\eps +t(\eps))$.
Substituting $t(\eps) = \frac{4}{3}\eps$ we obtain
$q\approx \frac{2}{7}$.
Since the target $\top$ has probability at most $1/3$, this is a small loss of efficiency ($1/6$ factor overhead) compared with composition in the worst case when there are no dominant outcomes.

The Boundary wrapper method can be viewed as a light-weight way to do privacy analysis that pays only for the ``uncertainty'' of the response distribution $\calA(D)$.  
There are more elaborate (and often more complex computationally) methods based on smooth sensitivity (the stability of $\calA(D)$ to changes in $D$)~\cite{NissimRS:STOC2007,DworkLei:STOC2009,pmlr-v30-Guha13}. 


\paragraph*{Probability oracle vs. Blackbox access}
The boundary-wrapper method assumes that the probability of the most dominant outcome in the distribution $\calA(D)$, when it is large enough, is available to the wrapper.  For some algorithms, these values are readily available, for example, the
Exponential Mechanism~\cite{DBLP:conf/focs/McSherryT07} or when applying known noise distributions for \AboveThreshold, \BetweenThresholds, and Report-Noise-Max~\cite{DBLP:conf/nips/DurfeeR19}. In principle, the probability can always be computed (without incurring privacy cost) but sometimes this can be 
inefficient. We propose in  Section~\ref{nooraclewrap:sec} a boundary-wrapping method that only uses blackbox sampling access to the distribution $\calA(D)$. 

At a very high level, we show that one can run an $(\eps,0)$-DP algorithm $\calA$ twice and observe both outcomes. Then, denote by $\calY$ the range of the algorithm $\calA$. We can show that $E=\{ (y, y') : y\ne y' \}\subseteq \calY\times \calY$ is an $\Omega(1)$-target of this procedure. That is, if the analyst observes the same outcome twice, she learns the outcome ``for free''. If the two outcomes are different, the analyst pays $O(\eps)$ of privacy budget, but she will be able to access both outcomes, which is potentially more informative than a single execution of the algorithm.

\subsubsection{Applications to Private Learning using Non-privacy-preserving Models} \label{privatelearningintro:sec}

Promising recent approaches to achieve scalable private learning through training non-private models include
Private Aggregation of Teacher Ensembles (PATE)~\cite{PapernotAEGT:ICLR2017,DBLP:conf/iclr/PapernotSMRTE18}
and Model-Agnostic private learning~\cite{BassilyTT:NEURIPS2018}.

The private dataset $D$ is partitioned into $k$ parts
$D=D_1\sqcup \dots \sqcup D_k$ and a model is trained (non-privately) on each part.  For multi-class classification with $c$ labels, the trained models can be viewed as functions
$\{f_i:\calX\to [c]\}_{i\in[k]}$. Note that changing one sample in $D$ can only change the training set of one of the models. To privately label an example $x$ drawn from a public distribution, we compute the predictions of all the models $\{f_i(x)\}_{i\in[k]}$ and consider the counts
$n_j = \sum_{i\in [k]} \mathbf{1}\{f_i(x)=j\}$ (the number of models that gave label $j$ to example $x$) for $j\in[c]$.
We then privately aggregate to obtain a privacy-preserving label, for example using the Exponential Mechanism~\cite{DBLP:conf/focs/McSherryT07} or Report-Noisy-Max~\cite{DBLP:conf/nips/DurfeeR19,DBLP:conf/icml/QiaoSZ21}.

This setup is used to process queries (label examples) until the privacy budget is exceeded.  In PATE, the new privately-labeled examples are used to train a new \emph{student} model (and $\{f_i\}$ are called \emph{teacher} models). 
In these applications, tight privacy analysis is critical. Composition over all queries is too lossy -- for $O(1)$ privacy, only allows for $O(k^2)$ queries. For tighter analysis, we seek to replace this with $O(k^2)$ ``target hits.'' These works used a combination of methods including SVT, smooth sensitivity, distance-to-instability, and propose-test-release~\cite{DworkLei:STOC2009,pmlr-v30-Guha13}.
We show that the TCT toolkit provides streamlined tighter analysis:
\begin{trivlist}
\item (i)
The works of~\cite{BassilyTT:NEURIPS2018,DBLP:conf/iclr/PapernotSMRTE18} pointed out that if the teacher models are sufficiently accurate, we expect high agreement $n_j \gg k/2$ for the ground truth label $j$ on most queries. These high-agreement examples are also the more useful ones for training the student model.  Moreover, agreement implies stability and the fine-grained privacy cost (when accounted through the mentioned methods) is lower.
We propose the following method that exploits the stability of queries with agreements: Apply the boundary wrapper (Algorithm~\ref{algo:bwrapper}) on top of the Exponential Mechanism. Then use $\top$ as our target. Agreement queries, where
$\max_j n_j \gg k/2$ (or more finely, when $h = \arg\max_j n_j$  and $n_h \gg \max_{j\in [k]\setminus\{h\}} n_j$) are very unlikely to result in target hits.
\item (ii)
If we expect most queries to be either high agreement $\max_j n_j \gg k/2$ or low agreement $\max_j n_j \ll k/2$  and would like to avoid privacy charges also with very low agreement, we can 
apply \AboveThreshold test to $\max_j n_j$. If above, we apply the exponential mechanism. Otherwise, we report ``Low.'' The wrapper applied to the combined algorithm returns a label in $[c]$, ``Low,'' or $\top$.
Note that ``Low'' is a dominant outcome with no-agreement queries (where the actual label is not useful anyway) and a class label in $[c]$ is a dominant outcome with high agreement.  We only pay privacy for weak agreements.
\item (iii)
\cite{DBLP:conf/iclr/PapernotSMRTE18} proposed the use of example selection with PATE, suggesting that examples where the current student model agrees with teachers are less helpful (and thus should not be selected for training to avoid privacy cost). Our proposed use of the wrapper reduces privacy cost in case of any teacher agreement (whether or not the student agrees). We can enhance that: When we are at a stage in the training where most students predictions agree with the teacher, we can use the student prediction as a prior (using \NotPrior targets) to avoid privacy charges when there is agreement (and even still use the training example if we wish).
\end{trivlist}
\ignore{
\edith{*************}

We observe that the boundary wrapper method offers an intuitive yet efficient scheme for aggregating teachers' responses. We focus our discussion on classification queries with $c$ labels. Recall that to implement PATE, one splits the private data $D$ into $k$ chunks $D=D_1\sqcup \dots \sqcup D_k$. Then, one uses a non-private algorithm to train $k$ teacher models with each chunk of the data. Let $f_1,\dots, f_k:\calX\to [c]$ be the $k$ trained model. Note that changing one input sample can only change one of the $k$ teacher models. 

Next, one wants to design a method to aggregate the $k$ teachers, to either train a (private) student model or (privately) answer new queries. Let us focus on the task of answering queries. For a new query $x\in \calX$, each teacher $f_i$ votes for a label $f_i(x)\in [c]$. Then, naturally, one might want to aggregate teachers' responses by a selection algorithm such as the Exponential Mechanism or Report-Noisy-Max. However, the issue with this approach is that we need to ``pay'' the privacy cost for each query. Hence, to ensure a $O(1)$-privacy, the number of queries one can answer is at most $O(k^2)$.

If the teacher models are accurate enough, then one would hope that for most queries $x$, most of the teachers would vote for the same ``label'' (i.e., the ground truth), and we do not want to pay the privacy loss for such queries. There were various methods proposed to capture this requirement (based on R\'enyi DP analysis or the distance-to-instability paradigm \xin{TODO:cite}). Using TCT, we propose a new aggregation scheme to exploit the stability of predictions. Namely, we can just apply the Algorithm~\ref{algo:bwrapper} on top of the Exponential Mechanism. Then, for ``stable'' queries, we essentially do not pay any privacy cost.

This compares favorably with the distance-to-instability approach (\xin{cite} Agnostic learning), in that our algorithm is simpler and uses fewer samples. In the context of VC-learning, we improve their sample complexity by a $\log(1/\delta)$ factor. It is not clear how TCT compares with the method based on R\'enyi DP (\xin{cite}). It is an interesting question to compare with their approach, both theoretically and empirically.

\edith{*************}
}

\subsection{SVT with individual privacy charging} \label{SVTindividualintro:sec}

As a direct application of TCT privacy analysis, we obtain an improved sparse vector technique that supports fine-grained privacy charging for each item in the dataset. 


SVT with individual privacy charging was introduced by
Kaplan et al \cite{DBLP:conf/colt/KaplanMS21}.  The input is a dataset $D\in\calX^n$ and an online sequence of linear queries that are specified by predicate and threshold value
pairs $(f_i,T_i)$. For each query, the algorithms reports noisy \AboveThreshold test results
$\sum_{x\in D}f_i(x) \gtrsim T$. Compared with the standard SVT, which halts after reporting $\tau$ positive responses, SVT with fine-grained charging maintains a separate budget counter $C_x$ for each item $x$. For each query with a positive response, the 
algorithm only charges items that \emph{contribute} to this query (namely, all the $x$'s such that $f_i(x) = 1$). Once an item $x$ contributes to $\tau$ meaningful queries (that is, $C_x=\tau$), it is removed from the data set. This fine-grained privacy charging allows one to obtain better utility with the same privacy budget, as demonstrated by several recent works \cite{DBLP:conf/colt/KaplanMS21,CLNSSS:ICML2022}.

Our improved SVT with individual charging is described in Algorithm~\ref{algo:svt-individual}. 
We establish the following privacy guarantee (see Section~\ref{SVTindividual:sec} for details):
\begin{theorem} [Privacy of Algorithm~\ref{algo:svt-individual}]\label{SVTindividual:thm}
Assume $\eps < 1$. Algorithm~\ref{algo:svt-individual} is $(O(\sqrt{\tau \log(1/\delta)}\eps, 2^{-\Omega(\tau)} + \delta)$-DP for every $\delta \in (0, 1)$.
\end{theorem}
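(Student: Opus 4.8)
The plan is to show that, for any pair of neighboring datasets $D^0,D^1$, the execution of Algorithm~\ref{algo:svt-individual} can be simulated exactly by a run of the Target Charging algorithm (Algorithm~\ref{algo:targetcharge}) with $q=\frac{1}{e^\eps+1}$ that halts after $\tau$ target hits, after which the two executions are identically distributed. The theorem then follows by plugging these parameters into the meta privacy bound (Lemma~\ref{metaprivacy:lemma}) and its precise version from Section~\ref{TC:sec}.

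Fix neighboring $D^0,D^1$ differing in a single item $x$ (reduce to the add/remove convention, say $D^1=D^0\cup\{x\}$; the substitution case is handled by the same argument, comparing both runs against the common sub-dataset). Partition the online query stream according to whether $f_i(x)=1$. If $f_i(x)=0$, the count $\sum_{y}f_i(y)$ -- and hence the reported noisy \AboveThreshold bit -- is the same whether or not $x$ is present, so the query is pure post-processing of the remaining data: in the mixture of Definition~\ref{def:qtarget} we may take $p=1$, and the query neither touches $C_x$ nor contributes to the privacy loss. If $f_i(x)=1$, the sensitivity of $\sum_y f_i(y)$ with respect to $x$ is $1$, so reporting the freshly-noised (with $\Lap(1/\eps)$) \AboveThreshold bit is an $\eps$-DP mechanism; designating its negative answer as the prior $\bot$, Lemma~\ref{lemma:NotPriorprivacy} tells us $\{\text{positive}\}$ is a $\frac{1}{e^\eps+1}$-target of this mechanism. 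Crucially, Algorithm~\ref{algo:svt-individual} increments $C_x$ on exactly these queries and exactly on a positive response -- that is, on a target hit -- and evicts $x$ from the dataset once $C_x$ reaches $\tau$; after that eviction the two runs are driven by the same data and same fresh randomness, hence are identically distributed. So, viewed through the lens of item $x$, the interaction is a faithful instance of Algorithm~\ref{algo:targetcharge} with $q=\frac{1}{e^\eps+1}$ and at most $\tau$ target hits.

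Now invoke the target-charging bound. Since $\eps<1$ we have $1/q=e^\eps+1=\Theta(1)$, so the expected number of $\eps$-DP invocations is $\tau/q=\Theta(\tau)$, and by the concentration step in Lemma~\ref{metaprivacy:lemma} the actual number exceeds some $r=O(\tau)$ with probability at most $e^{-\Omega(\tau)}=2^{-\Omega(\tau)}$. Conditioned on that not happening, advanced composition~\cite{DBLP:conf/focs/DworkRV10} over $r=O(\tau)$ $\eps$-DP computations gives, for every $\delta\in(0,1)$, a parameter $\eps'=O(\eps\sqrt{r\log(1/\delta)})=O(\eps\sqrt{\tau\log(1/\delta)})$ with additive slack $\delta$ (the quadratic correction $r\eps(e^\eps-1)=O(\tau\eps^2)$ is lower order relative to $\eps'$ in the range where the bound is non-trivial). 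Folding in the $2^{-\Omega(\tau)}$ tail gives $(O(\eps\sqrt{\tau\log(1/\delta)}),\,2^{-\Omega(\tau)}+\delta)$-indistinguishability of the transcripts on $D^0$ and $D^1$, and since the neighbors were arbitrary this is the claimed DP guarantee.

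The main obstacle is the bookkeeping in the reduction rather than any single inequality: one must check that Algorithm~\ref{algo:svt-individual} genuinely instantiates the interactive Target Charging model with respect to the differing item -- in particular that (a) the adaptively chosen pairs $(f_i,T_i)$ depend only on the published transcript, so the TCT analysis applies; (b) the $f_i(x)=0$ queries are literally independent of $x$ given the rest of $D$ and so can be absorbed as post-processing with no charge; and (c) the coupling after $x$ is evicted is exact, which is immediate in the add/remove model but in the substitution model requires running both executions against $D^0\cap D^1$ and verifying the two affected counters are evicted consistently. A secondary point is to confirm that the algorithm reports only the thresholded bit with fresh per-query noise (and no shared threshold noise), since this is exactly what makes each $f_i(x)=1$ query a stand-alone $\eps$-DP mechanism to which Lemma~\ref{lemma:NotPriorprivacy} applies, and is the design change responsible for the improvement over prior SVT-with-individual-charging analyses.
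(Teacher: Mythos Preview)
Your approach is essentially the same as the paper's: fix neighboring datasets differing in $x$, observe that queries with $f_i(x)=0$ are identical on both and that the counters $C_{x'}$ for $x'\neq x$ are post-processing of the transcript, reduce the remaining queries to TCT with \NotPrior targets, note that $C_x$ tracks exactly the target hits and eviction caps them at $\tau$, then invoke Theorem~\ref{thm:TCprivacy}.

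One minor correction: Algorithm~\ref{algo:svt-individual} publishes the noisy value $\hat f_i$ on a positive response, not merely the thresholded bit. So the per-query mechanism you should feed into Lemma~\ref{lemma:NotPriorprivacy} is the $\eps$-DP Laplace output $\hat f_i$ itself, with \CR\ target $\{\hat f_i\ge T_i\}$ and prior $\bot$---this is exactly the paper's phrasing (``\CR\ applied with above threshold responses''), and releasing the value for free is in fact one of the advertised advantages of the algorithm. Your argument goes through verbatim with this substitution.
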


Compared with the prior work \cite{DBLP:conf/colt/KaplanMS21}: Our algorithm uses the ``natural'' approach of adding Laplace noise and comparing, i.e., computing $\hat{f}_i = \left(\sum_{x\in D}f_i(x)\right) + \Lap(1/\eps)$ and 
testing whether $\hat{f}_i\ge T$, whereas \cite{DBLP:conf/colt/KaplanMS21} 
adds two independent Laplace noises. We support publishing the 
approximate sum $\hat{f}_i$ for
``Above-Threshold'' without incurring additional privacy costs.
Moreover, our analysis is significantly simpler (few lines instead of several pages) and for
the same privacy budget, we improve the utility (i.e., the additive error) by a $\log(1/\varepsilon)\sqrt{\log(1/\delta)}$ factor.  Importantly, our improvement aligns the bounds of SVT with individual privacy charging with those of standard SVT, bringing the former into the practical regime.



\begin{algorithm2e}[h]
\small{
    \caption{SVT with Individual Privacy Charging}
    \label{algo:svt-individual}
    \DontPrintSemicolon
    \KwIn{
        Private data set $D\in \mathcal{X}^n$; privacy budget $\tau > 0$; Privacy parameter $\eps>0$.
    }
    \ForEach{$x\in D$}{
        $C_x\gets 0$ \tcp*{Initialize a counter for item $x$} 
    }
    \For(\tcp*[f]{Receive queries}){$i=1,2,\dots, $}{
        \textbf{Receive} a predicate $f_i:\mathcal{X}\to [0,1]$ and threshold $T_i\in \mathbb{R}$ \;
        $\hat{f}_i \gets \left(\sum_{x\in D}f_i(x)\right) + \Lap(1/\eps) $ \tcp*{Add Laplace noise to count}
        \If(\tcp*[f]{Compare with threshold}){$\hat{f}_i \ge T_i$}{
            \textbf{Publish} $\hat{f}_i$ \;
            \ForEach{$x\in D$ such that $f(x) > 0$}{
                $C_x\gets C_x + 1$ \;
                \If{$C_x = \tau$}{
                    Remove $x$ from $D$ \;
                }
            }
        }
        \Else{
            \textbf{Publish} $\perp$\;
        }
    }
    }
\end{algorithm2e}

\subsection*{Conclusion}
We introduced the Target Charging Technique (TCT), a versatile unified privacy analysis framework that is particularly suitable when a sensitive dataset is accessed multiple times via differentially private algorithms.  We provide a toolkit that is suitable for multiple natural scenarios, demonstrate significant improvement over prior work for basic tasks such as private testing and one-shot selection, describe use cases, and list challenges for followup works. TCT is simple with low overhead and we hope will be adopted in practice.
\ignore{
As said, to the best of our knowledge, the only context in which tighter-than-composition privacy analysis of interactive accesses was previously studied 
was the Sparse Vector Technique (SVT) and derivatives.

 SVT was introduced as a tool in several works
\cite{DNRRV:STOC2009,DBLP:conf/stoc/RothR10,DBLP:conf/focs/HardtR10}, specifically for performing an interactive sequence of 
\AboveThreshold tests.  Each test is specified by a
$1$-Lipschitz function $f$, a threshold $t$, and a privacy parameter $\eps$.
We test whether $f(D) \gtrsim t$ by testing
$f(D) + \Lap(1/\eps) \ge t$ for a random Laplace noise $\Lap(1/\eps)$.
If the inequality holds, then \texttt{AboveThreshold} returns $\top$ and otherwise it returns $\bot$. Note that \texttt{AboveThreshold} is $\eps$-DP.  
\begin{equation} \label{eq:abovethreshold}
\calM(D) := f(D)+ \Lap(1/\eps) \ge t = \Ber(\Pr[f(D)+ \Lap(1/\eps) \ge t]) .
\end{equation}
A sequence of \AboveThreshold tests is performed until a predetermined number $\tau\geq 1$ of $\top$ outcomes are reported. 

This formulation was used in~\cite{DBLP:conf/focs/HardtR10} 
with analysis based on using some outcomes as a proxy for the total privacy cost. Essentially using  a concept related to {\em targets}.  
Subsequent works we are aware of, considered a modified algorithm where noise is added to the threshold as well. The threshold noise can be updated only after reporting a positive response.  This noisy threshold approach facilitated the textbook sparse vector privacy analysis: the queries that are below the threshold were accounted for by using the threshold noise (in a way that does not depend on their number) whereas the queries that are above threshold are accounted for individually. The  privacy analysis yields a $(2\tau\eps)$-pure DP bound and also tighter bounds compared with those provided in~\cite{DBLP:conf/focs/HardtR10}.

As mentioned earlier, the use of noisy threshold is a departure from the natural paradigm of applying private tests and reporting results. It had several advantages: It allowed for pure DP bounds that are not possible with the natural paradigm. And it allowed stopping after one positive response with no additional overhead which is not possible otherwise even with approximate DP bounds (The arguments for the limitations of the simple paradigm are similar to those given for private selection in~\cite{LiuT19-private-select}). The disadvantages are the departure from the natural paradigm in a way that is specialized to Lipschitz functions, the introduced dependencies between reported results, less efficient advanced composition, and additional privacy cost for reporting values.
Specifically, when the same noisy threshold is used for the full sequence~\cite{DBLP:journals/pvldb/LyuSL17} then advanced composition is not directly applicable. When the noisy threshold is reset after each $\top$ report~\cite{DBLP:conf/stoc/RothR10}  then we can apply advanced composition but separately incur privacy cost on each stretch of $\bot$ responses), which can nearly double the privacy cost. \edith{Check!! -  is it that the application of advanced composition to this HR analysis is more complex with worse bounds? Can we claim that we improve AboveThreshold for advanced composition? }
The use of noisy thresholds means that reporting actual noisy values together with a $\top$ output may result in  privacy loss~\cite{DBLP:journals/pvldb/LyuSL17} (due to leaking information on the noisy threshold). Reporting different noisy values means additional privacy cost and in applications where the noise is integrated in the private algorithm, the need to run another copy of the algorithm and decreases efficiency.
Arguably the biggest disadvantage of the noisy threshold analysis is that it is very specialized to threshold tests on noisy Lipschitz functions.  The only extension we are aware of is to the very related \BetweenThresholds test~\cite{BunSU:SODA2017}.  The key to obtaining general SVT-like results was retreating from noisy thresholds to the natural concept of target charging that appeared in some form in the early work of~\cite{DBLP:conf/focs/HardtR10} and building from there. 

Some prior works proposed ways to circumvent the ``lower bounds'' of the natural paradigm by changing the goals and the algorithms.
Private testing with general differentially private predicates was considered in \cite{LiuT19-private-select} and also in~\cite{CLNSS:ITCS2023} but in ways that departed from simple applications of the tests.
The work of \cite{LiuT19-private-select} had the goal of 
returning a positive answer only when 
the probability $p$ of a  positive response by the private test is very close to 1.  This was achieved via 
a reduction to SVT.  
For that they assumed a {\em percentile oracle}, that provides the probability $p$ of a $1$ response. They observed that 
$\log(p/(1-p)$ are $2\eps$-Lipschitz when the test is $\eps$-private, and applied \AboveThreshold. Note that by adding $\Lap(1/\eps)$ noise to the log ratio, we effectively need to use a threshold that applies only for $p$ that is extremely close to $1$, that is, $1-p \approx 2^{-1/\eps}$.
For the case where a percentile oracle is not available, they  proposed an approximate DP computation with a somewhat less efficient bound.  
To summarize, their treatment of private testing departs from the paradigm of applying the tests and faithfully reporting responses. Effectively, they  report positive responses with a different (and much lower) probability than the original test. 

\edith{TODO edit the below for better clarity}
The private test in~\cite{CLNSS:ITCS2023} randomly returns true outcome or $\bot$. It requires multiple calls to the private algorithm until an actual outcome is observed.  Privacy charge depends on the probability of an actual outcome that is $\top$. They obtain
pure DP bounds (for pure DP predicates) and analysis also holds with a small number of $\top$ responses.  

A similar compromise was made for private selection in~\cite{LiuT19-private-select,DBLP:conf/iclr/Papernot022,CLNSS:ITCS2023} in order to circumvent the ``lower bounds'' of the natural algorithms (of applying all $m$ private algorithms and choosing the output with top score).  In essence, different algorithms were used that make multiple computations with each private algorithm and provide guarantees with respect to quantile quality scores.  In a sense, our approach does not compromise on the algorithm (the natural paradigm) but we circumvent the lower bounds by allowing for a larger number of target hits.  


} 
\ignore{
Related work now integrated in intro. 
Convenient related works links:

\BetweenThresholds via SVT \url{https://arxiv.org/pdf/1604.04618.pdf}

\url{https://arxiv.org/pdf/1603.01699.pdf}  

We cite the following earlier.  Not sure if there is a point elaborating more here.
The work on model agnostic private learning. 
\url{https://papers.nips.cc/paper/2018/hash/aa97d584861474f4097cf13ccb5325da-Abstract.html}
Explicit generation of a private model via sample and aggregate copies of a non-private model.  Use of sparse vector on linear queries (aggregate over models).  We can work with any model/analytics that gives private responses, does not have to be a sample and aggregate.  Are there any examples of such?

``distance to instablity'' framework:  Here the Lipschitz function is the distance of the instance from ``stability".  We start with a non private $f$ and consider the Lipschitz function that is its distance from stability (the number of data elements that need to chaמge for the output to change).  We then simply apply sparse vector and do query release.

Liu Talwar 
\url{https://arxiv.org/pdf/1811.07971.pdf}.

Our ITCS
\url{https://arxiv.org/pdf/2211.12063.pdf}

Other related (not cited yet, maybe we should):
\url{https://papers.nips.cc/paper/2020/file/e9bf14a419d77534105016f5ec122d62-Paper.pdf}

}

\newpage
\edith{
\section*{TODOs?}

\begin{itemize}
\item PATE (or Model Agnostic) improvement -- put details section~\ref{privatelearning:sec}.
\end{itemize}
}

 \bibliographystyle{alpha}
\bibliography{mybib}

\newpage
\appendix
\onecolumn

\section{Preliminaries}

\paragraph*{Notation.}  We say that a function $f$ over datasets is $t$-Lipschitz if for any two neighboring datasest $D^0$, $D^1$, it holds that $|f(D^1)-f(D^0)|\leq t$. For two reals $a,b\ge 0$ and $\eps>0$, we write $a\approx_\eps b$ if $e^{-\eps} b \le a \le e^\eps b$.  

For two random variables $X^0,X^1$, we say that they are $\eps$-indistinguishable, denoted $X^0\approx_\eps X^1$, if their max-divergence and symmetric counterpart are both at most $\eps$. That is,  for $b\in \{0,1\}$, $\max_{S\subseteq\supp(X^b)} \ln\left[\frac{\Pr[X^b\in S]}{\Pr[X^{1-b}\in S]} \right] \leq \eps$.

We similarly say that for $\delta>0$, the random variables are $(\eps,\delta)$-indistinguishable, denoted $X^0\approx_{\eps,\delta} X^1$, if for $b\in \{0,1\}$
\[
\max_{S\subseteq\supp(X^{b})} \ln\left[\frac{\Pr[X^b\in S]-\delta}{\Pr[X^{1-b}\in S]} \right] \leq \eps.
\]
For two probability  distributions, 
$\mathcal{B}^0$, $\mathcal{B}^1$
We extend the same notation  and write
$\mathbf{B}^0 \approx_\eps \mathbf{B}^1$ and $\mathbf{B}^0 \approx_{\eps,\delta} \mathbf{B}^1$ when this holds for random variables drawn from the respective distributions.

The following relates $(\eps,0)$ and $(\eps,\delta)$-indistinguishability with $\delta=0$ and $\delta>0$. 
\begin{lemma}\label{relateapprox:lemma}
Let $\mathbf{B}^0$, $\mathbf{B}^1$ be two distributions.
Then $\mathbf{B}^0 \approx_{\eps,\delta} \mathbf{B}^1$ if and only if we can express them as mixtures
\[
\mathbf{B}^b \equiv (1-\delta)\cdot \mathbf{N}^b   + \delta\cdot \mathbf{E}^b\ , 
\]
where $\mathbf{N}^0 \approx_\eps \mathbf{N}^1$.
\end{lemma}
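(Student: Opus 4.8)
The plan is to prove both directions, with essentially all the work in the ``only if'' direction. I would fix a common dominating measure, say $\mu=\mathbf{B}^0+\mathbf{B}^1$, and write $f_b=d\mathbf{B}^b/d\mu$ for the densities. First I would record the standard reformulation of the hypothesis: unpacking the definition, $\mathbf{B}^0\approx_{\eps,\delta}\mathbf{B}^1$ says exactly that $\Pr[\mathbf{B}^b\in S]\le e^\eps\Pr[\mathbf{B}^{1-b}\in S]+\delta$ for every measurable $S$ and $b\in\{0,1\}$, and testing this on $S=\{f_b>e^\eps f_{1-b}\}$ shows it is equivalent to $\delta\ge\delta_0$ and $\delta\ge\delta_1$, where $\delta_b:=\int(f_b-e^\eps f_{1-b})^+\,d\mu$. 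I would also dispose of the degenerate cases $\delta\ge 1$ (take $\mathbf{E}^b=\mathbf{B}^b$) and $\delta=0$ (then the bound already gives $\mathbf{N}^b=\mathbf{B}^b$), so that henceforth $0<\delta<1$.

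The ``if'' direction is immediate: given $\mathbf{B}^b\equiv(1-\delta)\mathbf{N}^b+\delta\mathbf{E}^b$ with $\mathbf{N}^0\approx_\eps\mathbf{N}^1$, for any $S$ we have $\Pr[\mathbf{B}^0\in S]-\delta\le(1-\delta)\Pr[\mathbf{N}^0\in S]\le(1-\delta)e^\eps\Pr[\mathbf{N}^1\in S]\le e^\eps\Pr[\mathbf{B}^1\in S]$, bounding $\Pr[\mathbf{E}^0\in S]\le 1$ in the first step and $(1-\delta)\mathbf{N}^1(S)\le\mathbf{B}^1(S)$ in the last; the symmetric inequality is identical.

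For the ``only if'' direction the idea is to take $\mathbf{N}^0,\mathbf{N}^1$ to be the largest mutually $\eps$-close sub-densities of $f_0,f_1$, trimmed so that both carry mass exactly $1-\delta$. Set $g_0^\star=\min(f_0,e^\eps f_1)$ and $g_1^\star=\min(f_1,e^\eps f_0)$; a three-region case check (on $\{f_0\le e^\eps f_1,\ f_1\le e^\eps f_0\}$ and its two complements) shows $e^{-\eps}g_1^\star\le g_0^\star\le e^\eps g_1^\star$ pointwise, and by construction $\int g_0^\star\,d\mu=1-\delta_0$ and $\int g_1^\star\,d\mu=1-\delta_1$, both at least $1-\delta$ by the hypothesis. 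The obstacle is that $\delta_0\neq\delta_1$ in general, so one cannot simply rescale both $g_b^\star$ by the same factor down to mass $1-\delta$. The fix is a two-step trim: assuming WLOG $\delta_0\le\delta_1$, first pull $g_0^\star$ toward its allowed lower envelope by setting $g_0'=(1-\theta)g_0^\star+\theta e^{-\eps}g_1^\star$ with $\theta=(\delta_1-\delta_0)/\bigl((1-\delta_0)-e^{-\eps}(1-\delta_1)\bigr)$; here $\theta\in[0,1]$ precisely because the denominator minus the numerator equals $(1-\delta_1)(1-e^{-\eps})\ge 0$ (and one sets $\theta=0$ in the boundary case $\delta_0=\delta_1,\eps=0$), and one checks that $\int g_0'\,d\mu=1-\delta_1$ and that $e^{-\eps}g_1^\star\le g_0'\le g_0^\star$ still holds. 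Then rescale both by the common factor $\tfrac{1-\delta}{1-\delta_1}\le 1$: put $G_0=\tfrac{1-\delta}{1-\delta_1}g_0'$ and $G_1=\tfrac{1-\delta}{1-\delta_1}g_1^\star$, so that $\int G_0\,d\mu=\int G_1\,d\mu=1-\delta$, $G_b\le f_b$, and $G_0/G_1=g_0'/g_1^\star\in[e^{-\eps},e^\eps]$. Finally let $\mathbf{N}^b$ have density $G_b/(1-\delta)$ and $\mathbf{E}^b$ have density $(f_b-G_b)/\delta$ (nonnegative, total mass $1$, since $\int(f_b-G_b)\,d\mu=\delta$); then $\mathbf{B}^b\equiv(1-\delta)\mathbf{N}^b+\delta\mathbf{E}^b$ and $\mathbf{N}^0\approx_\eps\mathbf{N}^1$.

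I expect the only real difficulty to be locating this construction --- in particular, realizing that the masses must be equalized using the $e^{-\eps}$-slack between $g_0^\star$ and $g_1^\star$ \emph{before} any rescaling, and that this is exactly what $\delta\ge\max(\delta_0,\delta_1)$ buys. Everything after that (nonnegativity, the total-mass bookkeeping, and the pointwise ratio bounds $e^{-\eps}\le G_0/G_1\le e^\eps$, each of which reduces to $e^{-\eps}g_1^\star\le g_0^\star\le e^\eps g_1^\star$) is a routine verification, as is the handling of the degenerate parameter values.
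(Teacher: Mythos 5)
Your proposal is correct. One point of comparison to flag up front: the paper does not actually prove this lemma --- it is stated in the preliminaries as a known characterization of $(\eps,\delta)$-indistinguishability from the approximate-DP literature, and all later results simply invoke it. So there is no in-paper proof to match against; what you have written is a complete, self-contained proof of the cited fact, following the standard route. Your reduction of the hypothesis to the hockey-stick conditions $\delta\ge\delta_b=\int(f_b-e^\eps f_{1-b})^+\,d\mu$ is the right reformulation, the easy ``if'' direction is handled correctly (including the step $(1-\delta)\mathbf{N}^1(S)\le\mathbf{B}^1(S)$), and the construction $g_b^\star=\min(f_b,e^\eps f_{1-b})$ with the two-step trim is sound: the interpolation $g_0'=(1-\theta)g_0^\star+\theta e^{-\eps}g_1^\star$ can reach mass $1-\delta_1$ exactly because $e^{-\eps}(1-\delta_1)\le 1-\delta_1\le 1-\delta_0$, your formula for $\theta$ lands in $[0,1]$ by the identity you state, the sandwich $e^{-\eps}g_1^\star\le g_0'\le g_0^\star$ is preserved under convex combination, and the common rescaling by $\tfrac{1-\delta}{1-\delta_1}\le 1$ keeps $G_b\le f_b$ while equalizing both masses at $1-\delta$, so $\mathbf{E}^b$ has a nonnegative density of total mass one and $\mathbf{N}^0\approx_\eps\mathbf{N}^1$ follows from the pointwise ratio bound. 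The degenerate cases ($\delta=0$, $\delta\ge1$, $\delta_0=\delta_1$ with vanishing denominator) are disposed of appropriately. I see no gap.
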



We treat random variables interchangeably as distributions, and in particular, for a randomized algorithms $\calA$ and input $D$ we use $\calA(D)$ to denote both the random variable and the distribution. We say an algorithm $\calA$ is $\eps$-DP (pure differential privacy), if for any two \emph{neighboring} datasets $D$ and $D'$, $\calA(D)\approx_\eps \calA(D')$. Similarly, we say $\calA$ is $(\eps,\delta)$-DP (approximate differential privacy) if for any two neighboring datasets $D,D'$, it holds that
$\calA(D)\approx_{\eps,\delta} \calA(D')$~\cite{DMNS06}. We  refer to $\eps,\delta$ as the \emph{privacy parameters}.

A {\em private test} is a differentially private algorithm with Boolean output (say
in $\{0,1\}$).

\begin{remark}
    The literature in differential privacy uses different definitions of neighboring datasets but in this work the definition and properties are used in a black-box fashion. TCT, and properties in these preliminaries, apply with an abstraction. 
\end{remark}

The following is immediate from Lemma~\ref{relateapprox:lemma}:
\begin{corollary}[Decomposition of an approximate DP Algorithm]\label{approxmix:coro}
An algorithm $\calA$ is $(\eps,\delta)$-DP if and only if
for any two neighboring datasets $D^0$ and $D^1$ we can represent each distribution $\calA(D^b)$ ($b\in\{0,1\}$) as a mixture
\[
\calA(D^b) \equiv (1-\delta)\cdot \mathbf{N}^b   + \delta\cdot \mathbf{E}^b\ , 
\]
where $\mathbf{N}^0 \approx_\eps \mathbf{N}^1$.
\end{corollary}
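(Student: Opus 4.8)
The plan is to obtain Corollary~\ref{approxmix:coro} as an immediate consequence of the definition of approximate differential privacy together with Lemma~\ref{relateapprox:lemma}, applied separately to each pair of neighboring datasets. By definition, $\calA$ is $(\eps,\delta)$-DP precisely when $\calA(D^0)\approx_{\eps,\delta}\calA(D^1)$ holds for every pair of neighboring datasets $D^0,D^1$; so it suffices to show that, for a fixed such pair, this indistinguishability statement is equivalent to the existence of the claimed mixture decomposition of $\calA(D^0)$ and $\calA(D^1)$.

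For the forward direction I would fix neighboring $D^0,D^1$, set $\mathbf{B}^b := \calA(D^b)$ for $b\in\{0,1\}$, and invoke the ``only if'' part of Lemma~\ref{relateapprox:lemma}: since $\mathbf{B}^0\approx_{\eps,\delta}\mathbf{B}^1$, there exist distributions $\mathbf{N}^b,\mathbf{E}^b$ with $\calA(D^b)\equiv(1-\delta)\mathbf{N}^b+\delta\mathbf{E}^b$ and $\mathbf{N}^0\approx_\eps\mathbf{N}^1$, which is exactly the asserted decomposition. For the converse, I would assume that for every pair of neighboring $D^0,D^1$ such a decomposition exists and apply the ``if'' part of Lemma~\ref{relateapprox:lemma} to $\mathbf{B}^b := \calA(D^b)$ to conclude $\calA(D^0)\approx_{\eps,\delta}\calA(D^1)$; since this holds for all neighboring pairs, $\calA$ is $(\eps,\delta)$-DP by definition. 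One minor point worth noting is that both $\approx_\eps$ and $\approx_{\eps,\delta}$ are symmetric in their two arguments, so the roles of $D^0$ and $D^1$ may be interchanged freely and no separate argument is needed for the reverse direction of the divergence.

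Since Lemma~\ref{relateapprox:lemma} is available, the corollary carries essentially no technical obstacle; the only substantive content lives in that lemma. Its proof would proceed by the standard characterization of $(\eps,\delta)$-indistinguishability: in one direction, if $\mathbf{B}^0\approx_{\eps,\delta}\mathbf{B}^1$ one removes a probability mass of at most $\delta$ from each distribution (the ``excess'' events witnessing the $\delta$ slack) so that the renormalized remainders $\mathbf{N}^0,\mathbf{N}^1$ become $\eps$-indistinguishable; in the other direction one checks directly that for a mixture $\mathbf{B}^b\equiv(1-\delta)\mathbf{N}^b+\delta\mathbf{E}^b$ with $\mathbf{N}^0\approx_\eps\mathbf{N}^1$, every event $S$ satisfies $\Pr[\mathbf{B}^0\in S]\le(1-\delta)e^{\eps}\Pr[\mathbf{N}^1\in S]+\delta\le e^{\eps}\Pr[\mathbf{B}^1\in S]+\delta$, which is the $(\eps,\delta)$ bound.
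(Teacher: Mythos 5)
Your proposal is correct and matches the paper exactly: the paper gives no separate argument, stating the corollary as immediate from Lemma~\ref{relateapprox:lemma} applied, per neighboring pair, to the distributions $\calA(D^0)$ and $\calA(D^1)$ together with the definition of $(\eps,\delta)$-DP. Your additional sketch of the lemma itself is the standard argument and is consistent with what the paper asserts.
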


Differential privacy satisfies the post-processing property (post-processing of the output of a private algorithm remains private with the same parameter values) and also has nice composition theorems:
\begin{lemma}[DP composition~\cite{DMNS06,DBLP:conf/focs/DworkRV10}]\label{composition:lemma}
    An interactive sequence of $r$ executions of  $\eps$-DP algorithms satisfies $(\eps',\delta)$-DP for
    \begin{itemize}
        \item $\eps' = r\eps$ and $\delta=0$  \emph{by basic composition \cite{DMNS06}}, or
        \item 
        for any $\delta>0$,
        \begin{align*}
            \eps' &= \frac{1}{2}r\eps^2 + \eps\sqrt{2r\log(1/\delta)} \ .
        \end{align*}
        \emph{by advanced composition \cite{DBLP:conf/focs/DworkRV10}}.
    \end{itemize}
\end{lemma}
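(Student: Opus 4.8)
The plan is to establish the two bullets separately via the standard adaptive-composition machinery, working at the level of likelihood ratios (the black-box abstraction of DP noted in the remark lets us avoid measure-theoretic fuss: in a discrete range the manipulations below are literal, and in general one reads ``$\Pr[Y=y]$'' as a Radon--Nikodym derivative). I would fix neighboring datasets $D^0,D^1$, write $\mathbf{Y}^b=(Y_1,\dots,Y_r)$ for the transcript produced when the data is $D^b$, and use the single structural fact that conditioned on any prefix $y_{<i}=(y_1,\dots,y_{i-1})$ the next algorithm $\calA_i$ is determined (it is a function of the prefix and of the analyst's coins, which are data-independent), so $Y_i\mid y_{<i}$ under $D^0$ and under $D^1$ are $\eps$-indistinguishable.

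For \emph{basic composition}, I would write the transcript likelihood ratio as a telescoping product of conditional ratios,
\[
\frac{\Pr[\mathbf{Y}^0=y]}{\Pr[\mathbf{Y}^1=y]}=\prod_{i=1}^{r}\frac{\Pr[Y_i^0=y_i\mid y_{<i}]}{\Pr[Y_i^1=y_i\mid y_{<i}]},
\]
bound each factor in $[e^{-\eps},e^{\eps}]$ by the conditional $\eps$-indistinguishability above, conclude the product lies in $[e^{-r\eps},e^{r\eps}]$, and integrate over an arbitrary event $S$ to get $\Pr[\mathbf{Y}^0\in S]\le e^{r\eps}\Pr[\mathbf{Y}^1\in S]$; the symmetric inequality is identical, giving $(r\eps,0)$-DP.

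For \emph{advanced composition}, I would pass to the privacy-loss random variable $Z=\ln\frac{\Pr[\mathbf{Y}^0=\mathbf{Y}]}{\Pr[\mathbf{Y}^1=\mathbf{Y}]}$ with $\mathbf{Y}\sim\mathbf{Y}^0$, decompose it as $Z=\sum_{i=1}^{r}Z_i$ with $Z_i=\ln\frac{\Pr[Y_i^0=Y_i\mid Y_{<i}]}{\Pr[Y_i^1=Y_i\mid Y_{<i}]}$, and lean on two per-step estimates: (i) $|Z_i|\le\eps$ almost surely by pure $\eps$-indistinguishability of step $i$, and (ii) $\Ex[Z_i\mid Y_{<i}]=\mathrm{KL}\bigl(\calA_i(D^0)\,\|\,\calA_i(D^1)\bigr)\le\tfrac12\eps^2$, the standard fact that the KL divergence between $\eps$-indistinguishable distributions is at most $\tfrac12\eps^2$ (the two-point law gives $\eps\tanh(\eps/2)$, which is the extremal case). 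Then $Z$ is a martingale with $\eps$-bounded increments plus a predictable drift of total mean at most $\tfrac12 r\eps^2$, so a concentration bound of the Azuma/Hoeffding type (as in~\cite{DBLP:conf/focs/DworkRV10}, with the drift term controlled by the KL estimate above) gives $\Pr[Z>\tfrac12 r\eps^2+\eps\sqrt{2r\ln(1/\delta)}]\le\delta$. I would finish by converting this tail bound to approximate DP in the usual way: for any event $S$, $\Pr[\mathbf{Y}^0\in S]\le\Pr[\mathbf{Y}^0\in S,\ Z\le\eps']+\delta\le e^{\eps'}\Pr[\mathbf{Y}^1\in S]+\delta$ with $\eps'=\tfrac12 r\eps^2+\eps\sqrt{2r\ln(1/\delta)}$, and noting the reverse direction follows by the symmetry of $\eps$-indistinguishability in $D^0,D^1$.

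The step I expect to be the main obstacle is the concentration inequality with the stated constant: centering the increments and applying plain Azuma with the crude bound $|Z_i-\Ex[Z_i\mid Y_{<i}]|\le 2\eps$ loses a factor and yields $2\eps\sqrt{2r\ln(1/\delta)}$ rather than $\eps\sqrt{2r\ln(1/\delta)}$, so one genuinely needs the sharper per-step moment-generating-function estimate that simultaneously exploits $Z_i\in[-\eps,\eps]$ and the fact that its conditional mean is only $O(\eps^2)$; the rest --- the telescoping ratio, the KL bound, and the tail-to-$(\eps',\delta)$ conversion --- is routine bookkeeping.
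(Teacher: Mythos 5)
This lemma is a quoted preliminary (basic composition from \cite{DMNS06}, advanced composition from \cite{DBLP:conf/focs/DworkRV10}); the paper supplies no proof of its own, and your sketch is precisely the standard argument behind those references---telescoping the conditional likelihood ratios for the pure bound, and controlling the privacy-loss martingale via a per-step KL estimate plus Azuma--Hoeffding for the approximate bound---so it is correct in approach and substance. One remark on the step you flag as the main obstacle: no estimate sharper than Hoeffding's lemma is needed, because each increment $Z_i$ is confined to the interval $[-\eps,\eps]$ of length $2\eps$, and the range form of Azuma--Hoeffding gives $\Pr[\sum_i (Z_i-\Ex[Z_i\mid Y_{<i}])>t]\le \exp\left(-2t^2/(r(2\eps)^2)\right)$, which with $t=\eps\sqrt{2r\log(1/\delta)}$ is exactly $\delta$; the factor-of-two loss you anticipate only appears if one uses the weaker form of Azuma stated in terms of an absolute bound on the centered increments. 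Your KL bound $\eps\tanh(\eps/2)\le\tfrac12\eps^2$ is also the right ingredient, since it is what produces the drift term $\tfrac12 r\eps^2$ in the form stated here rather than the $r\eps(e^\eps-1)$ of the original statement in \cite{DBLP:conf/focs/DworkRV10}.
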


\subsection{Simulation-based privacy analysis}
Privacy analysis of an algorithm $\calA$ via simulations is performed by simulating the original algorithm $\calA$ on two neighboring datasets $D^0,D^1$. The simulator does not know which of the datasets is the actual input (but knows everything about the datasets).  Another entity called the "data holder" has the 1-bit information $b\in \{0,1\}$ on which dataset it is.   We perform privacy analysis with respect to what the holder discloses to the simulator regarding the private bit $b$ (taking the maximum over all choices of $D^0$,$D^1$). The privacy analysis is worst case over the choices of two neighboring datasets.  This is equivalent to performing privacy analysis for $\calA$. 

\begin{lemma}[Simulation-based privacy analysis]\cite{arxiv.2211.06387}\label{lemma:intro-simulate}
Let $\AAA$ be an algorithm whose input is a dataset. If there exist a pair of interactive algorithms $\SSS$ and $H$ satisfying the following 2 properties, then algorithm $\AAA$ is $(\eps,\delta)$-DP.
\begin{enumerate}
    \item For every two neighboring datasets $D^0,D^1$ and for every bit $b\in\{0,1\}$ it holds that $$\left(\SSS(D^0,D^1)\leftrightarrow H(D^0,D^1,b)\right)\equiv\AAA(D^b).$$ Here  
    $\left(\SSS(D^0,D^1)\leftrightarrow H(D^0,D^1,b)\right)$
    denotes the outcome of $\SSS$ after interacting with $H$.
    \item Algorithm $H$ is $(\eps,\delta)$-DP w.r.t.\ the input bit $b$.
\end{enumerate}
\end{lemma}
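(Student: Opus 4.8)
The plan is to prove this by a direct post-processing argument. By Property~1 the distribution $\AAA(D^b)$ is exactly the law of the output that $\SSS$ produces after interacting with $H(D^0,D^1,b)$; this output is a deterministic function of $\SSS$'s \emph{view} of the interaction; and $\SSS$'s view is $(\eps,\delta)$-indistinguishable in $b$ precisely because $H$ is $(\eps,\delta)$-DP in $b$. Post-processing closure then finishes the argument.

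First I would fix an arbitrary pair of neighboring datasets $D^0,D^1$; the goal is $\AAA(D^0)\approx_{\eps,\delta}\AAA(D^1)$. Define the \emph{view} $V_b$ of $\SSS$ in the interaction $\SSS(D^0,D^1)\leftrightarrow H(D^0,D^1,b)$ to be the pair consisting of $\SSS$'s internal coins together with the sequence of all messages $\SSS$ receives from $H$. The two key observations are: (i) since $D^0,D^1$ are fixed and public, everything $\SSS$ does in the interaction — which messages it sends, when it halts, what it finally outputs — is a fixed deterministic function $f$ of $V_b$, with no further randomness involved; and (ii) by Property~1, $\AAA(D^b)$ is distributed exactly as the output of $\SSS$ in this interaction, hence as $f(V_b)$. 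Thus $\AAA(D^b)\equiv f(V_b)$.

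Next I would apply Property~2. Holding $D^0,D^1$ fixed and treating $b=0$ versus $b=1$ as the pair of neighboring inputs to $H$, the hypothesis that $H$ is $(\eps,\delta)$-DP w.r.t.\ $b$ says precisely that, against any interacting party and in particular against $\SSS$, that party's view is $(\eps,\delta)$-indistinguishable between the two values of $b$; that is, $V_0\approx_{\eps,\delta}V_1$. (When $H$ is specified round by round this is just the adaptive-composition form of DP, Lemma~\ref{composition:lemma}, applied to $H$'s responses.) Since $(\eps,\delta)$-indistinguishability is preserved under post-processing — which, for the definition used in the preliminaries, is immediate because any event on $f(V_b)$ pulls back to an event on $V_b$ — we conclude $\AAA(D^0)\equiv f(V_0)\approx_{\eps,\delta}f(V_1)\equiv\AAA(D^1)$. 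As $D^0,D^1$ ranged over all neighboring pairs, $\AAA$ is $(\eps,\delta)$-DP.

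The one point that needs genuine care is observation~(i): one must make sure that the object to which interactive DP of $H$ is applied is $\SSS$'s \emph{entire} view — its coins together with \emph{all} messages it receives — and that $\SSS$'s final output is a deterministic function of that view, so that no $b$-dependent randomness is created downstream of $H$. Once this is pinned down, the fact that the interaction is two-way (with $\SSS$'s outgoing messages depending on earlier messages from $H$) causes no trouble, since those outgoing messages are themselves determined by $V_b$ and need not be tracked separately; everything else is the standard post-processing closure of $(\eps,\delta)$-indistinguishability.
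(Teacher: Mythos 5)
The paper itself does not prove this lemma: it is imported verbatim from the cited prior work, so there is no in-paper proof to compare against. Your argument is the standard one and is correct: fixing neighboring $D^0,D^1$, reading Property~2 as the interactive-DP guarantee that the \emph{entire} view of any interacting party (its coins plus all messages received from $H$) is $(\eps,\delta)$-indistinguishable in $b$, noting that $\SSS$'s final output is a deterministic function of that view since $D^0,D^1$ are public, and invoking Property~1 together with closure of $(\eps,\delta)$-indistinguishability under deterministic post-processing yields $\AAA(D^0)\approx_{\eps,\delta}\AAA(D^1)$. You correctly identify the one delicate point, namely that the indistinguishability must be applied to the whole view so that no $b$-dependent randomness arises downstream of $H$, and your handling of it is sound.
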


\subsection{Privacy Analysis with Failure Events} \label{failure:sec}

Privacy analysis of a randomized algorithm $\calA$ using designated failure events is as follows:
\begin{enumerate}
    \item Designate some runs of the algorithm as \emph{failure events}. 
    \item
    Compute an upper bound on the maximum probability, over datasets $D$, of a transcript with a failure designation.
    \item
    Analyse the privacy of the interaction transcript conditioned on no failure designation. 
\end{enumerate}
Note that the failure designation is only used for the purpose of analysis.  The output on failure runs is not restricted (e.g., could be the dataset $D$)

\begin{lemma} [Privacy analysis with privacy failure events] \label{failureevent:lemma}
Consider privacy analysis of $\calA$ with failure events.
If the probability of a failure event is bounded by $\delta^*\in [0,1]$ and the transcript conditioned on non-failure is $(\eps',\delta')$-DP then the algorithm $\calA$ is $(\eps,\delta+\delta^*)$-DP.
\end{lemma}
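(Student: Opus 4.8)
The plan is to fix an arbitrary pair of neighboring datasets $D^0,D^1$ and show $\calA(D^0)\approx_{\eps',\delta'+\delta^*}\calA(D^1)$; since this holds for every neighboring pair, $\calA$ is $(\eps',\delta'+\delta^*)$-DP, which is the asserted bound (reading $\eps=\eps'$, $\delta=\delta'$). First I would set up notation: let $F_b$ be the designated failure event on input $D^b$, put $p_b:=\Pr[F_b]\le\delta^*$, and write the mixture $\calA(D^b)\equiv(1-p_b)\,\mathbf{N}^b+p_b\,\mathbf{E}^b$, where $\mathbf{N}^b$ is the transcript conditioned on non-failure and $\mathbf{E}^b$ conditioned on failure. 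The hypothesis is exactly that $\mathbf{N}^0\approx_{\eps',\delta'}\mathbf{N}^1$, so by Lemma~\ref{relateapprox:lemma} I can split further, $\mathbf{N}^b\equiv(1-\delta')\,\mathbf{M}^b+\delta'\,\mathbf{L}^b$ with $\mathbf{M}^0\approx_{\eps'}\mathbf{M}^1$ (pure indistinguishability), obtaining $\calA(D^b)\equiv(1-p_b)(1-\delta')\,\mathbf{M}^b+(\text{leftover mass})$.

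The one subtlety, which I expect to be the only real obstacle, is that the weights $(1-p_0)(1-\delta')$ and $(1-p_1)(1-\delta')$ multiplying the two pure-indistinguishable parts need not agree, so the naive route of ``condition on non-failure and apply the conditional guarantee'' loses an extra $e^{\eps'}$ factor on $\delta^*$. I would handle this by \emph{equalizing the mixture weights}: set $w:=\min_b(1-p_b)(1-\delta')=(1-\delta')(1-\max_b p_b)$ and, for each $b$, fold the surplus mass $\big((1-p_b)(1-\delta')-w\big)\,\mathbf{M}^b\ge 0$ into the leftover to get $\calA(D^b)\equiv w\,\mathbf{M}^b+(1-w)\,\mathbf{R}^b$ for a genuine probability distribution $\mathbf{R}^b$. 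A one-line computation gives $1-w=\delta'+(1-\delta')\max_b p_b\le\delta'+\delta^*$.

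From this balanced mixture the claim drops out: for any event $S$ on transcripts, expand $\calA(D^0)\equiv w\,\mathbf{M}^0+(1-w)\,\mathbf{R}^0$, bound $\Pr[\mathbf{M}^0\in S]\le e^{\eps'}\Pr[\mathbf{M}^1\in S]$ and $\Pr[\mathbf{R}^0\in S]\le 1$, and use $w\Pr[\mathbf{M}^1\in S]\le\Pr[\calA(D^1)\in S]$ to conclude $\Pr[\calA(D^0)\in S]\le e^{\eps'}\Pr[\calA(D^1)\in S]+(1-w)\le e^{\eps'}\Pr[\calA(D^1)\in S]+\delta'+\delta^*$; the reverse inequality follows by the same argument with $0$ and $1$ swapped. (Alternatively, once the balanced form is in hand one can peel off a sliver of $\mathbf{M}^b$ to make the good part have weight exactly $1-(\delta'+\delta^*)$ and invoke the ``if'' direction of Lemma~\ref{relateapprox:lemma}.) Everything beyond the weight-equalization step is routine bookkeeping; in particular, no property of ``neighboring'' is used beyond what the conditional-DP hypothesis already encapsulates.
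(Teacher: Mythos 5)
Your proof is correct and follows essentially the same route as the paper: both decompose $\calA(D^b)$ into a common-weight component that is purely $\eps'$-indistinguishable (via Lemma~\ref{relateapprox:lemma}) plus a residual of mass at most $\delta'+\delta^*$, the only cosmetic difference being that the paper pads the failure mass up to exactly $\delta^*$ before splitting, while you equalize at $\max_b p_b$ afterwards and finish with an explicit event-wise bound instead of citing the mixture characterization (Corollary~\ref{approxmix:coro}).
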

\begin{proof}
Let $D^0$ and $D^1$ be neighboring datasets.
From our assumptions, for $b\in\{0,1\}$, we can represent $\calA(D^b)$ as the mixture
  $\calA(D^b)\equiv(1-\delta^{b})\cdot \mathbf{Z}^{b}+ \delta^{b}\cdot \mathbf{F}^{b}$, where
  $\mathbf{Z}^0 \approx_{\eps',\delta'} \mathbf{Z}^1$, and $\delta^{(b)} \leq \delta^*$.
From Lemma~\ref{relateapprox:lemma}, we have
$\mathbf{Z}^b \equiv (1-\delta') \cdot \mathbf{N}^b + \delta'\cdot \mathbf{E}^b$, where
$\mathbf{N}^0 \approx_{\eps'} \mathbf{N}^1$.

  Then 
  \begin{align*}
      \calA(D^b) &= (1-\delta^{(b)})\cdot \mathbf{Z}^{b}+ \delta^{(b)}\cdot \mathbf{F}^{(b)} \\
      &= (1-\delta^*)\cdot \mathbf{Z}^{b} + (\delta^*-\delta^{(b)}) \cdot \mathbf{Z}^{b} + \delta^{(b)}
      \cdot \mathbf{F}^{b}\\
      &= (1-\delta^*)\cdot \mathbf{Z}^{b} + \delta^*\cdot \left((1-\delta^{(b)}/\delta^*)\cdot \mathbf{Z}^{b} +  \delta^{(b)}\cdot \mathbf{F}^{b} \right)\\
      &= (1-\delta^*)(1-\delta')\cdot \mathbf{N}^{b} + (1-\delta^*)\delta'\cdot \mathbf{E}^b + \delta^*\cdot \left((1-\delta^{(b)}/\delta^*)\cdot \mathbf{Z}^{b} +  \delta^{(b)}\cdot \mathbf{F}^{b} \right) \\
      &= (1-\delta^*-\delta') \cdot \mathbf{N}^{b} + \delta'\delta^*\cdot \mathbf{N}+ (1-\delta^*)\delta'\cdot \mathbf{E}^b + \delta^*\cdot \left((1-\delta^{(b)}/\delta^*)\cdot \mathbf{Z}^{b} +  \delta^{(b)}\cdot \mathbf{F}^{b} \right)
  \end{align*}
  The claim follows from Corollary~\ref{approxmix:coro}.
\end{proof}


Using simulation-based privacy analysis we can 
treat an interactive sequence of approximate-DP algorithms (optionally with designated failure events)
as a respective interactive sequence of pure-DP algorithms where the $\delta$ parameters are anlaysed through failure events.  This simplifies analysis:

We can relate the privacy of a composition of approximate-DP algorithms to that of a composition of corresponding pure-DP algorithms:
\begin{corollary} [Composition of approximate-DP algorithms] \label{approxcomp:coro}
An interactive sequence of
$(\eps_i,\delta_i)$-DP algorithms ($i\in [k]$) has privacy parameter values $(\eps',\delta'+ \sum_{i=1}^k \delta_i)$, where $(\eps',\delta')$ are privacy parameter values of
a composition of pure $(\eps_i,0)$-DP algorithms $i\in [k]$.
%
\end{corollary}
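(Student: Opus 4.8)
\medskip
\noindent\textit{Proof plan.} The plan is to reduce the approximate-DP composition to a \emph{pure}-DP composition plus a small additive failure term, using the simulation-based framework (Lemma~\ref{lemma:intro-simulate}) together with the failure-event analysis (Lemma~\ref{failureevent:lemma}). Fix a pair of neighboring datasets $D^0,D^1$ and let $b\in\{0,1\}$ be the private bit held by the data holder. I would build a simulator $\SSS$ and a data holder $H$ that together reproduce the interactive composition on input $D^b$, and then argue that $H$ leaks little about $b$.

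The construction runs online. At step $i$, the transcript so far already determines the next mechanism $\calA_i$, which is $(\eps_i,\delta_i)$-DP. By Corollary~\ref{approxmix:coro} (equivalently Lemma~\ref{relateapprox:lemma}) applied to $\calA_i$ and the fixed pair $D^0,D^1$, write $\calA_i(D^c)\equiv (1-\delta_i)\cdot\mathbf{N}_i^c+\delta_i\cdot\mathbf{E}_i^c$ for $c\in\{0,1\}$, where $\mathbf{N}_i^0\approx_{\eps_i}\mathbf{N}_i^1$ and the mixture weight $\delta_i$ is the same for both $c$. The simulator therefore flips a coin $c_i\sim\Ber(\delta_i)$ on its own (no knowledge of $b$ is needed for this flip); if $c_i=0$ it asks $H$ for a sample from $\mathbf{N}_i^b$, and if $c_i=1$ it asks $H$ for a sample from $\mathbf{E}_i^b$ and designates the entire run a \emph{failure event} (only for the purpose of analyzing $H$). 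Since $\calA_i(D^b)$ is exactly the $\Ber(\delta_i)$-mixture of $\mathbf{N}_i^b$ and $\mathbf{E}_i^b$, the joint output of $\SSS\leftrightarrow H$ on input $b$ is distributed exactly as the interactive composition of the $\calA_i$ on $D^b$, so Property~1 of Lemma~\ref{lemma:intro-simulate} holds and it remains to bound the privacy of $H$ w.r.t.\ $b$.

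Next I would analyse $H$ via failure events (Section~\ref{failure:sec}, Lemma~\ref{failureevent:lemma}). Because each coin $c_i$ is independent of $b$ and of the fresh randomness used to draw the sample, the probability that some $c_i$ equals $1$ is at most $1-\prod_i(1-\delta_i)\le\sum_{i=1}^k\delta_i$, which bounds the failure probability by $\delta^*:=\sum_i\delta_i$. Conditioned on no failure, every response $H$ returns is a sample from $\mathbf{N}_i^b$, and conditioning the whole run on the (independent) event $\{c_i=0\ \forall i\}$ leaves these responses distributed as a genuine interactive sequence of the pure $(\eps_i,0)$-DP mechanisms $\mathbf{N}_i$; hence the conditional transcript is $(\eps',\delta')$-DP by the hypothesis on the pure-DP composition. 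Applying Lemma~\ref{failureevent:lemma} with $\delta^*=\sum_i\delta_i$ and conditional parameters $(\eps',\delta')$ gives that $H$ is $(\eps',\delta'+\sum_i\delta_i)$-DP, and then Lemma~\ref{lemma:intro-simulate} yields the claim.

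The step I expect to be the main obstacle is the interaction between adaptivity and conditioning: the decomposition must be invoked one step at a time so that $\calA_i$ is already fixed by the transcript prefix, and one must verify that conditioning the entire run on ``no failure'' does not distort the conditional law of the non-failure responses. This is precisely where independence of the coins $c_i$ from both $b$ and the sampling randomness is used; everything else is bookkeeping with the mixture identities already established in the preliminaries.
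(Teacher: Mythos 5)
Your proposal is correct and follows essentially the same route as the paper's proof: fix neighboring datasets, decompose each $\calA_i$ via Corollary~\ref{approxmix:coro}, flip a $\Ber(\delta_i)$ coin to decide between the pure part $\mathbf{N}^b$ and the failure part $\mathbf{E}^b$, bound the failure probability by $1-\prod_i(1-\delta_i)\le\sum_i\delta_i$, and conclude with Lemma~\ref{failureevent:lemma} under the simulation-based framework of Lemma~\ref{lemma:intro-simulate}. Your extra care about invoking the decomposition step-by-step under adaptivity and about the coins being independent of $b$ is a faithful elaboration of what the paper leaves implicit, not a different argument.
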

\begin{proof}
We perform simulation-based analysis.  Fix two neighboring datasets $D^0$, $D^1$.
For an $(\eps_i,\delta_i)$-DP algorithm, we can consider the mixtures as in Corollary~\ref{approxmix:coro}.
We draw $c\sim\Ber(\delta_i)$ and if $c=1$ designate the output as failure and return $r\sim \mathbf{E}^{(b)}$.  Otherwise, we return $r\sim \mathbf{N}^{(b)}$.
The overall failure probability is bounded by 
$1-\prod_i (1-\delta_i) \leq \sum_i \delta_i$.
The output conditioned on non-failure is a composition of
$(\eps_i,0)$-DP algorithms ($i\in [k]$).
The claim follows using Lemma~\ref{failureevent:lemma}.
\end{proof}

\section{The Target-Charging Technique} \label{TC:sec}

We extend the definition of $q$-targets (Definition~\ref{def:qtarget}) so that it applies with approximate DP algorithms:
\begin{definition}[$q$-target with $(\eps,\delta)$ of a pair of distributions] \label{def:approxqtargetdist}
Let $\calA\to\calY$ be a randomized algorithm. 
Let $\mathbf{Z}^0$ and $\mathbf{Z}^1$ be two distributions with support $\calY$.  
We say that $\top\subseteq \calY$ is a $q$-target of 
$(\mathbf{Z}^0,\mathbf{Z}^1)$ with $(\eps,\delta)$, where $\eps>0$ and $\delta\in [0,1)$, 
if there exist 
$p\in [0,1]$ and five distributions $\mathbf{C}$, $\mathbf{B}^b$, and $\mathbf{E}^b$ (for $b\in\{0,1\}$) such that $\mathbf{Z}^0$ and $\mathbf{Z}^1$ can be written as the mixtures
\[
\begin{aligned}
\mathbf{Z}^0 &\equiv (1-\delta)\cdot ( p \cdot \mathbf{C} + (1-p) \cdot \mathbf{B}^0 )+\delta \cdot \mathbf{E}^0\\
\mathbf{Z}^1 &\equiv (1-\delta)\cdot ( p \cdot \mathbf{C} + (1-p) \cdot \mathbf{B}^1 )+\delta \cdot \mathbf{E}^1
\end{aligned}
\]
where
$\mathbf{B}^0 \approx_\eps \mathbf{B}^1$, and
$\min(\Pr[\mathbf{B}^0\in\top],\Pr[\mathbf{B}^1\in\top])\geq q$.
\end{definition}

\begin{definition}[$q$-target with $(\eps,\delta)$ of a randomized algorithm] \label{def:approxqtarget}
Let $\calA\to\calY$ be a randomized algorithm. 
We say that $\top\subseteq \calY$ is a $q$-target of 
$\calA$ with $(\eps,\delta)$, where $\eps>0$ and $\delta\in [0,1)$,
if for any pair $D^0$, $D^1$ of neighboring datasets,
$\top$ is a $q$-target with $(\eps,\delta)$ of
$\calA(D^0)$ and $\calA(D^1)$.
\end{definition}

We can relate privacy of an algorithms or indistinguishability of two distributions to existence of $q$-targets: 
\begin{lemma}
(i)~If $(\mathbf{Z}^0, \mathbf{Z}^1)$ have a $q$-target with $(\eps,\delta)$ then
$\mathbf{Z}^0 \approx_{\eps,\delta} \mathbf{Z}^1$.
Conversely,
if $\mathbf{Z}^0\approx_{\eps,\delta} \mathbf{Z}^1$ then 
$(\mathbf{Z}^0, \mathbf{Z}^1)$ have a $1$-target with $(\eps,\delta)$ (the full support is a $1$-target).

(ii)~If an algorithm $\calA$ has a $q$-target with $(\eps,\delta)$ then $\calA$ is $(\eps,\delta)$-DP. Conversely, if an algorithm $\calA$ is $(\eps,\delta)$-DP then it has a 1-target (the set $\calY$) with $(\eps,\delta)$.
\end{lemma}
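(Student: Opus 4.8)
The statement bundles two ``if and only if''-style equivalences, and the plan is to prove each direction by exhibiting or unpacking the mixture decomposition in Definition~\ref{def:approxqtargetdist}, then invoking Corollary~\ref{approxmix:coro} (equivalently Lemma~\ref{relateapprox:lemma}) to pass between the mixture form and $(\eps,\delta)$-indistinguishability. Part~(ii) is an immediate consequence of part~(i) applied pointwise to the pair $(\calA(D^0),\calA(D^1))$ over all neighboring $D^0,D^1$, using Definition~\ref{def:approxqtarget}, so the real content is part~(i).

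For the forward direction of (i), suppose $\top$ is a $q$-target of $(\mathbf{Z}^0,\mathbf{Z}^1)$ with $(\eps,\delta)$, so we have the decomposition
\[
\mathbf{Z}^b \equiv (1-\delta)\cdot\bigl(p\cdot\mathbf{C}+(1-p)\cdot\mathbf{B}^b\bigr)+\delta\cdot\mathbf{E}^b,\qquad b\in\{0,1\},
\]
with $\mathbf{B}^0\approx_\eps\mathbf{B}^1$. The idea is that the ``noise'' part $\mathbf{N}^b := p\cdot\mathbf{C}+(1-p)\cdot\mathbf{B}^b$ is itself $(\eps,0)$-indistinguishable: the shared component $\mathbf{C}$ contributes nothing to the max-divergence, and the $(1-p)$-weighted $\mathbf{B}^b$ part is $\eps$-indistinguishable by hypothesis, so for any measurable $S$ we get $\Pr[\mathbf{N}^0\in S] = p\Pr[\mathbf{C}\in S] + (1-p)\Pr[\mathbf{B}^0\in S] \le p\Pr[\mathbf{C}\in S] + (1-p)e^\eps\Pr[\mathbf{B}^1\in S] \le e^\eps\Pr[\mathbf{N}^1\in S]$, and symmetrically. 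Hence $\mathbf{Z}^b \equiv (1-\delta)\mathbf{N}^b + \delta\mathbf{E}^b$ with $\mathbf{N}^0\approx_\eps\mathbf{N}^1$, and Lemma~\ref{relateapprox:lemma} (``if'' direction) gives $\mathbf{Z}^0\approx_{\eps,\delta}\mathbf{Z}^1$. Note that the $q$-value and the set $\top$ play no role here: any $q$-target certifies $(\eps,\delta)$-indistinguishability.

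For the converse of (i), suppose $\mathbf{Z}^0\approx_{\eps,\delta}\mathbf{Z}^1$. By Lemma~\ref{relateapprox:lemma} we may write $\mathbf{Z}^b\equiv(1-\delta)\mathbf{N}^b+\delta\mathbf{E}^b$ with $\mathbf{N}^0\approx_\eps\mathbf{N}^1$. We now instantiate Definition~\ref{def:approxqtargetdist} with $p=0$, $\mathbf{B}^b := \mathbf{N}^b$, $\mathbf{C}$ arbitrary (say $\mathbf{C}=\mathbf{N}^0$), and $\mathbf{E}^b$ as given; then the required mixture identity holds trivially, $\mathbf{B}^0\approx_\eps\mathbf{B}^1$, and $\min(\Pr[\mathbf{B}^0\in\calY],\Pr[\mathbf{B}^1\in\calY]) = 1 \ge 1$, so $\top=\calY$ is a $1$-target with $(\eps,\delta)$. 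For (ii), the algorithmic versions follow by quantifying over neighboring datasets: $\calA$ is $(\eps,\delta)$-DP iff $\calA(D^0)\approx_{\eps,\delta}\calA(D^1)$ for all neighbors, which by part~(i) is implied by (resp.\ implies, via $\top=\calY$) $\top$ being a $q$-target (resp.\ $1$-target) of $\calA$ with $(\eps,\delta)$ in the sense of Definition~\ref{def:approxqtarget}.

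The one place to be slightly careful is the first direction: one must verify that the mixture $p\cdot\mathbf{C}+(1-p)\cdot\mathbf{B}^b$ is $(\eps,0)$-indistinguishable before applying Lemma~\ref{relateapprox:lemma}, since Definition~\ref{def:approxqtargetdist} only asserts $\mathbf{B}^0\approx_\eps\mathbf{B}^1$ and not that the full noise part is $\eps$-indistinguishable --- but this is exactly the short convexity computation above, which is the crux of why the shared-component $\mathbf{C}$ can be ``released for free.'' No serious obstacle is expected; the proof is a direct bookkeeping exercise against the definitions and Lemma~\ref{relateapprox:lemma}.
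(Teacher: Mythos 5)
Your proposal is correct and follows essentially the same route as the paper: decompose per Definition~\ref{def:approxqtargetdist}, note that the common component $\mathbf{C}$ preserves $\eps$-indistinguishability of the mixtures $p\cdot\mathbf{C}+(1-p)\cdot\mathbf{B}^b$, and pass through Lemma~\ref{relateapprox:lemma}; the converse via $p=0$, $\mathbf{B}^b=\mathbf{N}^b$ and the full support as a $1$-target is likewise identical. The only difference is that you spell out the short convexity inequality that the paper merely asserts, which is a harmless (and welcome) addition.
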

\begin{proof}
If two distributions $\mathbf{B}^0$, $\mathbf{B}^1$ have a $q$-target with $(\eps,\delta)$ than from Definition~\ref{def:approxqtargetdist} they can be represented as mixtures. Now observe the if $\mathbf{B}^0\approx_\eps \mathbf{B}^1$ then
the mixtures also satisfy 
$p\cdot \mathbf{C}+(1-p)\cdot \mathbf{B^0} \approx_\eps p\cdot \mathbf{C}+(1-p)\cdot \mathbf{B^0}$. 
Using Lemma~\ref{relateapprox:lemma}, we get 
$\mathbf{Z}^0 \approx_{\eps,\delta} \mathbf{Z}^1$.

For (ii) consider $\calA$ and two neighboring datasets $D^0$ and $D^1$. Using Definition~\ref{def:approxqtarget} and applying the argument above we obtain
$\calA(D^0) \approx_{\eps,\delta} \calA(D^1)$. The claim follows using Corollary~\ref{approxmix:coro}.

Now for the converse. If $\mathbf{Z}^0 \approx_{\eps,\delta} \mathbf{Z}^1$ then consider the decomposition as in 
Lemma~\ref{relateapprox:lemma}.  Now we set $p=0$ and $\mathbf{B}^b \gets \mathbf{N}^b$ to obtain the claim with $q=1$ and the target being the full support.

For (ii), if $\calA \to \calY$ is $(\eps,\delta)$-DP then consider neighboring $\{D^0,D^1\}$. We have
$\calA(D^0) \approx_{\eps,\delta} \calA(D^1)$. We proceed as with the distributions.
\end{proof}

Algorithm~\ref{algo:targetchargeapprox} is an extension of 
Algorithm~\ref{algo:targetcharge} that permits calls to approximate DP algorithms.  The extension also inputs a bound $\tau$ on the number of target hits and a bound $\tau_\delta$ on the cummulative $\delta$ parameter values of the algorithms that were called.
We apply adaptively a sequence of $(\eps,\delta)$-DP algorithms with specified $q$-targets to the input data set $D$ and publish the results.  We halt when the first of the following happens (1) the respective target sets are hit for a specified $\tau$ number of times (2) the accumulated $\delta$-values exceed the specified limit $\tau_\delta$.

\begin{algorithm2e}[h]
\small{
    \caption{Target Charging with Approximate DP}
    \label{algo:targetchargeapprox}
    \DontPrintSemicolon
    \KwIn{
         Dataset $D = \{x_1,\dots, x_n\}\in X^n$. Integer $\tau\ge 1$ (Upper limit on the number of target hits). $\tau_\delta \geq 0$ (upper limit on cumulative $\delta$ parameter). Fraction $q\in[0,1]$. 
    }
    $C \gets 0$, $C_\delta \gets 0$ \tcp*{Initialize target hit and failure counters}
        \For(\tcp*[f]{Main loop}){$i=1,\ldots$}{
            \textbf{Receive} $(\calA_i,\top_i)$ where $\calA_i$  is an $(\eps,\delta_i)$-DP mechanism, and $\top_i$ is a $q$-target with $(\eps,\delta_i)$ for $\calA$\; 
            $r\gets \calA_i(D)$\;
            \lIf{$C_\delta + \delta_i > \tau_\delta$}{\bf{Halt}}
            $C_\delta \gets C_\delta + \delta$\tcp*{TCT charge for $\delta_i$}
           \textbf{Publish} $r$ \;
           \If(\tcp*[f]{TCT Charge for a $q$-target hit with $\eps$}){$r\in \top$}{$C\gets C+1$\; \lIf{$C=\tau$}{\bf{Halt}} }
        } 
        }
\end{algorithm2e}

\begin{algorithm2e}[h]
\small{
    \caption{Simulation of Target Charging}
    \label{algo:simultargetcharge}
    \DontPrintSemicolon
    \KwIn{
         Two neighboring datasets $D^0$, $D^1$, private $b\in \{0,1\}$, $\tau\in \mathbb{N}$, $\tau_\delta \in \mathbb{R}_{\geq 0}$,  $q\in[0,1]$, $\alpha>0$. 
    }
    $C \gets 0$, $C_\delta \gets 0$, $h\gets 0$ \tcp*{Initialize; $h$ is a counter on the number of non-fail calls to data holder}
        \For(\tcp*[f]{Main loop}){$i=1,\ldots$}{
            \textbf{Receive} $(\calA_i,\top_i)$ where $\calA_i$  is an $(\eps,\delta_i)$-DP mechanism, and $\top_i$ is a $q$-target with $(\eps,\delta_i)$ for $\calA$\; 
            \lIf{$C_\delta + \delta_i > \tau_\delta$}{\textbf{Halt}}
            $C_\delta \gets C_\delta + \delta$\;
            Let $p\in [0,1]$, $\mathbf{C}$, $\mathbf{B}^0\approx_\eps \mathbf{B}^1$, and $\mathbf{E}^b$ (for $b\in\{0,1\}$) such that
            $\calA(D^b) \equiv (1-\delta)\cdot ( p \cdot \mathbf{C} + (1-p) \cdot \mathbf{B}^b )+\delta \cdot \mathbf{E}^b$\tcp*{By Definition~\ref{def:approxqtarget}}
            \eIf(\tcp*[f]{Non-private Data Holder call with Failure}){$\Ber(\delta)\equiv 1$}{\textbf{Fail}\; \textbf{Publish} $r\sim \mathbf{E}^b$ }
            {
                \eIf{$\Ber(p)\equiv 1$}{\textbf{Publish} $r\sim \mathbf{C}$ \tcp*{No access to data holder}}
            {\textbf{Publish} $r\sim \mathbf{B}^b$ \tcp*{$\eps$-DP Data Holder Call}
            $h\gets h+1$\tcp*{counter of $\eps$-private data holder calls}
            \If(\tcp*[f]{Number of Holder calls exceeded limit}){$h> (1+\alpha)\tau/q$}{\textbf{Fail}}
            \If(\tcp*[f]{outcome is a target hit}){$r\in \top$}{$C\gets C+1$\; \lIf{$C=\tau$}{\bf{Halt}} }
            }
            }           
        }
        }
\end{algorithm2e}

The privacy cost of Target-Charging is as follows (This is a precise and more general statement of Lemma~\ref{metaprivacy:lemma}):
\begin{theorem}[Privacy of Target-Charging]\label{thm:TCprivacy}
Algorithm~\ref{algo:targetchargeapprox} satisfies the following approximate DP privacy bounds:
\begin{align*}
&\left( (1+\alpha)\frac{\tau}{q}\eps, C_\delta+ \delta^*(\tau,\alpha)\right) , & \text{for any $\alpha>0$;}\\
&\left( \frac{1}{2}(1+\alpha)\frac{\tau}{q} \eps^2  + \eps \sqrt{(1+\alpha)\frac{\tau}{q} \log(1/\delta)}, \delta + C_\delta+ \delta^*(\tau,\alpha) \right), & \text{for any $\delta>0$, $\alpha>0$.}
\end{align*}
where $\delta^*(\tau,\alpha) \leq e^{-\frac{\alpha^2}{2(1+\alpha)} \tau}$ and $C_\delta \leq \tau_\delta$ is as computed by the algorithm. 
\end{theorem}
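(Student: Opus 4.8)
The plan is to run the simulation-based privacy analysis of Lemma~\ref{lemma:intro-simulate} in combination with the failure-event framework of Lemma~\ref{failureevent:lemma}, formalizing the intuition sketched after Lemma~\ref{metaprivacy:lemma} and realized by Algorithm~\ref{algo:simultargetcharge}. Fix neighboring datasets $D^0,D^1$ and the private bit $b$. When the algorithm receives $(\calA_i,\top_i)$, apply Definition~\ref{def:approxqtarget} to write $\calA_i(D^b)\equiv(1-\delta_i)\bigl(p_i\cdot\mathbf{C}_i+(1-p_i)\cdot\mathbf{B}_i^b\bigr)+\delta_i\cdot\mathbf{E}_i^b$ with $\mathbf{B}_i^0\approx_\eps\mathbf{B}_i^1$ and $\min_b\Pr[\mathbf{B}_i^b\in\top_i]\ge q$; the simulator $\SSS$ (which is allowed to know $D^0,D^1$) reproduces this distribution by tossing the appropriate coins: with probability $\delta_i$ it asks the holder for a sample of $\mathbf{E}_i^b$ and designates the run a failure; otherwise with probability $p_i$ it outputs a sample of $\mathbf{C}_i$ with no access to the holder at all; otherwise it asks the holder for a sample of $\mathbf{B}_i^b$, which is the only genuinely $\eps$-DP query to the bit $b$. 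Since this matches $\calA_i(D^b)$ call-by-call, it remains to bound (i)~the privacy loss incurred by the $\mathbf{B}$-queries and (ii)~the total probability of a failure designation.

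For (i), I would additionally designate a failure whenever the running count $h$ of $\mathbf{B}$-queries exceeds $r:=(1+\alpha)\tau/q$, so that conditioned on no failure the whole transcript is a post-processing of at most $r$ interactive $\eps$-DP holder queries (the $\mathbf{C}_i$- and $\mathbf{E}_i^b$-steps do not depend on $b$, and one may pad with trivial $\eps$-DP steps up to $\lceil r\rceil$). Lemma~\ref{composition:lemma} then gives $(r\eps,0)$-DP by basic composition and, for any $\delta>0$, the advanced-composition bound with $r$ steps, which are the $\eps'$ expressions in the statement. The crux is to show $\Pr[h>r]\le\delta^*(\tau,\alpha):=e^{-\alpha^2\tau/(2(1+\alpha))}$. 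Each $\mathbf{B}$-query lands in its target with probability at least $q$ no matter what the adversary has done so far, so coupling each hit indicator with the independent $\Ber(q)$ variable $\ind[U_i<q]\le\ind[U_i<p_i]$ (fresh $U_i\sim\mathrm{Unif}[0,1]$) shows that the number of $\mathbf{B}$-queries needed for $\tau$ target hits is stochastically dominated by the first time $T_\tau$ that $\tau$ i.i.d.\ $\Ber(q)$ successes occur. Hence $\Pr[h>r]\le\Pr[T_\tau>r]$, the probability that $\lfloor r\rfloor$ i.i.d.\ $\Ber(q)$ trials --- whose success count has mean close to $qr=(1+\alpha)\tau$ --- yield fewer than $\tau=\bigl(1-\tfrac{\alpha}{1+\alpha}\bigr)(1+\alpha)\tau$ successes; the multiplicative Chernoff lower-tail bound with deviation $\gamma=\alpha/(1+\alpha)$ gives $\exp\bigl(-\tfrac{\gamma^2}{2}(1+\alpha)\tau\bigr)=\exp\bigl(-\tfrac{\alpha^2\tau}{2(1+\alpha)}\bigr)$, as claimed. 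Pinning down this constant (and absorbing the minor integrality loss into the stated ``$\le$'') is the only substantive step; the adaptivity is handled entirely by the coupling.

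For (ii), the $\delta_i$-failures occur with total probability $1-\prod_i(1-\delta_i)\le\sum_i\delta_i$, and by the halting rule ``$C_\delta+\delta_i>\tau_\delta$'' this sum equals the value $C_\delta\le\tau_\delta$ computed by the algorithm; together with the $h$-overflow failure this gives overall failure probability at most $C_\delta+\delta^*(\tau,\alpha)$. Plugging the conditional privacy bound from (i) and this failure probability into Lemma~\ref{failureevent:lemma} --- and, for the second bound, adding the extra $\delta$ coming from advanced composition --- yields the two pairs $\bigl((1+\alpha)\tfrac{\tau}{q}\eps,\,C_\delta+\delta^*(\tau,\alpha)\bigr)$ and $\bigl(\tfrac12(1+\alpha)\tfrac\tau q\eps^2+\eps\sqrt{(1+\alpha)\tfrac\tau q\log(1/\delta)},\,\delta+C_\delta+\delta^*(\tau,\alpha)\bigr)$ of the statement. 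I would also remark that $C_\delta$ is itself a random quantity bounded by $\tau_\delta$, so the claim is really against its deterministic upper bound, and that the case of the interaction never halting is covered by the $h$-overflow failure event.
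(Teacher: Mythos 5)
Your proposal is correct and follows essentially the same route as the paper's own proof: a simulation in which the mixture decomposition of Definition~\ref{def:approxqtarget} is sampled coin-by-coin (with $\delta_i$-failures and an overflow failure once the number of genuine $\eps$-DP holder calls exceeds $(1+\alpha)\tau/q$), stochastic domination of that count by i.i.d.\ $\Ber(q)$ trials, the multiplicative Chernoff lower tail giving $e^{-\alpha^2\tau/(2(1+\alpha))}$, and then basic/advanced composition over at most $(1+\alpha)\tau/q$ calls via Lemmas~\ref{failureevent:lemma} and~\ref{composition:lemma}. Your explicit coupling argument for the domination under adaptivity is a slightly more careful rendering of a step the paper states without proof, but the argument is the same.
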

\begin{proof}
We apply the simulation-based privacy analysis in Lemma~\ref{lemma:intro-simulate} and use privacy analysis with
failure events (Lemma~\ref{failureevent:lemma}).

The simulation is described in Algorithm~\ref{algo:simultargetcharge}.
Fix two neighboring data sets $D^0$ and $D^1$.
The simulator initializes the target hit counter $C\gets 0$ and the cumulative $\delta$-values tracker $C_\delta \gets 0$. 
For $i\geq 1$ it proceeds as follows. It receives  $(\calA_i,\top_i)$ where $\calA_i$ is $(\eps,\delta_i)$-DP. 
If $C_\delta+\delta_i > \tau_\delta$ it halts.
Since $\top_i$ is a $q$-target for $\calA_i$, there are 
$p$, $\mathbf{C}$, $\mathbf{B}^0$, $\mathbf{B}^1$, $\mathbf{E}^0$ and $\mathbf{E}^1$ as in Definition~\ref{def:approxqtarget}.
The simulator flips a biased coin $c' \sim \Ber(\delta)$.
If $c' = 1$ it outputs $r\sim \mathbf{E}^b$ and the execution is designated as \textbf{Fail}. In this case there is an interaction with the data holder but also a failure designation.
The simulator flips a biased coin $c\sim \Ber(p)$. If $c = 1$, then the simulator publishes a sample $r\sim \mathbf{C}$ (this does not require an interaction with the data holder). 
Otherwise, the data holder is called.  The data holder publishes $r \sim \mathbf{B^b}$. We track the number $h$ of calls to the data holder. If $h$ exceeds $(1+\alpha)\tau/q$, we designate the execution as \textbf{Fail}. 
If $r\in \top_i$ then $C$ is incremented. If $C=\tau$, the algorithm halts.

The correctness of the simulation (faithfully simulating Algorithm~\ref{algo:targetcharge} on the dataset $D^b$) is straightforward. We analyse the privacy cost.
We will show that 
\begin{itemize}
    \item [(i)] the simulation designated a failure with probability at most $C_\delta+ \delta^*(\tau,\alpha)$. 
    \item [(ii)]
    Conditioned on no failure designation, the simulation performed at most $r= (1+\alpha) \frac{\tau}{q}$ adaptive calls to 
    $(\eps,0)$-DP algorithms
\end{itemize}
Observe that (ii) is immediate from the simulation declaring failure when $h>r$.  We will establish (i) below.

The statement of the Theorem follows from Lemma~\ref{failureevent:lemma} and when applying the
DP composition bounds (Lemma~\ref{composition:lemma}).
The first bounds follow using basic composition and the second follow using advanced composition \cite{DBLP:conf/focs/DworkRV10}.

This analysis yields the claimed privacy bounds with respect to the private bit $b$.  From Lemma~\ref{lemma:intro-simulate} this is the privacy cost of the algorithm.




It remains to show bound the failure probability.  There are two ways in which a failure can occur.  The first is on each call, with probability $\delta_i$.  This probability is bounded by 
$1-\prod_i \delta_i \leq \sum_i \delta_i \leq C_\delta$.  The second is when the number $h$ of private accesses to the data holder exceeds the limit.  We show that the probability that the algorithm halts with failure due to that is at most $\delta^*$.

We consider a process that continues until $\tau$ charges are made. 
The privacy cost of the simulation (with respect to the private bit $b$) depends on the number of times that the data holder is called. 
Let $X$ be the random variable that is the number of calls to the data holder.
Each call is $\eps$-DP with respect to the private $b$.
In each call, there is probability at least $q$ for a ``charge''  (increment of $C$).

A failure is the event 
that the number of calls to data holder exceeds $(1+\alpha) \tau/q$ before $\tau$ charges are made. We show that this occurs with 
probability at most $\delta^*(\tau,\alpha)$:
 \begin{equation}\label{boundeq}
\Pr\left[X > (1+\alpha) \frac{\tau}{q}\right] \leq \delta^*(\tau,\alpha) \ .
\end{equation}

To establish~\eqref{boundeq}, we first observe that the distribution of the random variable $X$ is dominated by a random variable $X'$ that corresponds to a process of drawing i.i.d.\ $\Ber(q)$ until we get $\tau$ successes
(Domination means that for all $m$, $\Pr[X'>m] \geq \Pr[X>m]$).
Therefore, it suffices to establish that
\[
\Pr\left[X'> (1+\alpha) \frac{\tau}{q} \right] \leq \delta^*(\tau,\alpha) \ .
\]

Let $Y$ be the random variable that is a sum of $m= 1+ \left\lfloor (1+\alpha) \frac{\tau}{q}\right\rfloor$ i.i.d.\ $\Ber(q)$ random variables. 
Note that 
\[
\Pr\left[X'> (1+\alpha) \frac{\tau}{q}\right] = \Pr[Y<\tau]\ .
\]

We bound $\Pr[Y<\tau]$ using
multiplicative Chernoff bounds~\cite{Chernoff52}\footnote{Bound can be tightened when using precise tail probability values.}. 
The expectation is $\mu = mq$ and we bound the probability that the sum of Bernoulli random variables is below $\frac{1}{1+\alpha} \mu = (1- \frac{\alpha}{1+\alpha})\mu$.
Using the simpler form of the bounds we get using $\mu=mq \geq (1+\alpha)\tau$
\[
\Pr[Y<\tau]= \Pr[Y< (1- \frac{\alpha}{1+\alpha})\mu]\leq e^{-\frac{\alpha^2}{2 (1+\alpha)^2} \mu} \leq   e^{-\frac{\alpha^2}{2(1+\alpha)} \tau}\ .
\]
\end{proof}

\begin{remark} [Number of target hits]
The TCT privacy analysis has a tradeoff between the final ``$\eps$'' and ``$\delta$'' privacy parameters.  
There is multiplicative factor of $(1+\alpha)$ ($\sqrt{1+\alpha}$ with advanced composition) on the ``$\eps$'' privacy parameter. But when we use a smaller $\alpha$ we need a larger value of $\tau$ to keep the ``$\delta$'' privacy parameter small.
For a given $\alpha,\delta^*>0$, we can calculate a bound on the smallest value of $\tau$ that works. 
We get
\begin{align*}
    \tau &\geq 2 \frac{1+\alpha}{\alpha^2}\cdot \ln(1/\delta^*) & \text{(simplified Chernoff)} \\
    \tau &\geq  \frac{1}{(1+\alpha)\ln\left(e^{\alpha/(1+\alpha)}(1+\alpha)^{-1/(1+\alpha)}\right) }\cdot \ln(1/\delta^*) & \text{(raw Chernoff)}
\end{align*}
 For $\alpha=0.5$ we get
    $\tau > 10.6 \cdot \ln(1/\delta^*)$.  For $\alpha=1$ we get
    $\tau > 3.26 \cdot \ln(1/\delta^*)$. For $\alpha=5$ we get
    $\tau > 0.31 \cdot \ln(1/\delta^*)$.
    
\end{remark}

\ignore{
\edith{I commented out the below, as I am not sure it adds clarity}
\begin{remark}[Utility and selection of $\top$ and $q$]
The privacy cost of Target-Charging scales with $\tau/q$ (or $\sqrt{\tau/q}$ with advanced composition). Hence, for a given target set $\top$ of outcomes, we seek $q$ that is as large as possible. 
The utility of Target-Charging is higher when we can perform more computations before the budget $\tau$ is exceeded.  
The number of computations depends inversely on the 
``typical'' (over algorithms and data sets) probability that an outcome lies in the target $\Pr[\calM(D)\in \top]$.  Therefore, for higher utility (with a given $q$) we seek smaller targets.
To summarize, to maximize utility (the number of $\eps$-DP computations we perform) for a given privacy budget we seek to choose $q$ and $\top$ as to minimize $\Pr[\calM(D)\in \top]/q$.
\end{remark}
}

\begin{remark} [Mix-and-match TCT]
TCT analysis can be extended to the case where we use algorithms with varied privacy guarantees $\eps_i$ and varied $q_i$ values.\footnote{One of our applications of revise calls to conditional release (see Section~\ref{CondRelease:sec} applies TCT with both $\eps$-DP and $2\eps$-DP algorithms even for base $\eps$-DP algorithm)} In this case the privacy cost depends on
$\sum_{i \mid \calA_i(D) \in \top_i} \frac{\eps_i}{q_i}$. The analysis 
relies on tail bounds on the sum of random variables, is more complex.  Varied $\eps$ values means the random variables have different size supports. A simple coarse bound is according to the largest support, which allows us to use a simple counter for target hits, but may be lossy with respect to precise bounds. The discussion concerns the (analytical or numerical) derivation of tail bounds is non-specific to TCT and is tangential to our contribution.
\end{remark}

\subsection{Multi-Target TCT}  \label{multiTCT:sec}

\begin{algorithm2e}[H]
    \caption{Multi-Target Charging}
    \label{algo:ktargetcharge}
    \DontPrintSemicolon
    \KwIn{
         Dataset $D = \{x_1,\dots, x_n\}\in X^n$. Integer $\tau\ge 1$ (charging limit). Fraction $q\in[0,1]$, $k\geq 1$ (number of targets). 
    }
    \lFor(\tcp*[f]{Initialize charge counters}) {$i\in [k]$}{$C_i \gets 0$}
        \While(\tcp*[f]{Main loop}){$\min_{i\in [k]}C_i< \tau$}{
            {\bf Receive} $(\calA,(\top_i)_{i\in[k]})$ where $\calA$  is an $\eps$-DP mechanism, and $\top_i$ is a $q$-target for $\calA$\;
            $r\gets \calA(D)$\;
           \textbf{Publish} $r$ \;
           \For{$i\in[k]$}{
           \lIf(\tcp*[f]{outcome is in $q$-target $\top_i$}){$r\in \top_i$}{$C_i\gets C_i+1$}}
        }        
\end{algorithm2e}

\begin{proof}[Proof of Lemma~\ref{lemma:kTCprivacy} (Privacy of multi-Target TCT)]
\footnote{We note that the claim generally holds for online privacy analysis with the best of multiple methods. We provide a proof specific to multi-target charging below.}Let $(\eps,\delta)$ be the privacy bounds for $\calM_i$ that is single-target TCT with $(\calA_i,\top_i)$.
Let $\calM$ be the $k$-target algorithm.
Let $\top^j_i$ be the $i$th target in step $j$.

We say that an outcome sequence $R=(r_j)_{j=1}^h\in R$ is 
valid for $i\in [k]$ if and only if
$\calM_i$ would halt with this output sequence, that is, 
$\sum_{j=1}^h \mathbf{1}\{r_j\in \top^j_i\} = \tau$ and $r_h \in \top^h_i$.
We define $G(R)\subset [k]$ to be all $i\in [k]$ for which $R$ is valid.

Consider a set of sequences $H$. Partition $H$ into $k+1$ sets $H_i$ so that $H_0 = \{R\in H \mid G(R)=\emptyset\}$ and
$H_i$ may only include $R\in H$ for which $i\in G(R)$.
That is, $H_0$ contains all sequences that are not valid for any $i$ and $H_i$ may contain only sequences that are valid for $i$.
\begin{align*}
\Pr[\calM(D)\in H] &= \sum_{i=1}^k \Pr[\calM(D)\in H_i] = \sum_{i=1}^k \Pr[\calM_i(D)\in H_i] \\
&\leq \sum_{i=1}^k \big(e^\eps \cdot \Pr[\calM_i(D')\in H_i] + \delta\big) =
e^\eps \cdot \sum_{i=1}^k \Pr[\calM_i(D')\in H_i] + k\cdot \delta \\ &=
e^\eps \Pr[\calM(D')\in H] + k\cdot\delta.
\end{align*}
\end{proof}

\section{Properties of \texttt{NotPrior} targets} \label{notprior:sec}

Recall that a \NotPrior target of an 
$(\eps,\delta)$-DP algorithm is specified by any potential outcome (of our choice) that we  denote by $\bot$. The 
\NotPrior target is the set of all outcomes except $\bot$.  
In this Section we prove (a more general statement of) Lemma~\ref{lemma:NotPriorprivacy}:
\begin{lemma} [Property of a \NotPrior target] 
Let $\calM : X \to \calY \cup \{ \bot\}$, where $\bot\not\in \calY$, be an $(\eps,\delta)$-DP
algorithm.  Then the set of outcomes $\calY$ constitutes an 
$\frac{1}{e^{\eps}+1}$-target with $(\eps,\delta)$ for $\calM$.
\end{lemma}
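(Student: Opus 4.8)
The plan is to first peel off the approximate-DP slack and reduce to the pure case, and then write down an explicit mixture. By Corollary~\ref{approxmix:coro}, for any neighboring $D^0,D^1$ we may write $\calM(D^b)\equiv(1-\delta)\mathbf{N}^b+\delta\mathbf{E}^b$ with $\mathbf{N}^0\approx_\eps\mathbf{N}^1$ on $\calY\cup\{\bot\}$. Since Definition~\ref{def:approxqtargetdist} asks exactly for a representation $\calM(D^b)\equiv(1-\delta)(p\mathbf{C}+(1-p)\mathbf{B}^b)+\delta\mathbf{E}^b$, it is enough to further decompose each $\mathbf{N}^b$ as $p\mathbf{C}+(1-p)\mathbf{B}^b$ with $\mathbf{B}^0\approx_\eps\mathbf{B}^1$ and $\min_b\Pr[\mathbf{B}^b\in\calY]\ge\frac{1}{e^\eps+1}$; the components $\mathbf{E}^0,\mathbf{E}^1$ and the weight $\delta$ are then inherited unchanged.

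For the pure part I would take $\mathbf{C}$ to be the point mass at $\bot$ — intuitively the right choice, since moving any probability off $\calY$ into the shared component $\mathbf{C}$ could only shrink the target mass of the residuals. Write $a_b:=\Pr[\mathbf{N}^b=\bot]$ and assume without loss of generality $a_0\le a_1$ (otherwise swap the two datasets, which is harmless by symmetry of Definition~\ref{def:approxqtargetdist}). Set $p:=\frac{e^\eps a_0-a_1}{e^\eps-1}$; this lies in $[0,a_0]$ because $\mathbf{N}^0\approx_\eps\mathbf{N}^1$ gives $a_1\le e^\eps a_0$ (so $p\ge 0$) and $a_0\le a_1$ (so $p\le a_0$). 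Define $\mathbf{B}^b:=\frac{\mathbf{N}^b-p\,\mathbf{C}}{1-p}$, a genuine distribution since $p\le a_0\le a_b$. The crucial verification is $\mathbf{B}^0\approx_\eps\mathbf{B}^1$: for an event $S$, the contribution of $S\cap\calY$ to $\Pr[\mathbf{B}^b\in S]$ is the corresponding mass of $\mathbf{N}^b$ rescaled by $1/(1-p)$, so it already obeys the $e^{\pm\eps}$ ratio inherited from $\mathbf{N}^0\approx_\eps\mathbf{N}^1$; it remains only to control the atom at $\bot$, i.e.\ to check that $a_b-p$ and $a_{1-b}-p$ are within a factor $e^\eps$. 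With $a_0\le a_1$, the inequality $a_0-p\le e^\eps(a_1-p)$ holds for every $p\le a_0$, while $a_1-p\le e^\eps(a_0-p)$ is equivalent to $p(e^\eps-1)\le e^\eps a_0-a_1$, which holds with equality by our choice of $p$.

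It remains to evaluate $q$. Since $a_1\ge a_0$, the smaller target mass is $\Pr[\mathbf{B}^1\in\calY]=\frac{1-a_1}{1-p}$, and substituting $1-p=\frac{e^\eps(1-a_0)-(1-a_1)}{e^\eps-1}$ gives $q=\frac{(1-a_1)(e^\eps-1)}{e^\eps(1-a_0)-(1-a_1)}$. Clearing denominators, the target bound $q\ge\frac{1}{e^\eps+1}$ simplifies to $(1-a_1)e^{2\eps}\ge e^\eps(1-a_0)$, i.e.\ to $1-a_0\le e^\eps(1-a_1)$ — and this is exactly $\Pr[\mathbf{N}^0\in\calY]\le e^\eps\Pr[\mathbf{N}^1\in\calY]$, which is immediate from $\mathbf{N}^0\approx_\eps\mathbf{N}^1$. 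This closes the argument in the generic case.

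The part I expect to require the most care is the two-sided indistinguishability bookkeeping: keeping the direction of the ``without loss of generality'' ($a_0\le a_1$) consistent with which atom inequality is slack and which is tight, and checking that the single value $p=\frac{e^\eps a_0-a_1}{e^\eps-1}$ is simultaneously nonnegative, at most $a_0$, and extremal for indistinguishability. One should also dispatch the degenerate case $a_1=1$ (equivalently $\Pr[\mathbf{N}^1\in\calY]=0$), where the DP inequality forces $a_0=1$ as well, both residuals collapse to the point mass at $\bot$, and $\calY$ is literally only a $0$-target; this can occur only when $\calM$ outputs $\bot$ almost surely on both datasets (its $\calY$-mass, at most $\delta$, having been absorbed into $\mathbf{E}^0,\mathbf{E}^1$), a case in which there is nothing to charge and the claim is vacuous.
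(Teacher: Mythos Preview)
Your argument is correct and is essentially the paper's own proof: the same reduction to the pure part via Corollary~\ref{approxmix:coro}, the same choice $\mathbf{C}=\delta_\bot$, and (after translating your $a_b=\Pr[\mathbf{N}^b=\bot]$ into the paper's $\pi,\pi'=\Pr[\cdot\in\calY]$) exactly the same value of $p$ and the same residuals $\mathbf{B}^b$; the paper additionally routes through the binary case (Lemma~\ref{lemma:binaryFnP}) first, while you go directly to general $\calY$, which is a harmless streamlining. One small correction to your degenerate-case remark: when $a_0=a_1=1$ your formula gives $p=1$, and then $\mathbf{B}^0,\mathbf{B}^1$ are \emph{unconstrained} by the mixture identity (not forced to $\delta_\bot$), so you may simply take them to be any fixed distribution supported on $\calY$ and the target condition holds with $q=1$; the claim is not ``vacuous'' but trivially satisfied.
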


We will use the following lemma:
\begin{lemma} \label{lemma:limitednotprior}
    If two distributions $\mathbf{Z}^0$, $\mathbf{Z}^1$ with support
$\calY \cup \{ \bot\}$ satisfy $\mathbf{Z}^0 \approx_\eps \mathbf{Z}^1$ then $\calY$ constitutes an 
$\frac{1}{e^{\eps}+1}$-target with $(\eps,0)$ for 
$(\mathbf{Z}^0,\mathbf{Z}^1)$.
\end{lemma}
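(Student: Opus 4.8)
The plan is to exhibit explicitly the mixture decomposition required by Definition~\ref{def:approxqtargetdist} with $\delta=0$, taking the common component $\mathbf{C}$ to be the point mass $\delta_\bot$ at $\bot$ and choosing the mixing weight $p$ as large as possible subject to the residuals staying $\eps$-indistinguishable. Write $z_b:=\Pr[\mathbf{Z}^b=\bot]$ for $b\in\{0,1\}$. Since the conditions ``$\mathbf{B}^0\approx_\eps\mathbf{B}^1$'' and ``$\min(\Pr[\mathbf{B}^0\in\calY],\Pr[\mathbf{B}^1\in\calY])\ge q$'' are symmetric under swapping the two indices, I may assume $z_0\le z_1$. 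Applying the hypothesis $\mathbf{Z}^0\approx_\eps\mathbf{Z}^1$ to the sets $\{\bot\}$ and $\calY$ records the two facts I will need: $z_0\ge e^{-\eps}z_1$ and $1-z_0\le e^{\eps}(1-z_1)$. (If $z_1=1$ then $\mathbf{Z}^0\approx_\eps\mathbf{Z}^1$ forces $\mathbf{Z}^0=\mathbf{Z}^1=\delta_\bot$ and the claim is trivial with $p=1$, so I assume $z_1<1$.)

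The tempting choice $p=z_0$ (extract all the shared $\bot$-mass) fails, since it would leave $\mathbf{B}^0$ with no mass on $\bot$ while $\mathbf{B}^1$ still carries mass $(z_1-z_0)/(1-z_0)>0$ there, violating $\mathbf{B}^0\approx_\eps\mathbf{B}^1$. Instead I would set
\[
p:=\frac{z_0-e^{-\eps}z_1}{1-e^{-\eps}},\qquad \mathbf{C}:=\delta_\bot,\qquad \mathbf{B}^b:=\frac{\mathbf{Z}^b-p\,\delta_\bot}{1-p}\quad(b\in\{0,1\}),
\]
which is precisely the value of $p$ making the residual $\bot$-masses satisfy $z_0-p=e^{-\eps}(z_1-p)$. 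Then I would check the sanity conditions: $p\ge 0$ follows from $z_0\ge e^{-\eps}z_1$, and $p\le z_0$ follows from $z_0\le z_1$ (so $z_0-e^{-\eps}z_1\le z_0(1-e^{-\eps})$); hence $\mathbf{B}^0,\mathbf{B}^1$ are genuine probability distributions and $p\,\mathbf{C}+(1-p)\mathbf{B}^b\equiv\mathbf{Z}^b$ holds by construction.

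It then remains to verify the two defining properties. For $\mathbf{B}^0\approx_\eps\mathbf{B}^1$: at any outcome $y\in\calY$ the mass (or density) ratio is $\mathbf{B}^0(y)/\mathbf{B}^1(y)=\mathbf{Z}^0(y)/\mathbf{Z}^1(y)\in[e^{-\eps},e^{\eps}]$ because $\mathbf{Z}^0\approx_\eps\mathbf{Z}^1$, while at $\bot$ the ratio is $(z_0-p)/(z_1-p)=e^{-\eps}$ by the choice of $p$ (or both are $0$, when $z_0=z_1=p$); as these distributions are supported on $\calY\cup\{\bot\}$, this yields $\mathbf{B}^0\approx_\eps\mathbf{B}^1$. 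For the $q$-bound, $\Pr[\mathbf{B}^b\in\calY]=(1-z_b)/(1-p)$, which is minimized at $b=1$ since $z_0\le z_1$; substituting $1-p=\bigl(1-e^{-\eps}-z_0+e^{-\eps}z_1\bigr)/(1-e^{-\eps})$ and simplifying (elementary algebra), the inequality $(1-z_1)/(1-p)\ge\tfrac{1}{e^{\eps}+1}$ is seen to be equivalent to $e^{\eps}(1-z_1)\ge 1-z_0$, which is exactly the second recorded consequence of $\mathbf{Z}^0\approx_\eps\mathbf{Z}^1$. I would close by noting that the pair $\Pr[\mathbf{Z}^0=y]=\tfrac{1}{e^\eps+1}$, $\Pr[\mathbf{Z}^1=y]=\tfrac{e^\eps}{e^\eps+1}$ (on $\calY=\{y\}$) shows $q=\tfrac{1}{e^\eps+1}$ is best possible, to justify why one cannot hope for more.

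The only real obstacle I anticipate is identifying the correct $p$: one must see that the greedy ``pull out $\min(z_0,z_1)$ of $\bot$-mass'' is too aggressive and that the admissible amount is pinned down by forcing the residual $\bot$-ratio to saturate $e^{-\eps}$. Once $p$ is chosen this way, both the $\eps$-indistinguishability of the residuals and the $q$-estimate drop out of the two one-sided consequences of the hypothesis (one applied to $\{\bot\}$, one to $\calY$), with nothing but routine algebra in between.
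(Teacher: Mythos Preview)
Your proof is correct and, after unwinding the notation, uses exactly the same decomposition as the paper: your $p=\frac{z_0-e^{-\eps}z_1}{1-e^{-\eps}}$ coincides with the paper's $p=1-\frac{\pi' e^{\eps}-\pi}{e^{\eps}-1}$ (with $\pi=1-z_1$, $\pi'=1-z_0$ under the index swap), and your residuals $\mathbf{B}^b=(\mathbf{Z}^b-p\,\delta_\bot)/(1-p)$ are precisely the distributions the paper describes piecewise as ``$\bot$ with probability $\frac{\pi'-\pi}{\pi'-e^{-\eps}\pi}$, else $\mathbf{Z}^b\mid\calY$''. The only presentational differences are that you write the residuals in closed form rather than as explicit mixtures, and you treat the general $\calY$ directly rather than first proving a binary special case (the paper's Lemma~\ref{lemma:binaryFnP}); both make your write-up somewhat tidier, but the argument is the same.
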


\begin{proof}[Proof of Lemma~\ref{lemma:NotPriorprivacy}]
From Definition~\ref{def:approxqtarget}, it suffices to show that
for any two neighboring datasets,  $D^0$ and $D^1$, 
the set $\calY$ is an 
$\frac{1}{e^{\eps}+1}$-target with $(\eps,\delta)$ for
$(\calM(D^0),\calM(D^1))$ (as in Definition~\ref{def:approxqtargetdist}).

Consider two neighboring datasets. We have
$\calM(D^0) \approx_{\eps,\delta} \calM(D^1)$.
Using Lemma~\ref{relateapprox:lemma}, for $b\in \{0,1\}$ we can
have
\begin{equation}\label{eq:dec}
    \calM(D^b) = (1-\delta)\cdot \mathbf{N}^b + \delta\cdot \mathbf{E}^b,
\end{equation}
 where $\mathbf{N}^0 \approx_\eps \mathbf{N}^1$.
 From Lemma~\ref{lemma:limitednotprior},
 $\calY$ is a $\frac{1}{e^{\eps}+1}$-target with $(\eps,0)$ for
 $(\mathbf{N}^0,\mathbf{N}^1)$.
 From Definition~\ref{def:approxqtargetdist} and \eqref{eq:dec}, this means that $\calY$ is a $\frac{1}{e^{\eps}+1}$-target with $(\eps,\delta)$ for 
$(\calM(D^0),\calM(D^1))$.
\end{proof}


\subsection{Proof of Lemma~\ref{lemma:limitednotprior}}

We first prove Lemma~\ref{lemma:limitednotprior} for the special case of private testing (when the support is $\{0,1\}$):
\begin{lemma} [target for private testing] \label{lemma:binaryFnP}
Let $\mathbf{Z}^0$ and $\mathbf{Z}^1$ with support $\{0,1\}$ 
satisfy $\mathbf{Z}^0 \approx_\eps \mathbf{Z}^1$
Then $\top=\{1\}$ (or $\top=\{0\}$) is an  $\frac{1}{e^{\eps}+1}$-target with $(\eps,0)$ for $(\mathbf{Z}^0,\mathbf{Z}^1)$.
\end{lemma}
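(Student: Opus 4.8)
The plan is to prove Lemma~\ref{lemma:binaryFnP} by an explicit construction of the mixture decomposition required by Definition~\ref{def:approxqtargetdist} (with $\delta=0$). Write $a_b = \Pr[\mathbf{Z}^b = 1]$ for $b\in\{0,1\}$. Without loss of generality assume $a_0 \le a_1$; then $\eps$-indistinguishability gives $a_1 \le e^\eps a_0$ and $1-a_0 \le e^\eps (1-a_1)$. The idea is to ``peel off'' a common component $\mathbf{C}$ that absorbs as much of the shared mass as possible, leaving residual distributions $\mathbf{B}^0,\mathbf{B}^1$ that are still $\eps$-indistinguishable but are pushed as far as possible toward putting mass on the target outcome $\top=\{1\}$.

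Concretely, I would set $p$ so that the common part $\mathbf{C}$ is (a normalization of) the pointwise minimum of the two distributions, i.e.\ $p = \min(a_0,a_1) + \min(1-a_0, 1-a_1) = a_0 + (1-a_1)$ and $\mathbf{C}$ is the distribution putting mass $a_0/p$ on $1$ and $(1-a_1)/p$ on $0$. The residuals then satisfy: $\mathbf{B}^1$ puts all its mass on $1$ (since $\mathbf{Z}^1$ had the larger probability of $1$), while $\mathbf{B}^0$ puts all its mass on $0$. That is too extreme — $\mathbf{B}^0 \approx_\eps \mathbf{B}^1$ fails, and also $\Pr[\mathbf{B}^0 \in \top] = 0$. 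So instead I peel off less: I want the largest $p$ such that the leftover $(1-p)\mathbf{B}^0$ and $(1-p)\mathbf{B}^1$ can still be chosen $\eps$-close with each having target-probability $\ge q$. The natural choice is to make $\mathbf{B}^0$ and $\mathbf{B}^1$ the two ``extreme'' $\eps$-close distributions: $\mathbf{B}^1 = \Ber(\beta)$ and $\mathbf{B}^0 = \Ber(\beta/e^\eps)$ where $\beta$ is as large as possible, i.e.\ $\mathbf{B}^0$ has the minimum target probability consistent with $\eps$-closeness to $\mathbf{B}^1$. Setting $\beta = 1$ forces $\mathbf{B}^0 = \Ber(e^{-\eps})$, and a short computation checks that $\Pr[\mathbf{B}^0\in\top] = e^{-\eps} = \frac{1}{e^\eps} \ge \frac{1}{e^\eps+1} = q$; wait — one must be careful here, and the correct extreme is actually $\mathbf{B}^1 = \Ber(1), \mathbf{B}^0 = \Ber(1/(1+(e^\eps-1)\cdot\text{something}))$. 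Rather than guess, I would solve for the decomposition directly: require $\mathbf{Z}^b = p\mathbf{C} + (1-p)\mathbf{B}^b$ with $\mathbf{B}^0 = \Ber(u_0)$, $\mathbf{B}^1 = \Ber(u_1)$, $e^{-\eps} \le u_0/u_1 \le e^\eps$ and $e^{-\eps}\le (1-u_0)/(1-u_1)\le e^\eps$, and maximize $p$ subject to $\min(u_0,u_1) \ge q$. The worst case (smallest achievable $q$) is exactly when the constraints are tight, and the value $q = \frac{1}{e^\eps+1}$ should drop out as $\min(u_0,u_1)$ at the optimum, with $u_0 = \frac{1}{e^\eps+1}$, $u_1 = \frac{e^\eps}{e^\eps+1}$ being the symmetric pair at ratio exactly $e^\eps$ in both coordinates.

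After establishing the binary case, I would lift to general support $\calY \cup \{\bot\}$ (Lemma~\ref{lemma:limitednotprior}) by a collapsing argument: given $\mathbf{Z}^0 \approx_\eps \mathbf{Z}^1$ on $\calY\cup\{\bot\}$, consider the coarsening map $\phi$ that sends $\bot \mapsto 0$ and every $y\in\calY \mapsto 1$. Post-processing preserves $\eps$-indistinguishability, so $\phi(\mathbf{Z}^0) \approx_\eps \phi(\mathbf{Z}^1)$ on $\{0,1\}$, and the binary lemma gives a decomposition $\phi(\mathbf{Z}^b) = p\bar{\mathbf{C}} + (1-p)\bar{\mathbf{B}}^b$. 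I then need to ``pull back'' this decomposition through $\phi$. This is where I would condition: conditioned on $\phi$-value $0$ both $\mathbf{Z}^0$ and $\mathbf{Z}^1$ are the point mass $\bot$; conditioned on $\phi$-value $1$, let $\mathbf{W}^b$ be the conditional distribution of $\mathbf{Z}^b$ on $\calY$. I can then define $\mathbf{C}$, $\mathbf{B}^0$, $\mathbf{B}^1$ over $\calY\cup\{\bot\}$ by distributing mass on $\bot$ according to the binary decomposition and mass on $\calY$ appropriately — but I must ensure $\mathbf{B}^0 \approx_\eps \mathbf{B}^1$ holds on all of $\calY\cup\{\bot\}$, not just after coarsening. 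The clean way: put \emph{all} the ``$\calY$-mass'' into $\mathbf{C}$ when possible. Since in the binary decomposition $\bar{\mathbf{C}}$ and one of the $\bar{\mathbf{B}}^b$ may carry mass on $1$, I route the $\calY$-portions carefully so that $\mathbf{B}^0,\mathbf{B}^1$ restricted to $\calY$ are proportional to the same reference distribution and differ only in total $\calY$-mass by a factor within $e^{\pm\eps}$; the $\eps$-closeness on $\calY$ then follows for free and on $\{\bot\}$ follows from the binary lemma. This needs the key observation that $\mathbf{W}^0$ and $\mathbf{W}^1$ are themselves $\eps$-close (conditioning on an event doesn't blow up the ratio by more than... actually it can, so I'd instead avoid conditioning on $\mathbf{W}$ and work with the joint mass directly).

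I expect the main obstacle to be precisely this pull-back step: getting a decomposition on the full space $\calY\cup\{\bot\}$ whose residual parts $\mathbf{B}^0,\mathbf{B}^1$ are genuinely $(\eps,0)$-indistinguishable as distributions on $\calY\cup\{\bot\}$, rather than merely after collapsing $\calY$ to a point. The binary lemma controls only the $\bot$-versus-not-$\bot$ split. The fix is to notice that we have freedom in how we assign the non-$\bot$ mass between $\mathbf{C}$ and the $\mathbf{B}^b$: I would put as much of the shared non-$\bot$ structure into $\mathbf{C}$ as allowed, so that each $\mathbf{B}^b$ either has no mass on $\caly$ for parts where $\mathbf{Z}^0,\mathbf{Z}^1$ agree, or carries only an ``excess'' that is a scalar multiple of a fixed distribution — making the ratio test on $\calY$ trivially bounded by $e^\eps$. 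I would verify the three conditions of Definition~\ref{def:approxqtargetdist} hold with $q = \frac{1}{e^\eps+1}$, noting that collapsing $\calY$ only \emph{increases} $\Pr[\mathbf{B}^b \in \top]$ relative to the binary value, so the bound $\min_b \Pr[\mathbf{B}^b\in\calY] \ge \frac{1}{e^\eps+1}$ is inherited. Then Lemma~\ref{lemma:NotPriorprivacy} follows from Lemma~\ref{lemma:limitednotprior} exactly as written in the excerpt, via the $\delta$-mixture decomposition of Corollary~\ref{approxmix:coro}/Lemma~\ref{relateapprox:lemma}.
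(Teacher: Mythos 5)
Your high-level plan is the same as the paper's: exhibit, for each admissible pair, an explicit decomposition $\mathbf{Z}^b \equiv p\cdot\mathbf{C}+(1-p)\cdot\Ber(u_b)$ with $\Ber(u_0)\approx_\eps\Ber(u_1)$ and $\min(u_0,u_1)\ge \tfrac{1}{e^\eps+1}$, and your guess for the extremal residual pair $u_0=\tfrac{1}{e^\eps+1}$, $u_1=\tfrac{e^\eps}{e^\eps+1}$ is exactly what the paper's construction yields when $a_1=e^\eps a_0$. But the proposal stops where the proof begins. The statement that "$q=\tfrac{1}{e^\eps+1}$ should drop out" at the optimum of your program is precisely the content of the lemma; you never exhibit a valid decomposition for a general pair $(a_0,a_1)$ with $a_0\le a_1\le e^\eps a_0$ and $1-a_0\le e^\eps(1-a_1)$, nor verify any of the three conditions of Definition~\ref{def:approxqtargetdist}. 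The one concrete residual pair you do write down, $\mathbf{B}^1=\Ber(1)$ and $\mathbf{B}^0=\Ber(e^{-\eps})$, is not $(\eps,0)$-indistinguishable ($\Pr[\Ber(1)=0]=0$ while $\Pr[\Ber(e^{-\eps})=0]=1-e^{-\eps}>0$), and the "wait, one must be careful" hedge is never resolved. To complete your plan you need the closed forms the paper supplies (in your notation): $\mathbf{C}$ the point mass on $0$, $p=1-\tfrac{e^\eps a_1-a_0}{e^\eps-1}$, $u_0=\tfrac{a_0(1-e^{-\eps})}{a_1-e^{-\eps}a_0}$, $u_1=\tfrac{a_1(e^\eps-1)}{e^\eps a_1-a_0}$; then check the mixture identity, that the residuals' $0$-probabilities are at ratio exactly $e^\eps$ while $u_0/u_1=a_0/a_1\in[e^{-\eps},1]$, and that $u_0\ge\tfrac{1}{e^\eps+1}$ follows from $a_1\le e^\eps a_0$. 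None of this verification appears, so as written this is a plan rather than a proof.

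A secondary remark, beyond the statement at hand: your sketched lift to Lemma~\ref{lemma:limitednotprior} via coarsening and then making $\mathbf{B}^0,\mathbf{B}^1$ on $\calY$ proportional to a common reference (or carry only the "excess" over the pointwise minimum) does not work in general. The excesses of $\mathbf{Z}^0$ and $\mathbf{Z}^1$ over their pointwise minimum are supported on disjoint parts of $\calY$, and a common reference would force $\mathbf{Z}^0-\mathbf{Z}^1$ restricted to $\calY$ to be a one-signed multiple of a single distribution, which fails in general; residuals that are mutually singular on part of $\calY$ cannot be $\eps$-indistinguishable. The paper avoids this by letting each $\mathbf{B}^b$ be a mixture of $\bot$ with the conditional law of $\mathbf{Z}^b$ on $\calY$, with the $\bot$-weights chosen so that the two scaling constants on $\calY$ coincide; ratios on subsets of $\calY$ then reduce to the original $e^{\pm\eps}$ bound.
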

\begin{proof}
We show that Definition~\ref{def:approxqtargetdist} is satisfied with 
$\top=\{1\}$, $q=\frac{1}{e^{\eps}+1}$ and $(\eps,0)$, and $\mathbf{Z}^0,\mathbf{Z}^1$.
\begin{align*}
\pi &= \Pr[\mathbf{Z}^0 \in \top] \\
\pi' &= \Pr[\mathbf{Z}^1 \in \top]
\end{align*}
be the probabilities of $\top$ outcome in $\mathbf{Z}^0$ and 
$\mathbf{Z}^1$ respectively. Assume without loss of generality (otherwise we switch the roles of $\mathbf{Z}^0$ and $\mathbf{Z}^1$) that $\pi' \geq \pi$.
If $\pi \geq \frac{1}{e^{\eps}+1}$, the choice of $p=0$ and
$\mathbf{B}^b = \mathbf{Z}^b$ (and any $\mathbf{C}$) trivially satisfies the conditions of Definition~\ref{def:qtarget}.
Generally, (also for all $\pi < \frac{1}{e^{\eps}+1}$):
\begin{itemize}
\item
Let
\[
p=1- \frac{\pi'e^\eps-\pi}{e^\eps-1}\ .
\]
Note that since $\mathbf{Z}^0 \approx_\eps \mathbf{Z}^1$ it follows that $\pi'\approx_\eps \pi$ and 
$(1-\pi') \approx_\eps (1-\pi)$ and therefore $p\in [0,1]$ for any applicable $0\leq \pi \leq \pi'\leq 1$.
\item
Let $\mathbf{C}$ be the distribution with point mass on $\bot=\{0\}$. 
\item
Let $\mathbf{B}^0 = \Ber(1-\frac{\pi'-\pi}{\pi'-e^{-\eps}\pi}) = \Ber(\frac{\pi-\pi e^{-\eps}}{\pi'-e^{-\eps}\pi})$
\item
Let $\mathbf{B}^1 = \Ber(1-\frac{\pi'-\pi}{e^\eps \pi'- \pi}) = \Ber(\frac{e^\eps \pi'- \pi'}{e^\eps \pi'- \pi})$
\end{itemize}

We show that this choice satisfies Definition~\ref{def:qtarget} with $q=\frac{1}{e^{\eps}+1}$.
\begin{itemize}
\item
We show that for both $b\in\{0,1\}$.
$\mathbf{Z}^b \equiv p\cdot \mathbf{C} + (1-p)\cdot \mathbf{B}^b$:
It suffices to show that the probability of $\bot$ is the same for the distributions on both sides.
For $b=0$, the probability of $\bot$ in the right hand side distribution is
\begin{align*}
p + (1-p)\cdot \frac{\pi'-\pi}{\pi'-e^{-\eps}\pi} &= 1- \frac{\pi'e^\eps-\pi}{e^\eps-1} + \frac{\pi'e^\eps-\pi}{e^\eps-1} \cdot \frac{\pi'-\pi}{\pi'-e^{-\eps}\pi}= 1-\pi\ .
\end{align*}
For $b=1$, the probability is
\begin{align*}
p + (1-p)\cdot\frac{\pi'-\pi}{e^\eps \pi'- \pi} &= 
1- \frac{\pi'e^\eps-\pi}{e^\eps-1} + \frac{\pi'e^\eps-\pi}{e^\eps-1} \cdot \frac{\pi'-\pi}{e^\eps \pi'- \pi}\\
&= 1- \frac{\pi'e^\eps-\pi}{e^\eps-1}\left(1- \frac{\pi'-\pi}{e^\eps \pi'- \pi}\right) \\
&= 1- \frac{\pi'e^\eps-\pi}{e^\eps-1}\cdot \frac{e^\eps \pi'- \pi-\pi'+\pi}{e^\eps \pi'- \pi}=1-\pi'\ .
\end{align*}
\item
We show that for $b\in\{0,1\}$, $\Pr[\mathbf{B}^b \in \top] \geq \frac{1}{e^{\eps}+1}$.
\begin{align*}
\Pr[\mathbf{B}^0 \in \top] &=  \frac{\pi - e^{-\eps}\pi}{\pi'-e^{-\eps}\pi} \\
&\geq \frac{\pi - e^{-\eps}\pi}{e^\eps \pi -e^{-\eps}\pi} =
\frac{e^{\eps}-1}{e^{2\eps}-1}=\frac{1}{e^{\eps}+1}\ .\\
\Pr[\mathbf{B}^1 \in \top] & = \frac{\pi'(e^\eps-1)}{\pi'e^\eps - \pi} \\ &\geq \frac{\pi(e^\eps-1)}{\pi e^{2\eps} - \pi} = \frac{e^{\eps}-1}{e^{2\eps}-1}=\frac{1}{e^{\eps}+1}
\end{align*}
Note that the inequalities are tight when $\pi'=\pi$ (and are tighter when $\pi'$ is closer to $\pi$).  This means that our selected $q$ is the largest possible that satisfies the conditions for the target being $\top$.
\item
We show that  $\mathbf{B}^0$ and $\mathbf{B}^1$ are $\eps$-indistinguishable, that is
\[\Ber(1-\frac{\pi'-\pi}{\pi'-e^{-\eps}\pi}) \approx_\eps
\Ber(1-\frac{\pi'-\pi}{e^\eps \pi'- \pi}) .
\]
Recall that $\Ber(a) \approx_\eps \Ber(b)$ if and only if
$a \approx_\eps b$ and $(1-a) \approx_\eps (1-b)$.
First note that
\[
e^{-\eps} \cdot \frac{\pi'-\pi}{\pi'-e^{-\eps}\pi} = \frac{\pi'-\pi}{e^\eps \pi'- \pi}
\]
Hence
\[
\frac{\pi'-\pi}{\pi'-e^{-\eps}\pi} \approx_\eps \frac{\pi'-\pi}{e^\eps \pi'- \pi}\ .
\]
It also holds that
\begin{align*}
1\leq \frac{\frac{\pi - e^{-\eps}\pi}{\pi'-e^{-\eps}\pi}}{\frac{\pi'(1- e^{-\eps})}{\pi'- e^{-\eps}\pi}} = \frac{\pi'}{\pi}\leq e^\eps .
\end{align*}
\end{itemize}

\end{proof}





\begin{proof} [Proof of Lemma~\ref{lemma:limitednotprior}]
The proof is very similar to that of Lemma~\ref{lemma:binaryFnP}, with a few additional details since $\top=\calY$ can have more than one element (recall that $\bot$ is a single element).

Assume (otherwise we switch roles) that $\Pr[\mathbf{Z^0}=\bot] \geq \Pr[\mathbf{Z}^1=\bot]$. Let
\begin{align*}
\pi &= \Pr[\mathbf{Z^0} \in\calY]\\
\pi' &= \Pr[\mathbf{Z^1}\in\calY]\ .
\end{align*}
Note that $\pi' \geq \pi$.

We choose $p$, $\mathbf{C}$, $\mathbf{B}^0$, $\mathbf{B}^1$ as follows.
Note that when $\pi \geq \frac{1}{e^{\eps}+1}$, then the choice of $p=0$ and $\mathbf{B}^b = \mathbf{Z}^b$ satisfies the conditions. 
Generally,
\begin{itemize}
\item
Let
\[
p=1- \frac{\pi'e^\eps-\pi}{e^\eps-1}\ .
\]
\item
Let $\mathbf{C}$ be the distribution with point mass on $\bot$.  
\item
Let $\mathbf{B}^0$ be $\bot$ with probability $\frac{\pi'-\pi}{\pi'-e^{-\eps}\pi}$ and otherwise
(with probability $\frac{\pi-\pi e^{-\eps}}{\pi'-e^{-\eps}\pi}$)
be  $\mathbf{Z}^0$ conditioned on the outcome  being in $\calY$.
\item
Let $\mathbf{B}^1$ be $\bot$ with probability $\frac{\pi'-\pi}{e^\eps \pi'- \pi}$ and otherwise 
(with probability
$\frac{e^\eps \pi'- \pi'}{e^\eps \pi'- \pi}$)
be $\mathbf{Z}^1$ conditioned on the outcome being in $\calY$.
\end{itemize}

It remains to show that these choices satisfy  Definition~\ref{def:qtarget}:

The argument for $\Pr[\mathbf{B}^b \in \calY] \geq \frac{e^{\eps}-1}{e^{2\eps}-1}$ is identical to Lemma~\ref{lemma:binaryFnP} (with $\calY=\top$).

We next verify that for $b\in\{0,1\}$:
$\mathbf{Z}^b \equiv p\cdot \mathbf{C} + (1-p)\cdot \mathbf{B}^b$.
The argument for the probability of $\bot$ is identical to 
Lemma~\ref{lemma:binaryFnP}. The argument for $y\in \calY$ follows from the probability of being in $\calY$ being the same and that proportions are maintained.

For $b=0$, the probability of $y\in \calY$ in the right hand side distribution is
\begin{align*}
(1-p)\cdot \frac{\pi-\pi e^{-\eps}}{\pi'-e^{-\eps}\pi} \cdot \frac{\Pr[\mathbf{Z}^0 = y]}{ \Pr[\mathbf{Z}^0\in\calY]} 
 &= \pi \cdot \frac{\Pr[\mathbf{Z}^0 = y]}{\Pr[\mathbf{Z}^0\in\calY]} = \Pr[\mathbf{Z}^0 = y] .
\end{align*}
For $b=1$, 
the probability of $y\in \calY$  in the right hand side distribution is
\begin{align*}
(1-p)\cdot \frac{e^\eps \pi'- \pi'}{e^\eps \pi'- \pi} \cdot \frac{\Pr[\mathbf{Z}^1 = y]}{\Pr[\mathbf{Z}^1\in\calY]} &= \pi' \cdot \frac{\Pr[\mathbf{Z}^1 = y]}{\Pr[\mathbf{Z}^1\in\calY]}\\
&= \Pr[\mathbf{Z}^1 = y] .
\end{align*}

Finally, we verify that $\mathbf{B}^0$ and $\mathbf{B}^1$ are $\eps$-indistinguishable.  Let $W\subset \calY$. We have
\begin{align*}
\Pr[\mathbf{B}^0\in W] &= \frac{\pi(1-e^{-\eps})}{\pi'-e^{-\eps}\pi}\cdot \frac{\Pr[\mathbf{Z}^0\in W] }{\pi} = \frac{e^\eps-1}{\pi'e^\eps-\pi} \Pr[\mathbf{Z}^0\in W] \\
\Pr[\mathbf{B}^1\in W] &= \frac{\pi'(e^\eps-1)}{e^\eps \pi'-\pi} \cdot \frac{\Pr[\mathbf{Z}^1\in W] }{\pi'} = \frac{e^\eps-1}{\pi'e^\eps-\pi} \Pr[\mathbf{Z}^1\in W]\ .
\end{align*}
Therefore
\begin{align*}
\frac{\Pr[\mathbf{B}^0\in W]}{\Pr[\mathbf{B}^1\in W]} = \frac{\Pr[\mathbf{Z}^0\in W]}{\Pr[\mathbf{Z}^1 \in W]}
\end{align*}
and we use $\mathbf{Z}^0 \approx_\eps \mathbf{Z}^1$.
The case of $W=\bot$ is identical to the proof of Lemma~\ref{lemma:binaryFnP}.  The case $\bot\in W$ follows.
\end{proof}

%
%

\section{Conditional Release with Revisions} \label{CondRelease:sec}

In this section we analyze an extension to 
conditional release that allows for revision calls to be made with respect to \emph{previous} computations. 
This extension was presented in Section~\ref{condreleaseintro:sec} and described in Algorithm~\ref{algo:conditionalrelease}.
A conditional release applies a private algorithm $\calA\to \calY$ with respect to a subset of outcomes $\top\subset \calY$. It draws $y\sim \calA(D)$ and returns $y$ if $y\in\top$ and $\bot$ otherwise.
Each revise calls effectively expands the target to $\top_h \cup \top'$, when $\top_h$ is the prior target and $\top'$ a disjoint extension.  If the (previously) computed result hits the expanded target ($y\in\top'$), the value $y$ is reported and charged. Otherwise, additional revise calls can be performed. The revise calls can be interleaved with other TCT computations at any point in the interaction.

\subsection{Preliminaries}

For a distribution $\mathbf{Z}$ with support $\calY$ and $W\subset \calY$ 
we denote by $\mathbf{Z}_W$ the distribution with support $W\cup\{\bot\}$ where outcomes not in $W$ are ``replaced'' by $\bot$.  That is, for $y\in W$,
$\Pr[\mathbf{Z}_W=y] := \Pr[\mathbf{Z} = y]$ and $\Pr[\mathbf{Z}_W=\bot] := \Pr[\mathbf{Z} \not\in W]$.

For a distribution $\mathbf{Z}$ with support $\calY$ and $W\subset \calY$ we denote by $\mathbf{Z} \mid W$
the \emph{conditional distribution} of $\mathbf{Z}$ on $W$.  That is, for $y\in W$, $\Pr[(\mathbf{Z} \mid W)=y] := \Pr[\mathbf{Z}=y]/\Pr[\mathbf{Z} \in W]$.
\begin{lemma} \label{lemma:basiccond}
  If $\mathbf{B}^0 \approx_{\eps,\delta} \mathbf{B}^1$ then
$\mathbf{B}^0_W \approx_{\eps,\delta} \mathbf{B}^1_W$.
\end{lemma}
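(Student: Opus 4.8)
The plan is to recognize that the operation $\mathbf{Z}\mapsto \mathbf{Z}_W$ is just (deterministic) post-processing: it is the pushforward of $\mathbf{Z}$ under the map $f_W:\calY\to W\cup\{\bot\}$ with $f_W(y)=y$ for $y\in W$ and $f_W(y)=\bot$ for $y\in\calY\setminus W$ (here $\bot$ is a symbol outside $\calY$). Since $(\eps,\delta)$-indistinguishability is preserved under post-processing, the claim follows. Rather than invoke that as a black box, I would spell out the short direct argument, because the only spot needing care is the bookkeeping of the $\bot$ outcome.

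Concretely, I would fix an arbitrary test set $T\subseteq W\cup\{\bot\}$ and an arbitrary $b\in\{0,1\}$, and produce a set $S=S(T)\subseteq\calY$, \emph{independent of $b$}, such that $\Pr[\mathbf{B}^b_W\in T]=\Pr[\mathbf{B}^b\in S]$ and likewise with $b$ replaced by $1-b$. If $\bot\notin T$ then $T\subseteq W\subseteq\calY$ and by definition of $\mathbf{Z}_W$ we have $\Pr[\mathbf{Z}_W\in T]=\Pr[\mathbf{Z}\in T]$, so $S=T$ works. If $\bot\in T$, write $T=T'\cup\{\bot\}$ with $T'\subseteq W$; then $\Pr[\mathbf{Z}_W\in T]=\Pr[\mathbf{Z}\in T']+\Pr[\mathbf{Z}\notin W]=\Pr[\mathbf{Z}\in T'\cup(\calY\setminus W)]$, so $S=T'\cup(\calY\setminus W)$ works. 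With such an $S$ in hand, the hypothesis $\mathbf{B}^0\approx_{\eps,\delta}\mathbf{B}^1$ — which by the preliminaries means $\Pr[\mathbf{B}^b\in S]\le e^\eps\Pr[\mathbf{B}^{1-b}\in S]+\delta$ for every $S\subseteq\calY$ and every $b$ — gives $\Pr[\mathbf{B}^b_W\in T]=\Pr[\mathbf{B}^b\in S]\le e^\eps\Pr[\mathbf{B}^{1-b}\in S]+\delta=e^\eps\Pr[\mathbf{B}^{1-b}_W\in T]+\delta$. As $T$ and $b$ were arbitrary, $\mathbf{B}^0_W\approx_{\eps,\delta}\mathbf{B}^1_W$.

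As an alternative route, I could first apply Lemma~\ref{relateapprox:lemma} to write $\mathbf{B}^b\equiv(1-\delta)\mathbf{N}^b+\delta\mathbf{E}^b$ with $\mathbf{N}^0\approx_\eps\mathbf{N}^1$, observe that $f_W$ commutes with mixtures so $\mathbf{B}^b_W\equiv(1-\delta)\mathbf{N}^b_W+\delta\mathbf{E}^b_W$, reduce to the pure case $\delta=0$ by the same two-line set manipulation ($\mathbf{N}^0_W\approx_\eps\mathbf{N}^1_W$), and then reassemble the $(\eps,\delta)$ conclusion via Lemma~\ref{relateapprox:lemma} again. Both routes are essentially equally short; I would go with the direct one.

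I do not expect a genuine obstacle here. The only subtle point is that when $\bot\in T$ the corresponding event in the $\mathbf{Z}$-world is $T'\cup(\calY\setminus W)$ rather than $T'$, and that the \emph{same} $S$ serves both $\mathbf{B}^0$ and $\mathbf{B}^1$, so that the one-sided comparison transfers verbatim — including symmetrically in $b$, which is what the two-sided definition of $\approx_{\eps,\delta}$ requires.
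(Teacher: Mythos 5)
Your proof is correct: the paper states Lemma~\ref{lemma:basiccond} without proof, treating it as immediate from the post-processing property of $(\eps,\delta)$-indistinguishability, and your argument is exactly that standard reasoning, spelled out by exhibiting for each test set $T\subseteq W\cup\{\bot\}$ a single preimage event $S\subseteq\calY$ (independent of $b$) with $\Pr[\mathbf{B}^b_W\in T]=\Pr[\mathbf{B}^b\in S]$. No gaps; the handling of the $\bot$ outcome and the symmetry in $b$ are both done properly.
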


\begin{lemma} \label{lemma:conddouble}
Let $\mathbf{B}^0$, $\mathbf{B}^1$ be probability distributions with support $\calY$ such that
$\mathbf{B}^0 \approx_\eps \mathbf{B}^1$.
Let $W\subset \calY$. Then
$\mathbf{B}^0\mid W  \approx_{2\eps} \mathbf{B}^1 \mid W$.
\end{lemma}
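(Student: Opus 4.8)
The plan is to unwind the definition of the conditional distribution and bound the ratio of probabilities of an arbitrary event directly. Fix any measurable $S \subseteq W$; since $\mathbf{B}^0 \approx_\eps \mathbf{B}^1$ implies $\Pr[\mathbf{B}^0 \in W] \approx_\eps \Pr[\mathbf{B}^1 \in W]$, both are simultaneously zero or simultaneously positive, and we may assume they are positive (otherwise the conditional distributions are not defined and there is nothing to prove). Then
\[
\frac{\Pr[(\mathbf{B}^0 \mid W) \in S]}{\Pr[(\mathbf{B}^1 \mid W) \in S]}
= \frac{\Pr[\mathbf{B}^0 \in S]}{\Pr[\mathbf{B}^1 \in S]}\cdot \frac{\Pr[\mathbf{B}^1 \in W]}{\Pr[\mathbf{B}^0 \in W]}.
\]

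First I would bound each of the two factors on the right-hand side by $e^\eps$: the first because $S\subseteq \calY$ and $\mathbf{B}^0 \approx_\eps \mathbf{B}^1$, and the second because $W\subseteq \calY$ and $\mathbf{B}^0 \approx_\eps \mathbf{B}^1$ (used in the $1\to 0$ direction). Multiplying, the ratio is at most $e^{2\eps}$. Running the identical argument with the roles of $\mathbf{B}^0$ and $\mathbf{B}^1$ swapped gives the reverse bound, so $\mathbf{B}^0\mid W \approx_{2\eps} \mathbf{B}^1 \mid W$ as claimed.

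There is essentially no obstacle here: the only points needing a word of care are (i) the degenerate case $\Pr[\mathbf{B}^b\in W]=0$, handled by the remark above, and (ii) confirming that the factor of two is genuinely necessary and not an artifact — it is, since the numerator ratio and the (inverted) normalizer ratio can both be saturated at $e^\eps$ simultaneously, e.g.\ when $W$ is itself the ``light'' side on which $\mathbf{B}^0$ and $\mathbf{B}^1$ differ by $e^\eps$. I would end the proof after the two-line display and its symmetric counterpart.
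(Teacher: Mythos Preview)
Your proof is correct and is exactly the standard two-line argument one would expect. The paper in fact states this lemma as a preliminary without giving any proof, so there is nothing to compare against; your write-up fills in the omitted details in the natural way.
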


We extend these definitions to a randomized algorithm $\calA$, where 
$\calA_W(D)$ has distribution $\calA(D)_W$ and $(\calA\mid W)(D)$ has distribution $\calA(D)\mid W$.  The claims in Lemma~\ref{lemma:basiccond} and Lemma~\ref{lemma:conddouble}  then transfer to privacy of the algorithms.

\subsection{Analysis}

To establish correctness, it remains to show that each \CR call with an $(\eps,\delta)$-DP algorithm $\calA$ 
can be casted in TCT as a call to an $(\eps,\delta)$-DP algorithm with a \NotPrior target and each \RCR call cap be casted as a call to an $2\eps$-DP algorithm with a \NotPrior target. 
\begin{proof} [Proof of Lemma~\ref{lemma:analyze-revise}]
The claim for \CR was established in Lemma~\ref{CRprivacy:lemma}: 
Conditional release \CR$(\calA,\top)$ calls the algorithm $\calA_\top$ with target $\top$. From Lemma~\ref{lemma:basiccond}, $\calA_\top$ is $(\eps,\delta)$-DP when $\calA$ is $(\eps,\delta)$-DP.  $\top$ constitutes a \NotPrior target for $\calA_\top$ with respect to prior $\bot$.

We next consider revision calls as described in
Algorithm~\ref{algo:conditionalrelease}.
 We first consider the case of a pure-DP $\calA$ ($\delta=0$).
 
 When \CR publishes $\bot$, the internally stored value $r_h$ conditioned on published $\bot$ is a sample from the conditional distribution $\calA(D) \mid \neg\top$.

We will show by induction that this remains true after \RCR calls, that is the distribution of $r_h$ conditioned on $\bot$ being returned in all previous calls is $\calA(D) \mid \neg{\top_h}$ where $\top_h$ is the current expanded target.

An $\RCR$ call with respect to current target $\top_h$ and extension $\top'$ can be equivalently framed as drawing 
$r \sim \calA(D) \mid \neg\top_h$. 
From Lemma~\ref{lemma:conddouble}, if $\calA$ is $\eps$-DP then $\calA \mid \neg{\top_h}$ is $2\eps$-DP. 
If $r\in \top'$ we publish it and otherwise we publish $\bot$. This is a conditional release computation with respect to the $2\eps$-DP algorithm $\calA \mid \neg\top_h$ and the target $\top'$.  Equivalently, it is a call to the $2\eps$-DP algorithm $(\calA \mid \neg\top_h)_{\top'}$ with a \NotPrior target $\top'$.

Following the \RCR call, the conditional distribution of $r_h$ conditioned on $\bot$ returned in the previous calls is $\calA(D) \mid \neg (\top_h\cup \top')$ as claimed. We then update $\top_h \gets \top_h \cup \top'$.


It remains to handle the case $\delta>0$.
We consider \RCR calls for the case where $\calA$ is $(\eps,\delta)$-DP (approximate DP).  In this case, we want to show that we charge for the $\delta$ value once, only on the original \CR call.  We apply the simulation-based analysis in the proof of Theorem~\ref{thm:TCprivacy} with two fixed neighboring datasets.  Note that this can be viewed as each call being with a pair of distributions with an appropriate $q$-target (that in our case is always a \NotPrior target).

The first \CR call uses the distributions
$\calA(D^0)$ and $\calA(D^1)$.  From Lemma~\ref{relateapprox:lemma} they can be expressed as respective mixtures of pure 
$\mathbf{N}^0 \approx_\eps \mathbf{N}^1$ part (with probability $1-\delta$)
and non-private parts.  The non-private draw is designated failure with probability $\delta$.  Effectively, the call in the simulation is then applied to the pair $(\mathbf{N}^0_\top,\mathbf{N}^1_\top)$ with target $\top$.

A followup \RCR call is with respect to the previous target $\top_h$ and target extension $\top'$. 
The call is with the distributions $(\mathbf{N}^b \mid \neg\top_h)_{\top'}$ that using Lemma~\ref{lemma:basiccond} and Lemma~\ref{lemma:conddouble} satisfy
$(\mathbf{N}^0 \mid \neg\top_h)_{\top'} \approx_{2\eps} (\mathbf{N}^1 \mid \neg\top_h)_{\top'}$.
\end{proof}

\section{Boundary Wrapper Analysis} \label{sec:boundarywrapper}




In this section we provide details for the boundary wrapper method including proofs of 
Lemma~\ref{wrapperprivacy:lemma} and
Lemma~\ref{boundaryq:lemma}. For instructive reasons, we first
consider the special case of private testing and then outline the extensions to private classification.


Algorithm~\ref{algo:bwrapper} when specialized for tests first computes $\pi(D)=\min\{\Pr[\calA(D) = 0],1-\Pr[\calA(D) = 0]\}$, returns $\top$ with probability $\pi/(1+\pi)$ and otherwise (with probability $1/(1+\pi)$) return $\calA(D)$.  Overall, we return the less likely outcome with probability $\pi/(1+\pi)$, and the more likely one with probability
$(1-\pi)/(1+\pi)$.  

\begin{lemma} [Privacy of wrapped test] \label{testwrapperprivacy:lemma}
If the test is $\eps$-DP then the wrapper test is $t(\eps)$-DP where
$t(\eps)\leq \frac{4}{3}\eps$.
\end{lemma}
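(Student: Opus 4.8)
The plan is to compute the output distribution of the wrapped test in closed form and reduce the statement to a single one-variable inequality. Specialized to a test, write $p = \Pr[\calA(D)=1]$ and $\pi = \pi(D) = \min\{p,1-p\} \in [0,\tfrac{1}{2}]$; then $\pi/(1+\pi) \le \tfrac{1}{3}$, so the truncation at $\tfrac{1}{3}$ in Algorithm~\ref{algo:bwrapper} is inactive and the wrapped test returns $\top$ with probability $\pi/(1+\pi)$ and an independent sample of $\calA(D)$ otherwise. Hence
\[
\Pr[\calW(\calA)(D)=1] = \tfrac{p}{1+\pi}, \qquad \Pr[\calW(\calA)(D)=0] = \tfrac{1-p}{1+\pi}, \qquad \Pr[\calW(\calA)(D)=\top] = \tfrac{\pi}{1+\pi}.
\]
Fix neighboring $D^0,D^1$ and put $p^b = \Pr[\calA(D^b)=1]$, $\pi^b = \min\{p^b,1-p^b\}$. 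Since $\calA$ is $\eps$-DP, $p^0 \approx_\eps p^1$ and $1-p^0 \approx_\eps 1-p^1$, and taking minima of these $\eps$-close pairs gives $\pi^0 \approx_\eps \pi^1$. For the $\top$ outcome, since $x \mapsto x/(1+x)$ is increasing, $\tfrac{\pi^0/(1+\pi^0)}{\pi^1/(1+\pi^1)} \le e^\eps$ (w.l.o.g.\ $\pi^0 \ge \pi^1$, so the factor $(1+\pi^1)/(1+\pi^0) \le 1$). So it remains to bound the ratios for outcomes $1$ and $0$; writing $\psi(x) := \tfrac{x}{1+\min\{x,1-x\}}$ (continuous and strictly increasing on $[0,1]$), outcome $1$ gives the ratio $\psi(p^0)/\psi(p^1)$ and outcome $0$ gives $\psi(1-p^0)/\psi(1-p^1)$, and in both cases the argument pair $(q,q')$ satisfies $q \approx_\eps q'$ and $1-q \approx_\eps 1-q'$.

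It thus suffices to show: if $q \approx_\eps q'$ and $1-q \approx_\eps 1-q'$ then $\psi(q)/\psi(q') \le \tfrac{3e^{2\eps}}{2e^\eps+1}$. Since $\psi$ is increasing we may assume $q \ge q'$, so $q \le e^\eps q'$ and $1-q \ge e^{-\eps}(1-q')$, and I would split on whether $q \le \tfrac{1}{2}$. If $q \le \tfrac{1}{2}$ (hence $q' \le \tfrac{1}{2}$), then $\psi(q)/\psi(q') = \tfrac{q}{q'} \cdot \tfrac{1+q'}{1+q} \le e^\eps$. If $q \ge \tfrac{1}{2}$, then $\psi(q) = \tfrac{q}{2-q}$ and $\psi(q)/\psi(q') = \tfrac{q}{q'} \cdot \tfrac{1+\min\{q',1-q'\}}{2-q} \le e^\eps \cdot \tfrac{1+\min\{q',1-q'\}}{2-q}$; now $2-q = 1+(1-q) \ge 1 + e^{-\eps}(1-q')$ and $\min\{q',1-q'\} \le \tfrac{1}{2}$, and one checks in each of the sub-cases $q' \le \tfrac{1}{2}$ and $q' \ge \tfrac{1}{2}$ (in the latter using that $\tfrac{1+(1-q')}{1+e^{-\eps}(1-q')}$ is increasing in $1-q' \in [0,\tfrac{1}{2}]$) that $\tfrac{1+\min\{q',1-q'\}}{2-q} \le \tfrac{3}{2+e^{-\eps}}$, whence $\psi(q)/\psi(q') \le \tfrac{3e^\eps}{2+e^{-\eps}} = \tfrac{3e^{2\eps}}{2e^\eps+1}$. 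Since $\tfrac{3e^{2\eps}}{2e^\eps+1} \ge e^\eps$ for $\eps \ge 0$, this bound also dominates the $q \le \tfrac{1}{2}$ case and the $\top$ ratio above.

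It remains to verify the elementary inequality $\tfrac{3e^{2\eps}}{2e^\eps+1} \le e^{4\eps/3}$: substituting $z = e^{\eps/3} \ge 1$ this is equivalent to $3z^2 \le 2z^3 + 1$, i.e.\ $2z^3 - 3z^2 + 1 = (z-1)^2(2z+1) \ge 0$, which holds. Hence $t(\eps) \le \tfrac{4}{3}\eps$. I expect the main obstacle to be the $q \ge \tfrac{1}{2}$ case of the one-variable inequality: the extremal pair is \emph{not} the one with $q/q' = e^\eps$ but rather one with $1-q = e^{-\eps}(1-q')$ and $q' \approx \tfrac{1}{2}$ (where $q/q' = 2-e^{-\eps} < e^\eps$), so one must use the $\eps$-closeness of $1-q$ and $1-q'$ and, crucially, bound $\tfrac{1+\min\{q',1-q'\}}{2-q}$ jointly as a monotone function of $1-q'$ rather than bounding its numerator and denominator separately (the separate bounds give only the too-weak $\tfrac{3}{2} e^\eps$).
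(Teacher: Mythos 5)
Your proposal is correct, and every step checks out: the closed form $\Pr[1]=p/(1+\pi)$, $\Pr[0]=(1-p)/(1+\pi)$, $\Pr[\top]=\pi/(1+\pi)$ is right (the $1/3$ cap is indeed inactive for tests), $\pi^0\approx_\eps\pi^1$ does follow by taking minima, the monotonicity arguments in the two sub-cases of $q\ge\tfrac12$ are valid, and the final reduction $(z-1)^2(2z+1)\ge 0$ correctly gives $\tfrac{3e^{2\eps}}{2e^\eps+1}\le e^{4\eps/3}$. The route differs from the paper's in organization rather than in spirit: the paper also bounds per-outcome probability ratios directly, but it does so by enumerating five extremal inequalities, split according to whether the less likely outcome is the same for the two neighboring datasets or flips between them, and it asserts those bounds without derivation. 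You instead express both non-$\top$ outcome probabilities through the single increasing map $\psi(x)=x/(1+\min\{x,1-x\})$, which makes the flip case disappear (monotonicity of $\psi$ handles it automatically), and you accept the slightly lossier uniform bound $\tfrac{3e^{2\eps}}{2e^\eps+1}$ in place of the exact extremal expressions. What each buys: the paper's enumeration keeps track of the true worst-case ratio (attained in its inequality \eqref{bigstable:eq}, i.e.\ at $\pi$ near $\tfrac12$ with value $\tfrac{3(2-e^{-\eps})}{2+e^{-\eps}}$, where the constant $\tfrac43$ is asymptotically tight as $\eps\to 0$; your bound dominates it since $2-e^{-\eps}\le e^{\eps}$), which is the quantity $t(\eps)$ reused in the $q$-value analysis of Lemma~\ref{testboundaryq:lemma}; your argument gives up a little tightness on $t(\eps)$ itself but is fully self-contained and elementary, supplying the verification that the paper's proof leaves to the reader.
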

\begin{proof}
Working directly with the definitions, 
$t(\eps)$ is the maximum of 
\begin{align}
\max_{\pi\in(0,1/2)} \left|\ln\left(\frac{1-e^{-\eps}\pi}{1+e^{-\eps}\pi}\cdot \frac{1+\pi}{1-\pi} \right)\right| &\leq \frac{4}{3}\eps \label{bigstable:eq}\\
\max_{\pi\in(0,1/2)} \left|\ln\left(\frac{e^{-\eps}\pi}{1+e^{-\eps}\pi}\cdot \frac{1+\pi}{\pi} \right)\right| &\leq \eps \label{smallstable:eq} \\
\max_{\pi\in(\frac{e^{-\eps}}{2}, \frac{1}{1+e^\eps})} \left|\ln\left(\frac{\pi}{1+\pi}\cdot \frac{2-e^\eps\pi}{e^\eps \pi} \right)\right| &\leq \eps \label{smalllarge:eq} \\
\max_{\pi\in(\frac{e^{-\eps}}{2}, \frac{1}{1+e^\eps})} \left|\ln\left(\frac{1-\pi}{1+\pi}\cdot \frac{2-e^\eps\pi}{1-e^\eps \pi} \right)\right| &\leq \frac{4}{3}\eps \label{largesmall:eq}\\
\max_{\pi\in(\frac{e^{-\eps}}{2}, \frac{1}{1+e^\eps})} \left|\ln\left(\frac{\pi}{1+\pi}\cdot \frac{2-e^\eps\pi}{1-e^\eps \pi} \right)\right| &\leq \eps \label{boundaryflip:eq}
\end{align}
Inequality~\eqref{bigstable:eq}  bounds the
ratio change in the probably of the larger probability outcome when it remains the same and \eqref{smallstable:eq} the ratio change in the probability of the smaller probability outcome when it remains the same between the neighboring datasets.
When the less probable outcome changes between the neighboring datasets
it suffices to consider the case where the probability of the initially less likely outcome changes to
$e^\eps \pi > 1/2$ so that $e^\eps\pi < 1-\pi$, that is the change is from $\pi$ to $e^\eps\pi$ where
$\pi\in (\frac{e^{-\eps}}{2}, \frac{1}{1+e^\eps})$. Inequalities~\ref{smalllarge:eq} and~\ref{largesmall:eq}  correspond to this case. The wrapped probabilities of the $\top$ outcome are the same as the less probably outcome in the case that it is the same in the two databases.  Inequality~\ref{boundaryflip:eq}  corresponds to the case when there is change.
\end{proof}

We now show that $\top$ is a target for the wrapped test.
\begin{lemma} [$q$-value of the boundary target] \label{testboundaryq:lemma}
The outcome $\top$ of a boundary wrapper of an $\eps$-DP test is 
a $\frac{e^{t(\eps)} - 1}{2(e^{\eps+t(\eps)} - 1)}$-target.
\end{lemma}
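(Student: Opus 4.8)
The plan is to verify Definition~\ref{def:qtarget} directly, writing down the mixture decomposition explicitly, in close analogy with the proof of Lemma~\ref{lemma:binaryFnP}. By Lemma~\ref{testwrapperprivacy:lemma} the wrapped test is $t(\eps)$-DP, so it suffices to produce, for every pair $D^0,D^1$ of neighboring datasets, a number $p\in[0,1]$ and distributions $\mathbf{C},\mathbf{B}^0,\mathbf{B}^1$ over $\{0,1,\top\}$ with $\calW(\calA)(D^b)\equiv p\mathbf{C}+(1-p)\mathbf{B}^b$, with $\mathbf{B}^0\approx_{t(\eps)}\mathbf{B}^1$, and with $\min_b\Pr[\mathbf{B}^b=\top]\ge q$ for $q=\frac{e^{t(\eps)}-1}{2(e^{\eps+t(\eps)}-1)}$. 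Write $a_b,c_b,s_b$ for the probabilities that $\calW(\calA)(D^b)$ equals $0,1,\top$. From Algorithm~\ref{algo:bwrapper} specialised to tests, the less likely outcome of $\calA(D^b)$ receives exactly the same wrapped probability as $\top$, namely $\pi_b/(1+\pi_b)$ with $\pi_b:=\min_j\Pr[\calA(D^b)=j]\le\tfrac12$; hence one of $a_b,c_b$ equals $s_b$ and the other equals $1-2s_b$. This identity --- ``$\Pr[\top]=\Pr[\text{non-dominant}]$'' --- is the source of the factor $2$ in $q$; moreover, since $s_b$ is a monotone image of $\pi_b$, the numbers $s_0,s_1$ are only $\eps$-apart (governed by the privacy of $\calA$ itself, not by $t(\eps)$), which is what accounts for the mismatched exponents $\eps$ and $t(\eps)$.

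I would first handle the main case --- the dominant outcome of $\calA$ is the same for $D^0$ and $D^1$, say it is $0$ --- relabelling so that $s_0\le s_1$, and disposing up front of the trivial case $\min_b s_b\ge q$ (take $p=0$, $\mathbf{B}^b=\calW(\calA)(D^b)$) and of degenerate configurations where $\calA(D^b)$ is deterministic. In the main case take $\mathbf{C}$ supported on $\{0,1\}$ and let $\mathbf{B}^b$ put mass $s_b/(1-p)$ on $\top$. As in Lemma~\ref{lemma:binaryFnP}, I would pull as much shared mass into $\mathbf{C}$ as is compatible with $\mathbf{B}^0\approx_{t(\eps)}\mathbf{B}^1$: set $p\Pr[\mathbf{C}=0]=(e^{t(\eps)}\min(a_0,a_1)-\max(a_0,a_1))/(e^{t(\eps)}-1)$ and the analogous expression for $p\Pr[\mathbf{C}=1]$, which determines $p$ and hence $\mathbf{B}^0,\mathbf{B}^1$. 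Three routine checks remain: (i) all pieces are genuine distributions (each displayed quantity is nonnegative --- this reduces to $a_0\approx_{t(\eps)}a_1$ and $c_0\approx_{t(\eps)}c_1$ --- and the masses sum to $1$); (ii) $\mathbf{B}^0\approx_{t(\eps)}\mathbf{B}^1$ (by construction the likelihood ratio on each of the two non-target outcomes is exactly $e^{\pm t(\eps)}$, and on $\top$ it equals $s_0/s_1\in[e^{-\eps},1]$); (iii) $\min_b\Pr[\mathbf{B}^b=\top]\ge q$. Step (iii) is the crux: substituting the chosen $p$ and using $a_b=1-2s_b$, $c_b=s_b$ one gets $\Pr[\mathbf{B}^0=\top]=\frac{s_0(e^{t(\eps)}-1)}{s_1(2e^{t(\eps)}+1)-s_0(e^{t(\eps)}+2)}$, which is smallest when $s_1/s_0$ is largest, i.e.\ $s_1=e^\eps s_0$; the value then obtained is $\ge\frac{e^{t(\eps)}-1}{2(e^{\eps+t(\eps)}-1)}$ exactly because $e^\eps\le e^{t(\eps)}$, which holds since $t(\eps)\ge\eps$ (visible from the $\pi\to0$ limit of \eqref{smallstable:eq}).

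The main obstacle is the ``flip'' case --- the dominant outcome of $\calA$ differs between $D^0$ and $D^1$ --- together with the ties $\Pr[\calA(D^b)=j]=\tfrac12$. There $s_0$ and $s_1$ are no longer a clean monotone image of a single probability of $\calA$, and one must re-derive the bound playing the role of $\max(s_0,s_1)\le e^\eps\min(s_0,s_1)$ (or a weaker bound that $t(\eps)$ absorbs) before re-running the construction; this is precisely the subtlety that forces the five-way split behind $t(\eps)$ in inequalities \eqref{bigstable:eq}--\eqref{boundaryflip:eq}. With those relations in place the rest is the bookkeeping of Lemma~\ref{lemma:binaryFnP}, and the general classifier statement (Lemma~\ref{boundaryq:lemma}) follows by the same scheme with $\{0,1\}$ replaced by the full label set --- the only change being that $\mathbf{C}$ now holds back shared mass on each of the many non-target outcomes.
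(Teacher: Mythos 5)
Your construction is correct, but it is not the one the paper uses, so a comparison is worthwhile. The paper keeps the decomposition asymmetric: $\mathbf{C}$ is a point mass on the dominant outcome only, and $\mathbf{B}^b$ contains \emph{all} of the non-dominant mass and all of $\top$ (each $\pi_b/(1+\pi_b)$), plus a slice of the dominant outcome of size $\Delta\,e^{t(\eps)}/(e^{t(\eps)}-1)$ resp.\ $\Delta/(e^{t(\eps)}-1)$, so that only that slice is calibrated to ratio exactly $e^{t(\eps)}$; the bound then falls out of $q\ge\frac{\pi/(1+\pi)}{1-p}$ together with $\frac{\pi'/(1+\pi')}{\pi/(1+\pi)}\le e^{\eps}$, and no relation between $t(\eps)$ and $\eps$ is ever needed. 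You instead extract the maximal common mass from \emph{both} non-target outcomes (residual ratios exactly $e^{\pm t(\eps)}$ on each), leaving $\mathbf{B}^b$ supported on $\top$ plus two small residuals; your algebra in the same-dominant case is right (I checked $\Pr[\mathbf{B}^0=\top]=\frac{s_0(e^{t}-1)}{s_1(2e^{t}+1)-s_0(e^{t}+2)}$ and the reduction to $s_1\le e^{\eps}s_0$), and it even gives a marginally sharper intermediate bound, at the price of needing $t(\eps)\ge\eps$ in the last step -- which is true, as you observe, so this is fine. The one place your plan is thinner than it should be is the flip/tie case, which you defer. It does work out: feasibility of a flip under $\eps$-DP of $\calA$ forces both non-dominant probabilities into $[e^{-\eps}/2,1/2]$, so your ratio bound $\max(s_0,s_1)\le e^{\eps}\min(s_0,s_1)$ indeed holds there; but that bound alone does not finish the argument with your decomposition, because the identification ``$a_b=1-2s_b$, $c_b=s_b$'' now holds with opposite roles for $b=0$ and $b=1$, so $1-p$ acquires the larger terms $e^{t}(1-2s_0-s_1)$ and $(1-2s_1-s_0)$ and the final inequality must be re-derived over the constrained region $e^{\eps}\pi_0\ge 1-\pi_1$ (it does hold -- e.g.\ at the extreme corner one gets exactly $q\ge\frac{e^{t}-1}{3}\cdot\frac{1}{e^{t}}$-type slack -- but it is a separate computation, not bookkeeping). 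To be fair, the paper is also laconic about this case (it disposes of it with the remark that the other configurations ``only make $\Delta$ smaller''), so your proposal is at the same level of completeness as the published argument, just along a different decomposition.
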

\begin{proof}
Consider two neighboring datasets where the same outcome is less likely for both and 
$\pi \leq \pi'$.  Suppose without loss of generality that $0$ is the less likely outcome.

The common distribution $\mathbf(C)$ has point mass on $1$.

The distribution $\mathbf{B}^0$ is a scaled part of $\calM(D^0)$ that includes all $0$ and $\top$ outcomes (probability $\pi/(1+\pi)$ each) and 
probability of $\Delta \frac{e^{t(\eps)}}{e^{t(\eps)}-1}$ of the $1$ outcomes, where
$\Delta = \frac{2\pi'}{1+\pi'} - \frac{2\pi}{1+\pi}$.

The distribution $\mathbf{B}^1$ is a scaled part of $\calM(D^1)$ that includes all $0$ and $\top$ outcomes (probability $\pi'/(1+\pi')$ each) and 
probability of $\Delta \frac{1}{e^{t(\eps)}-1}$ of the $1$ outcomes.

It is easy to verify that $\mathbf{B}^0 \approx_{t(\eps)} \mathbf{B}^1$ and that
\begin{align*}
1-p &= \frac{2\pi'}{1+\pi'} + \Delta \frac{1}{e^{t(\eps)}-1} = \frac{2\pi}{1+\pi} + \Delta \frac{e^{t(\eps)}}{e^{t(\eps)}-1} \\
&= \frac{2\pi'}{1+\pi'}\frac{e^{t(\eps)}}{e^{t(\eps)}-1}  - \frac{2\pi}{1+\pi} \frac{1}{e^{t(\eps)}-1} \\ &= \frac{2}{e^{t(\eps)}-1} (e^{t(\eps)} \frac{\pi'}{1+\pi'} - \frac{\pi}{1+\pi} )  \\
\end{align*}

Using $\frac{\pi}{1+\pi} \leq \frac{\pi'}{1+\pi'}$  and
$\frac{\frac{\pi'}{1+\pi'}}{\frac{\pi}{1+\pi}}\leq e^{\eps} $
we obtain 
\begin{align*}
q &\geq \frac{\frac{\pi}{1+\pi}}{1-p} \\
&= \frac{e^{t(\eps)}-1}{2}\left(\frac{1}{e^{t(\eps)} \cdot \frac{\pi'}{1+\pi'} \cdot \frac{1+\pi}{\pi}  -1 } \right)\\
&\geq \frac{e^{t(\eps)}-1}{2} \frac{1}{e^{t(\eps)+\eps} -1} .
\end{align*}
\end{proof}

\paragraph{Extension to Private Classification}
To extension from Lemma~\ref{testwrapperprivacy:lemma} to Lemma~\ref{wrapperprivacy:lemma} follows by noting that the same arguments also hold respectively for sets of outcomes and also cover the case when there is no dominant outcome and when there is a transition between neighboring datasets from no dominant outcome to a dominant outcome.  The extension from Lemma~\ref{testwrapperprivacy:lemma} to Lemma~\ref{wrapperprivacy:lemma} is also straightforward by also noting the cases above (that only make the respective $\Delta$ smaller), and allowing $\mathbf{C}$ to be empty when there is no dominant outcome.

\section{Boundary wrapping without a probability oracle} \label{nooraclewrap:sec}
We present a boundary-wrapping method that does not assume a probability oracle.  This method accesses the distribution $\calA(D)$ in a blackbox fashion.

\begin{lemma}\label{lemma:run-twice}
Suppose $\calA:\calX^*\to\calY$ is an $(\eps,0)$-DP algorithm where $|\calY|<\infty$. Denote by $\calA\circ \calA$ the following algorithm: on input $D$, independently run $\calA$ twice and publish both outcomes. Define $E := \{(y,y'):y\ne y'\} \subseteq \calY\times \calY$. Then, $\calA\circ\calA$ is a $(2\eps,0)$-DP algorithm, and $E$ is a $f(\eps)$-target for $\calA\circ\calA$, where
\[
f(\eps) = 1-\sqrt{e^{2\eps}/(1+e^{2\eps})}.
\]
\end{lemma}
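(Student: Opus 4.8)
The claim that $\calA\circ\calA$ is $(2\eps,0)$-DP is immediate: it consists of two independent invocations of the $\eps$-DP algorithm $\calA$ on the same dataset (publishing both), so it is $2\eps$-DP by basic composition (Lemma~\ref{composition:lemma}). The content is the target bound. Fix neighboring datasets $D^0,D^1$, write $P^b:=\calA(D^b)$ with point masses $a_y:=P^0(y)$, $b_y:=P^1(y)$; pure $\eps$-DP (and $|\calY|<\infty$) gives $a_y\approx_\eps b_y$ for every $y$, and in particular $P^0,P^1$ have the same support. Write $m_y:=\min(a_y,b_y)$, $M_y:=\max(a_y,b_y)$. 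The plan is to exhibit the decomposition of Definition~\ref{def:qtarget} (with parameter $2\eps$) that keeps the product structure and only touches the diagonal. For $\eps>0$ set
\[
c_y:=\frac{e^{2\eps}m_y^2-M_y^2}{e^{2\eps}-1},
\]
which is $\ge 0$ because $M_y\le e^\eps m_y$, and $\le m_y^2$ because $m_y\le M_y$; let $p:=\sum_y c_y$, let $\mathbf C$ place mass $c_y/p$ on the pair $(y,y)$ (if $p=0$, a fully degenerate situation, take $\mathbf B^b:=(P^b)^{\otimes2}$ and $\mathbf C$ arbitrary), and let $\mathbf B^b:=\tfrac1{1-p}\big((P^b)^{\otimes2}-S\big)$ where $S$ is the sub-measure with $S(\{(y,y)\})=c_y$ and $S=0$ off the diagonal. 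The case $\eps=0$ is trivial: then $P^0=P^1$, $c_y=a_y^2$, and $\mathbf B^0=\mathbf B^1$ equals $(P^b)^{\otimes2}$ conditioned on $E$, so $\Pr[\mathbf B^b\in E]=1\ge f(0)$.

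Next I would verify the three requirements. (i) $\mathbf B^b$ is a genuine distribution: nonnegativity holds since $c_y\le m_y^2\le a_y^2,b_y^2$, and the total mass is $1-p$. (ii) $(\calA\circ\calA)(D^b)\equiv p\,\mathbf C+(1-p)\mathbf B^b$ holds by construction. (iii) $\mathbf B^0\approx_{2\eps}\mathbf B^1$: on an off-diagonal atom $(y,y')$ the likelihood ratio is $\tfrac{a_ya_{y'}}{b_yb_{y'}}\in[e^{-2\eps},e^{2\eps}]$; on the atom $(y,y)$ it is $\tfrac{a_y^2-c_y}{b_y^2-c_y}$, and assuming WLOG $a_y\ge b_y$ this ratio is $\ge1$ and equals exactly $e^{2\eps}$ at our choice of $c_y$ (solve $a_y^2-c_y=e^{2\eps}(b_y^2-c_y)$), so it lies in $[e^{-2\eps},e^{2\eps}]$; the borderline case $b_y^2-c_y=0$ forces $a_y=b_y$, so that atom is absent from both $\mathbf B^b$.

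It remains to lower bound $q=\min_b\Pr[\mathbf B^b\in E]$. Since $E$ is exactly the off-diagonal and $S$ lives on the diagonal, $\Pr[\mathbf B^b\in E]=\tfrac{1-\|P^b\|_2^2}{1-p}$, so it suffices to prove, for each $b$, that $\tfrac{1-\|P^b\|_2^2}{1-p}\ge f(\eps)$; recall $1-f(\eps)=\sqrt{e^{2\eps}/(1+e^{2\eps})}$. Split on the concentration of $P^b$. If $\|P^b\|_2^2\le 1-f(\eps)$ we are done immediately, since $1-p\le1$. Otherwise $P^b$ has a dominant outcome $y_0$ with $P^b(y_0)\ge\|P^b\|_2^2$ (as $\|P^b\|_2^2\le\max_y P^b(y)$); writing $\theta:=1-P^b(y_0)$ we have $\theta\le 1-\|P^b\|_2^2<f(\eps)$, and since both $P^0,P^1$ sum to $1$ the DP bounds applied coordinatewise force $P^{1-b}(y_0)\in[1-e^{\eps}\theta,\,1-e^{-\eps}\theta]$, i.e.\ $P^0$ and $P^1$ agree up to a factor $e^{\pm\eps}$ at $y_0$. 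Using this together with $\sum_{y\ne y_0}P^b(y)^2\le\big(\sum_{y\ne y_0}P^b(y)\big)^2=\theta^2$ on the leftover terms, one lower bounds $p\ge c_{y_0}$, hence upper bounds $1-p$ by a quantity of order $\theta$ (the $1/(e^{2\eps}-1)$ factors cancel against $\theta$-sized numerators); combined with $1-\|P^b\|_2^2\ge 2\theta(1-\theta)$ this gives $\tfrac{1-\|P^b\|_2^2}{1-p}\ge\tfrac{1-\theta}{1+e^{\pm\eps}/(e^\eps+1)}$, and since $\theta<f(\eps)$ this is at least $f(\eps)$ provided $f(\eps)\le\frac{e^\eps+1}{e^{2\eps}+2e^\eps+2}$, an elementary inequality in $s=e^\eps\ge1$ that reduces to $s+1-s^3\le s(s+1)\sqrt{1+s^2}$ (the left side is $\le1$ for $s\ge1$, the right side $\ge2\sqrt2$). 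Since this holds for both $b$, $q\ge f(\eps)$.

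The main obstacle is the concentration case. A priori $c_{y_0}$ can be as small as $0$ (when the dominant outcome has the extreme ratio $e^{\pm\eps}$), so any crude lower bound on $p$ is useless on its own; the argument must use that for a pure-DP pair, normalization prevents a near-point-mass outcome from \emph{also} having the extreme ratio — quantitatively $P^{1-b}(y_0)\ge 1-e^{\eps}(1-P^b(y_0))$ — and then one must track the constants through the $1/(e^{2\eps}-1)$ factors carefully so that everything lines up with the stated $f(\eps)$. The other requirements ($(2\eps,0)$-DP, the validity of the mixture, the indistinguishability of $\mathbf B^0,\mathbf B^1$) are routine.
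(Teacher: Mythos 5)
Your construction is, up to notation, the paper's own: your $c_y=\frac{e^{2\eps}m_y^2-M_y^2}{e^{2\eps}-1}$ is exactly the paper's $q_i$ (the largest common diagonal mass compatible with a $2\eps$ ratio on the residual diagonal atoms), and your checks of the mixture identity and of $\mathbf{B}^0\approx_{2\eps}\mathbf{B}^1$ coincide with the paper's. Where you genuinely diverge is the lower bound on $\Pr[\mathbf{B}^b\in E]$. The paper argues per coordinate: it shows $p_i^2-q_i\le\min\bigl(p_i^2,\,1-{p'_i}^2\bigr)\le\min\bigl(p_i^2,\,e^{2\eps}(1-p_i^2)\bigr)$ -- the first inequality being the same ``normalization prevents a near-point mass from sitting at the extreme ratio'' fact you isolate, but applied at \emph{every} outcome -- and then minimizes $p(1-p)/\min\bigl(p^2,e^{2\eps}(1-p^2)\bigr)$ pointwise, the minimum at $p=\sqrt{e^{2\eps}/(1+e^{2\eps})}$ giving $f(\eps)$ uniformly with no case analysis. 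You instead split globally on $\|P^b\|_2^2$ and use the normalization constraint only at the dominant atom; this does reach the same constant, and your closing inequality $f(\eps)\le\frac{e^\eps+1}{e^{2\eps}+2e^\eps+2}$, equivalent to $s+1-s^3\le s(s+1)\sqrt{1+s^2}$, is precisely what the correct worst case requires and holds trivially for $s\ge1$. Two cautions on your concentrated case, though: the displayed intermediate denominator $1+e^{\pm\eps}/(e^\eps+1)$ is not the right one -- maximizing $1-c_{y_0}$ over $P^{1-b}(y_0)\in[1-e^\eps\theta,\,1-e^{-\eps}\theta]$ gives $1-c_{y_0}\le\frac{2\theta(e^{3\eps}-1)}{e^{2\eps}-1}$, hence a ratio at least $(1-\theta)\frac{e^\eps+1}{e^{2\eps}+e^\eps+1}$, i.e.\ denominator $e^\eps+\frac{1}{e^\eps+1}$, which is what your final reduction actually matches; and the cancellation you gesture at really does need the sharper form $1-c_{y_0}=\frac{e^{2\eps}(1-m_{y_0}^2)-(1-M_{y_0}^2)}{e^{2\eps}-1}$, since dropping the $-(1-M_{y_0}^2)$ term leaves only $O\bigl(e^{3\eps}\theta/(e^{2\eps}-1)\bigr)$, which blows up as $\eps\to0$ and does not suffice. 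Net assessment: correct in substance with the same decomposition; the paper's per-coordinate bound is cleaner and avoids your dichotomy, while your version makes explicit where the normalization constraint enters -- just tighten the concentrated case as above (and, like the paper, you leave the degenerate common-point-mass case $p=1$ implicit).
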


The algorithm behind Lemma~\ref{lemma:run-twice} is a natural one: we run the given mechanism $\calA$ twice, and pay the privacy cost only when the two outcomes are different. Intuitively, if there is a dominant outcome $i^*$ such that $\Pr[\calA(D) = i^*]\approx 1$, then both executions would output $i^*$ with high probability, in which case there will be no privacy cost.

\begin{proof}
$\calA\circ \calA$ is $(2\eps,0)$-DP by the basic composition theorem. Next, we verify the second claim. 

Identify elements of $\calY$ as $1,2,\dots, m=|\calY|$. Let $D,D'$ be two adjacent data sets. For each $i\in [m]$, let 
\[
p_i = \Pr[\calA(D) = i], ~~~ p'_i = \Pr[\calA(D') = i].
\]

We define a distribution $\mathbf{C}$. For each $i\in [m]$, define $q_i$ to be the largest real such that
\[
p_i^2 - q_i \in [e^{-2\eps} ({p'_i}^2 - q_i), e^{2\eps} ({p'_i}^2 - q_i)].
\]
Then, we define $\mathbf{C}$ to be a distribution over $\{(i,i):i\in [m]\}$ where $\Pr[\mathbf{C}=(i,i)]=\frac{q_i}{\sum_j q_j}$.

We can then write $(\calA\circ \calA)(D) = \alpha \cdot \mathbf{C} + (1-\alpha) \cdot \mathbf{N}^0$ and $(\calA\circ \calA)(D') = \alpha\cdot \mathbf{C} + (1-\alpha)\cdot \mathbf{N}^1$, where $\alpha = \sum_{i} q_i$, and $\mathbf{N}^0$ and $\mathbf{N}^1$ are $2\eps$-indistinguishable.

Next, we consider lower-bounding $\Pr[\mathbf{N}^0 = (y,y'):y\ne y']$. The lower bound of $\Pr[\mathbf{N}^0 = (y,y'):y\ne y']$ will follow from the same argument.

Indeed, we have
\[
\frac{\Pr[\mathbf{N}^0=(y,y'):y\ne y']}{\Pr[\mathbf{N}^0=(y,y)]} =  \frac{\sum_{i} p_i(1-p_i)}{\sum_{i} p_i^2 - q_i}.
\]
We claim that
\[
p_i^2 - q_i \le  1-{p'_i}^2.
\]
The inequality is trivially true if $p_i^2 \le 1-{p'_i}^2$. Otherwise, we can observe that for $q := {p_i}^2+{p'_i}^2 - 1 > 0$, we have ${p_i}^2-q = 1-{p'_i}^2$ and ${p'_i}^2 - q = 1-{p_i}^2$. Since $1-{p'_i}^2 \in [e^{-2\eps} (1-p_i^2), e^{2\eps} (1-p_i^2)]$, this implies that $q_i$ can only be larger than $q$.

Since we also trivially have that $p_i^2 - q_i \le {p_i}^2$, we conclude that
\[
\frac{\Pr[\mathbf{N}^0=(y,y'):y\ne y']}{\Pr[\mathbf{N}^0=(y,y)]} \ge  \frac{\sum_{i} p_i(1-p_i)}{\sum_{i} \min(p_i^2, 1-{p'_i}^2)} \ge \frac{\sum_{i} p_i(1-p_i)}{\sum_{i} \min(p_i^2, e^{2\eps}(1-p_i^2))}.
\]
Next, it is straightforward to show that, for every $p\in [0,1]$, one has
\[
\frac{ p(1-p)}{\min(p^2, e^{2\eps}(1-{p}^2))} = \min\left( \frac{1-p}{p}, \frac{p}{e^{2\eps}(1+p)} \right) \ge \frac{1-\sqrt{e^{2\eps}/(1+e^{2\eps})}}{\sqrt{e^{2\eps}/(1+e^{2\eps})}}.
\]
Consequently,
\[
\Pr[\mathbf{N}^0=(y,y'):y\ne y'] =  \frac{\Pr[\mathbf{N}^0=(y,y'):y\ne y']}{\Pr[\mathbf{N}^0=(y,y'):y\ne y']+\Pr[\mathbf{N}^0=(y,y)]} \ge 1-\sqrt{e^{2\eps}/(1+e^{2\eps})},
\]
as desired.
\end{proof}

\begin{remark}
For a typical use case where $\epsilon = 0.1$, we have $f(\eps)\approx 0.258$. Then, by applying Theorem~\ref{thm:TCprivacy}, on average we pay $\approx 8\eps$ privacy cost for each target hit. Improving the constant of $8$ is a natural question for future research. 
We also note that while the overhead is more significant compared to the
boundary wrapper of Algorithm~\ref{algo:bwrapper}, the output is more informative as it includes two independent responses of the core algorithm whereas Algorithm~\ref{algo:bwrapper} returns one or none (when $\top$ is returned).
We expect that it is possible to design less-informative boundary wrappers for the case of blackbox access (no probability oracle) that have a lower overhead. We leave this as an interestion question for followup work.
\end{remark}





\section{$q$ value for \texttt{BetweenThresholds}} \label{qvotbetween:sec}

We provide details for the \BetweenThresholds classifier (see Section~\ref{betweenintro:sec}).
The \BetweenThresholds classifier 
is a refinement of \AboveThreshold.
It is specified by a
$1$-Lipschitz function $f$, two thresholds $t_\ell < t_r$, and a privacy parameter $\eps$.
We compute $\tilde{f}(D) = f(D) + \Lap(1/\eps)$, where $\Lap$ is the Laplace distribution.  If 
$\tilde{f}(D) < t_\ell$ we return \texttt{L}. If
$\tilde{f}(D) > t_r$ we return \texttt{H}. Otherwise, we return $\top$.

\begin{lemma}[Effectiveness of the ``between'' target] \label{qbetween:lemma}
The $\top$ outcome is an 
$(1-e^{-(t_r-t_l)\eps})\cdot \frac{e^{\eps}-1}{e^{2\eps}-1}$-target for 
\BetweenThresholds.
\end{lemma}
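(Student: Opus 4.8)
The plan is to mimic the structure of the proof of Lemma~\ref{lemma:binaryFnP} (the private-testing case of \NotPrior), but carry along the extra factor coming from the probability that Laplace noise moves the value across the \emph{entire} interval $[t_\ell, t_r]$. First I would fix two neighboring datasets $D^0, D^1$. Since $f$ is $1$-Lipschitz, $|f(D^1) - f(D^0)| \le 1$, so without loss of generality $f(D^1) \ge f(D^0)$ and I can write $f(D^1) = f(D^0) + s$ with $s\in[0,1]$. Let $\tilde f(D^b) = f(D^b) + \Lap(1/\eps)$. Then $\tilde f(D^0)$ and $\tilde f(D^1)$ are two shifted Laplace distributions, and their densities satisfy the pointwise ratio bound $e^{-\eps s}$ to $e^{\eps s}$, hence $e^{-\eps}$ to $e^{\eps}$.

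The key computation is to identify the common component $\mathbf C$ and the residual components $\mathbf B^0, \mathbf B^1$. I would take $\mathbf C$ to be (the normalization of) the pointwise minimum density $\min(g_0(x), g_1(x))$, where $g_b$ is the density of $\tilde f(D^b)$; this is the largest mass that can be shared, so $p = \int \min(g_0,g_1)$. The residuals $\mathbf B^b$ are then supported (after the output map $x \mapsto \{\texttt{L}, \top, \texttt{H}\}$) on the regions where the two densities differ. The crucial observation is that the ``extra'' mass of $g_1$ over $g_0$ sits to the \emph{right} (larger $x$) and the extra mass of $g_0$ over $g_1$ sits to the left, and for a Laplace shift these excess regions are themselves tails of Laplace-like shape. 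I then need to lower bound $\Pr[\mathbf B^b \in \{\top\}]$, i.e., the conditional probability that the residual noise lands in $(t_\ell, t_r)$. This is where the factor $(1 - e^{-(t_r - t_\ell)\eps})$ enters: conditioned on being in the excess region, the probability of \emph{not} overshooting past $t_r$ (resp.\ not undershooting past $t_\ell$) is at least $1 - e^{-(t_r-t_\ell)\eps}$, because the excess distribution has an exponential tail with rate $\eps$ and the interval has width $t_r - t_\ell$. After factoring this out, the remaining bound $\frac{e^\eps - 1}{e^{2\eps}-1} = \frac{1}{e^\eps+1}$ is exactly the \NotPrior bound from Lemma~\ref{lemma:binaryFnP} applied to the three-outcome coarsening $\{\texttt{L}\cup\texttt{H}\}$ versus $\{\top\}$ --- more precisely, one treats the coarsened distribution on $\{\top, \neg\top\}$ and observes that the worst case is the same two-point configuration analyzed there, scaled down by the interval-crossing factor.

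Concretely, the steps in order: (1) reduce to the one-dimensional Laplace-shift picture via Lipschitzness; (2) write $g_b$ explicitly and compute the shared mass $p$ and the excess densities $g_b - \min(g_0,g_1)$; (3) push these through the classifier map to distributions on $\{\texttt{L},\top,\texttt{H}\}$ and check $\mathbf B^0 \approx_\eps \mathbf B^1$ (this follows from the pointwise $e^{\pm\eps}$ density ratio, preserved under the deterministic coarsening); (4) lower bound $\Pr[\mathbf B^b = \top]$ by splitting into ``the $\top$-mass that $\mathbf B^b$ inherits directly'' and bounding, for the excess mass, the fraction that falls in $(t_\ell, t_r)$ rather than beyond, using the exponential tail to get the $(1 - e^{-(t_r-t_\ell)\eps})$ factor; (5) combine with the \NotPrior-style arithmetic to get $q \ge (1 - e^{-(t_r - t_\ell)\eps}) \cdot \frac{e^\eps - 1}{e^{2\eps}-1}$.

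The main obstacle I expect is step (4): carefully accounting for \emph{where} the excess Laplace mass lies relative to the interval $(t_\ell, t_r)$, and in particular handling the case analysis on the position of $f(D^0), f(D^1)$ relative to $t_\ell$ and $t_r$ (e.g., both values below $t_\ell$, straddling, etc.) so that the interval-crossing bound $1 - e^{-(t_r-t_\ell)\eps}$ holds uniformly. One has to verify that the \emph{worst} configuration for $q$ is the one where the shared-mass decomposition is tightest (the analogue of $\pi' = \pi$ in Lemma~\ref{lemma:binaryFnP}), and that even there the overshoot probability past the far threshold is controlled by the exponential tail rate $\eps$ over width $t_r - t_\ell$. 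I would organize this by conditioning on the noisy value $\tilde f(D^b)$ and using the memorylessness-like tail bound of the Laplace distribution: given that the noise already pushed the value into the excess region on one side, the extra distance needed to cross the whole interval costs a further factor $e^{-(t_r-t_\ell)\eps}$ in probability.
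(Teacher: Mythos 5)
There is a genuine gap at the heart of your plan: the choice of the common component $\mathbf{C}$ as the normalized pointwise minimum $\min(g_0,g_1)$ with $p=\int\min(g_0,g_1)$ (the maximal-overlap/TV decomposition). With that choice the residuals $\mathbf{B}^0,\mathbf{B}^1$ are proportional to $(g_0-g_1)_+$ and $(g_1-g_0)_+$, which have \emph{disjoint} supports: writing $f(D^0)\le f(D^1)$ and $m=\tfrac{1}{2}(f(D^0)+f(D^1))$, the excess of $g_0$ sits entirely on $x<m$ and the excess of $g_1$ on $x>m$. In the generic case that drives the lemma --- both $f(D^0),f(D^1)$ below $t_\ell$ (so $m\le t_\ell$) --- the pushforward of $\mathbf{B}^0$ through the classifier is a point mass on \texttt{L}. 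Then $\Pr[\mathbf{B}^0=\top]=0$, so condition~3 of the target definition fails for every $q>0$, and $\mathbf{B}^0\approx_\eps\mathbf{B}^1$ fails as well (the likelihood ratio on $\top$ is infinite). Your step~(3) claim that indistinguishability of the residuals ``follows from the pointwise $e^{\pm\eps}$ density ratio, preserved under coarsening'' is where this goes wrong: the $e^{\pm\eps}$ bound holds for $g_0$ versus $g_1$, but not for the positive parts $(g_b-g_{1-b})_+$ of their difference. The whole point of Definition~\ref{def:qtarget} is that one must \emph{not} push all the shared mass into $\mathbf{C}$; the residuals must keep enough common-looking mass to remain $\eps$-indistinguishable while still hitting the target with probability $q$.

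The paper's proof avoids this by working directly with the three-outcome distribution on $\{\texttt{L},\top,\texttt{H}\}$ and reusing the explicit \NotPrior construction of Lemma~\ref{lemma:limitednotprior} with prior $\texttt{L}$ (after reducing, as you do, to the case where both values lie below $t_\ell$; the above-$t_r$ case is symmetric and the straddling cases only help): $\mathbf{C}$ is a mixture of point masses on $\texttt{L}$ and $\texttt{H}$, and the residuals $\mathbf{B}^b$ retain a calibrated amount of prior mass plus the conditional distribution on $\{\top,\texttt{H}\}$, so that $\mathbf{B}^0\approx_\eps\mathbf{B}^1$ and $\Pr[\mathbf{B}^b\in\{\top,\texttt{H}\}]\ge\frac{1}{e^\eps+1}$. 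The factor $1-e^{-(t_r-t_\ell)\eps}$ then enters not as an overshoot probability of the excess mass but as the proportion $\pi_\top^b/(\pi_\top^b+\pi_H^b)$ of $\top$ within the non-prior outcomes of the \emph{original} distribution (your memorylessness intuition does correctly explain why this ratio equals $1-e^{-(t_r-t_\ell)\eps}$). Also be careful with your closing suggestion of applying the binary lemma to the coarsening $\{\texttt{L},\texttt{H}\}$ versus $\{\top\}$: that would suggest $q=\frac{1}{e^\eps+1}$ with no width-dependent factor, which is provably false for small $t_r-t_\ell$ (the target probability cannot exceed $\pi_\top^b/(1-p)$, and $1-p$ is bounded below by a quantity of order $\eps$ independent of the gap), and indeed a decomposition of the coarsened pair does not lift to one of the three-outcome pair with a common $\mathbf{C}$. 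To repair your writeup, replace the maximal-overlap decomposition in steps (2)--(3) by the \NotPrior decomposition with prior $\texttt{L}$ (resp.\ $\texttt{H}$) and carry the ratio $\pi_\top/(\pi_\top+\pi_H)$ through; the rest of your outline (Lipschitz reduction, case analysis, final arithmetic) matches the paper.
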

\begin{proof}
Without loss of generality we assume that 
$t_\ell=0$ and $t_r=t/\eps$.

Consider two neighboring data sets $D^0$ and $D^1$ and the respective $f(D^0)$ and $f(D^1)$. Since $f$ is $1$-Lipschitz, we can assume without loss of generality (otherwise we switch the roles of the two data sets) that
$f(D^0)\leq f(D^1) \leq f(D^0)+1$.
Consider the case $f(D^1) \leq 0$. The case 
$f(D^0) \geq t/\eps$ is symmetric and the cases where one or both of $f(D^b)$ are in $(0,t/\eps)$ make $\bot$ a more effective target.
\begin{align*}
\pi_L^b &:= \Pr[f(D^b)+ \Lap(1/\eps) < t_\ell=0] = 1-\frac{1}{2}e^{-|f(D^b)|\eps} \\
\pi_H^b &:= \Pr[f(D^b)+ \Lap(1/\eps) > t_r=t/\eps] = \frac{1}{2}e^{-(|f(D^b)|\eps -t} \\
\pi_\top^b &:= \Pr[f(D^b)+ \Lap(1/\eps) \in (0,t/\eps)] = \frac{1}{2} \left( e^{-|f(D^b)|\eps}- e^{-(|f(D^b)|\eps -t} \right)= \frac{1}{2} e^{-|f(D^b)|\eps}(1-e^{-t})
\end{align*}
Note that $\pi_L^0 \approx_\eps \pi_L^1$ and 
$\pi_H^1 \approx_\eps \pi_H^0$,
$\pi_L^0 \geq \pi_L^1$ and 
$\pi_H^1 \geq \pi_H^0$

We set 
\[
p = (\pi_L^1- \frac{1}{e^{\eps}-1}(\pi_L^0-\pi_L^1)) + (\pi_H^0-\frac{1}{e^{\eps}-1}(\pi_H^1-\pi_H^0))\]
and the distribution
$\mathbf{C}$ to be \texttt{L} with probability $(\pi_L^1- \frac{1}{e^{\eps}-1}(\pi_L^0-\pi_L^1))/p$ and 
\texttt{H} otherwise. 

We specify $p$ and the distributions $\mathbf{B}^b$ and
$\mathbf{C}$ as we did for \NotPrior (Lemma~\ref{lemma:NotPriorprivacy}) with respect to ``prior'' \texttt{L}. (We can do that and cover also the case where $f(D^0)>t/\eps$ where the symmetric prior would be \texttt{H} because the target does not depend on the values being below or above the threshold).

The only difference is that our target is smaller, and includes only $\top$ rather than $\top$ and \texttt{H}.
Because of that, the calculated $q$ value is reduced by a factor of
\begin{align*}
\frac{\pi_\top^b}{\pi_\top^b+\pi_H^b} = \frac{\frac{1}{2} e^{-|f(D^b)|\eps}(1-e^{-t})}{\frac{1}{2} e^{-|f(D^b)|\eps}}=(1-e^{-t})\ .
\end{align*}

%
%
\end{proof}

\section{Analysis of SVT with individual privacy charging} \label{SVTindividual:sec}

We provide the privacy analysis for SVT with individual privacy charging (see Section~\ref{SVTindividualintro:sec}).

\begin{proof} [Proof of Theorem~\ref{SVTindividual:thm}]
We apply simulation-based privacy analysis (see Section~\ref{lemma:intro-simulate}).  Consider two neighboring datasets $D$ and $D'=D\cup\{x\}$.  The only queries where potentially $f(D)\not= f(D')$ and we may need to call the data holder are those with $f(x)\not=0$. Note that for every $x'\in D$, the counter $C_{x'}$ is the same during the execution of Algorithm~\ref{algo:svt-individual} on either $D$ or $D'$. This is because the update of $C_{x'}$ depends only on the published results and $f_i(x')$, both of which are public information. Hence, we can think of the processing of $C_{x'}$ as a post-processing when we analyze the privacy property between $D$ and $D'$. 

After $x$ is removed, the response on $D$ and $D'$ is the same, and the data holder does not need to be called. Before $x$ is removed from $D'$, we need to consider the queries such that $f(x)\ne 0$ while $C_x< \tau$. Note that this is equivalent to a sequence of \AboveThreshold tests to linear queries, we apply TCT analysis with \CR applied with above threshold responses. The claim follows from Theorem~\ref{thm:TCprivacy}.
\end{proof}
We also add that Algorithm~\ref{algo:svt-individual} can be implemented with \BetweenThresholds test (see Section~\ref{betweenintro:sec}), the extension is straightforward with the respective privacy bounds following from Lemma~\ref{qbetween:lemma} ($q$ value for target hit).

\section{Private Selection} \label{sec:selection}


\newcommand{\Best}{\mathrm{Best}}

In this section we provide proofs and additional details for private selection in TCT (Sections~\ref{topkselectintro:sec} and~\ref{appselectionintro:sec}).
Let $\calA_1,\dots, \calA_m$ be of $m$ private algorithms that return results with quality scores. The private selection task asks us to select the best algorithm from the $m$ candidates.  The one-shot selection described in Algorithm~\ref{algo:top-k} (with $k=1$) runs each algorithm once and returns the response with highest quality.

It is shown in \cite{LiuT19-private-select} that if each $\calA_i$ is
$(\eps,0)$-DP then the one-shot selection algorithm degrades the privacy bound to $(m\eps, 0)$-DP. However, if we relax the requirement to approximate DP, we can show that one-shot selection is $(O(\log (1/\delta)\eps),\delta)$-DP, which is independent of $m$ (the number of candidates).  
Moreover, in light of a lower-bound example by \cite{LiuT19-private-select}, Theorem~\ref{thm:vanilla-selection} is tight up to constant factors. 

Formally, our theorem can be stated as
\begin{theorem}\label{thm:vanilla-selection}
Suppose $\eps < 1$. Let $\calA_1,\dots, \calA_m:X^n \to \calY\times \mathbb{R}$ be a list of $(\eps, \delta_i)$-DP algorithms, where the output of $\calA_i$ consists of a solution $y\in \calY$ and a score $s\in \mathbb{R}$. Denote by $\Best(\calA_1,\dots, \calA_m)$ the following algorithm (Algorithm~\ref{algo:top-k} with $k=1$): run each $\calA_1,\dots, \calA_m$ once, get $m$ results $(y_1,s_1),\dots, (y_m,s_m)$, and output $(y_{i^*},s_{i^*})$ where $i^* = \arg\max_{i} s_i$. 

Then, for every $\delta \in (0,1)$, $\Best(\calA_1,\dots, \calA_m)$ satisfies $(\eps',\delta')$-DP where $\eps' = O(\eps\log(1/\delta)), \delta' = \delta + \sum_{i} \delta_i$.
\end{theorem}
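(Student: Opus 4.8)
The plan is to bootstrap the reduction behind Lemma~\ref{oneshotprivacy:lemma} with a fresh application of the Target-Charging analysis (Theorem~\ref{thm:TCprivacy}), but with the charging budget chosen to be $\tau=\Theta(\log(1/\delta))$ rather than $k=1$. Concretely, by the sweep construction of Section~\ref{appselectionintro:sec} (the $k=1$ instance of Lemma~\ref{oneshotprivacy:lemma}), $\Best(\calA_1,\dots,\calA_m)$ is reproduced \emph{exactly} by an adaptive TCT interaction: first a \CR call is issued for each candidate $\calA_i$ at an infinite threshold, and then the threshold is swept downward by \RCR calls, each of which (by Lemma~\ref{lemma:analyze-revise}) is handled in TCT as a $(2\eps,0)$-DP algorithm with a \NotPrior target once the $\delta_i$'s have been peeled off. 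The analyst stops the sweep the moment the first score is released, i.e.\ after exactly one target hit, and the released triple is $(y_{i^*},s_{i^*})$.

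Next I would run the simulation-based argument from the proof of Theorem~\ref{thm:TCprivacy} (Algorithm~\ref{algo:simultargetcharge}) on this interaction for two fixed neighbouring datasets. Each candidate $\calA_i$ is sampled only once, in its \CR call, so the $\delta_i$-failure events contribute at most $\sum_i\delta_i$ in total, and the \RCR calls reuse the stored $r_h$ without incurring any further $\delta$ charge. Conditioned on no $\delta$-failure, every data-holder call in the simulation is $(2\eps,0)$-DP, and since each \NotPrior target has $q$-value at least $\tfrac{1}{e^{2\eps}+1}\ge\tfrac{1}{e^2+1}=\Omega(1)$ for $\eps<1$ (Lemma~\ref{lemma:NotPriorprivacy}), each such call is a target hit with probability at least $q$. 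Hence the number $h$ of data-holder calls made before the single target hit is stochastically dominated by a $\Geom(q)$ variable, so $\Pr[h>t]\le(1-q)^{t}\le e^{-qt}$ for every $t$. Declaring a failure when $h>t$ for $t:=\lceil\tfrac1q\ln(1/\delta)\rceil=\Theta(\log(1/\delta))$ therefore adds at most $\delta$ to the failure probability.

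Finally I would combine the pieces. Conditioned on neither kind of failure, the published transcript is a post-processing of at most $t$ adaptive $(2\eps,0)$-DP computations, hence $(2\eps t,0)$-DP $=(O(\eps\log(1/\delta)),0)$-DP by basic composition (Lemma~\ref{composition:lemma}); advanced composition would give the same order since $\eps<1$, so the simpler bound is enough. Applying the failure-event lemma (Lemma~\ref{failureevent:lemma}) with total failure probability $\delta+\sum_i\delta_i$ gives that the interaction is $\big(O(\eps\log(1/\delta)),\,\delta+\sum_i\delta_i\big)$-DP, and Lemma~\ref{lemma:intro-simulate} licenses reading this back as a privacy guarantee for $\Best$ itself, which is exactly $\eps'=O(\eps\log(1/\delta))$, $\delta'=\delta+\sum_i\delta_i$.

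I expect the only genuinely delicate point to be the justification that the ``halt after $\tau$ hits'' Target-Charging analysis applies verbatim to an interaction that the analyst halts after a single hit, with the chosen $\tau=\Theta(\log(1/\delta))$ never actually reached. The resolution is that early stopping only makes the data-holder-call count $h$ stochastically smaller and the simulation's overflow-failure event rarer, so the privacy parameters of Theorem~\ref{thm:TCprivacy} remain valid — indeed this is precisely why we are free to inflate $\tau$ here to drive $e^{-\Omega(\tau)}$ below $\delta$. Everything else is routine substitution; the matching lower bound (tightness up to constants) is a separate argument via the ``bad example'' of~\cite{LiuT19-private-select}.
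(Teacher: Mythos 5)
Your overall route is the same as the paper's: simulate $\Best$ by a downward threshold sweep implemented with \CR/\RCR calls (Lemma~\ref{lemma:analyze-revise}), charge each $\delta_i$ only once on the initial \CR call, and invoke the Target-Charging analysis of Theorem~\ref{thm:TCprivacy} with a single target hit, allowing $O(\log(1/\delta))$ data-holder calls so the overflow-failure probability is at most $\delta$. Your geometric-tail treatment of the single-hit case (equivalently, applying Theorem~\ref{thm:TCprivacy} with $\tau=1$ and $\alpha=\Theta(\log(1/\delta))$) is sound, and your observation that halting after the first hit only shrinks the call count and the failure event is the right justification for that step.

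The genuine gap is the claim that the sweep reproduces $\Best(\calA_1,\dots,\calA_m)$ \emph{exactly}. That is true only when the scores range over a finite set that the sweep can enumerate; the theorem allows arbitrary real-valued scores, and a sweep with step size $dt$ (or over any fixed grid) can return the wrong candidate when two sampled scores land in the same grid cell, and can return nothing when the top score falls below the lowest grid point. The paper spends the second half of its proof on precisely this issue: it first proves the discrete case, then for continuous (atomless) score distributions it constructs, for the fixed neighboring pair $D, D'$, a quantile grid $X$ fine enough that these bad events have probability $O(\eta)$, so the simulated output is within $O(\eta)$ statistical distance of the true one-shot output, and finally takes $\eta\to 0$. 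Without such a discretization-and-limit argument (or some other justification that the DP bound survives the passage from the grid simulation to the real-valued algorithm), the "exact simulation" step of your sketch does not hold as stated for scores in $\mathbb{R}$. Everything else matches the paper's argument.
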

\begin{proof}

\medskip\noindent\textbf{Discrete scores.} We start by considering the case that the output scores from $\calA_1,\dots, \calA_m$ always lie in a \emph{finite} set $X\subseteq \mathbb{R}$. The case with continuous scores can be analyzed by a discretization argument.

Fix $D^0, D^1$ to be a pair of adjacent data sets. We consider the following implementation of the vanilla private selection.

\begin{algorithm2e}[h]
    \caption{Private Selection: A Simulation}
    \label{algo:vanilla-selection}
    \DontPrintSemicolon
    \KwIn{
        Private data set $D$. The set $X$ defined above.
    }
    \For{$i= 1, \dots, m$}{
        $(y_i, s_i) \gets \calA_i(D)$ \;
    }
    \For{$\hat{s}\in X$ in the decreasing order}{
        \For{$i=1,\dots, m$}{
            \If{$s_i \ge \hat{s}$}{
                \KwRet{$(y_i,s_i)$} \;
            }
        }
    }
\end{algorithm2e}

Assuming the score of $\calA_i(D)$ always lies in the set $X$, it is easy to see that Algorithm~\ref{algo:vanilla-selection} simulates the top-1 one-shot selection algorithm (Algorithm~\ref{algo:top-k} with $k=1$) perfectly. Namely, Algorithm~\ref{algo:vanilla-selection} first runs each $\calA_i(D)$ once and collects $m$ results. Then, the algorithm searches for the \emph{lowest} $\hat{s}\in X$ such that there is a pair $(y_i,s_i)$ with a score of at least $s_i \ge \hat{s}$. The algorithm then publishes this score.

On the other hand, we note that Algorithm~\ref{algo:vanilla-selection} can be implemented by the conditional release with revisions framework (cf. Algorithm~\ref{algo:conditionalrelease}). Namely, Algorithm~\ref{algo:vanilla-selection} first runs each private algorithm once and stores all the outcomes. Then the algorithm gradually extends the target set (namely, when the algorithm is searching for the threshold $\hat{s}$, the target set is $\{(y,s):s\ge \hat{s}\}$), and tries to find an outcome in the target. Therefore, it follows from Lemma~\ref{lemma:analyze-revise} and Theorem~\ref{thm:TCprivacy} that Algorithm~\ref{algo:vanilla-selection} is $(O(\eps\log(1/\delta)),\delta+\sum_i \delta_i)$-DP.

\medskip\noindent\textbf{Continuous scores.} We then consider the case that the distributions of the scores of $\calA_1(D),\dots, \calA_K(D)$ are \emph{continuous} over $\mathbb{R}$. We additionally assume that the distribution has no ``point mass''. This is to say, for every $i\in [m]$ and $\hat{s}\in \mathbb{R}$, it holds that
\[
\lim_{\Delta \to 0}\Pr_{(y_i,s_i)\sim \calA_i(D)}[\hat{s}-\Delta \le s\le \hat{s}+\Delta] = 0.
\]
This assumption is without loss of generality because we can always add a tiny perturbation to the original output score of $\calA_i(D)$.

Fix $D,D'$ as two neighboring data sets. We show that the vanilla selection algorithm preserves differential privacy between $D$ and $D'$.

Let $\eta > 0$ be an arbitrarily small real number. Set $M = \frac{10\cdot m^4}{\eta}$. For each $\ell\in [1,M]$, let $q_{\ell} \in \mathbb{R}$ be the unique real such that
\[
\Pr_{i\sim [m], (y_i,s_i)\sim \calA_i(D)}[s_i \ge q_{\ell}] = \frac{\ell}{M+1}.
\]
Similarly we define $q'_{\ell}$ with respect to $\calA_i(D')$. Let $X=\{q_{\ell}, q'_{\ell}\}$.

Now, consider running Algorithm~\ref{algo:vanilla-selection} with the set $X$ and candidate algorithms $\calA_1,\dots, \calA_K$ on $D$ or $D'$. Sort elements of $X$ in the increasing order, which we denote as $X = \{\hat{q}_1\le \dots \le \hat{q}_m\}$. After sampling $\calA_i(D)$ for each $i\in [m]$, Algorithm~\ref{algo:vanilla-selection} fails to return the best outcome only if one of the following events happens.
\begin{itemize}
\item The best outcome $(y^*,s^*)$ satisfies that $s^* < \hat{q}_1$.
\item There are two outcomes $(y_i,s_i)$ and $(y_j,s_j)$ such that $s_i,s_j\in [\hat{q}_{\ell},\hat{q}_{\ell+1})$ for some $\ell\in [n]$.
\end{itemize}
If Item 1 happens, Algorithm~\ref{algo:vanilla-selection} does not output anything. If Item 2 happens, then it might be possible that $i < j, s_i > s_j$, but Algorithm~\ref{algo:vanilla-selection} outputs $s_i$.

It is easy to see that Event~1 happens with probability at most $\frac{m^2}{M}\le \eta$ by the construction of $X$. Event~2 happens with probability at most $M\cdot \frac{m^4}{M^2}\le \eta$. Therefore, the output distribution of Algorithm~\ref{algo:vanilla-selection} differs from the true best outcome by at most $O(\eta)$ in the statistical distance. Taking the limit $\eta \to 0$ completes the proof.
\end{proof}

\begin{remark}
Theorem~\ref{thm:vanilla-selection} shows that there is a factor of $\log(1/\delta)$ overhead when we run top-1 one-shot private selection (Algorithm~\ref{algo:vanilla-selection}) only once. Nevertheless, we observe that if we compose top-1 one-shot selection
with other algorithms under the TCT framework (e.g., compose multiple top-1 one-shot selections, generalized private testing, or any other applications mentioned in this paper)), then \emph{on-average} we only pay $4\eps$ privacy cost 
(one \NotPrior target hit with a $2\eps$-DP algorithm)
per top-1 selection (assuming $\eps$ is sufficiently small so that $e^{\eps} \approx 1$). In particular, adaptively performing $c$ executions of top-1 selection 
is $(\eps',\delta)$-DP where $\eps' = \eps \cdot (4\sqrt{c\log(1/\delta)}+o(\sqrt{c}))$.

Liu and Talwar \cite{LiuT19-private-select} established a lower bound of $2\eps$ on the privacy of a more relaxed top-1 selection task.  Hence, there is a factor of 2 gap between this lower bound and our privacy analysis.  Note that for the simpler task one-shot above threshold score (discussed in Section~\ref{appselectionintro:sec}), where the goal is to return a response that is above the threshold if there is one, can be implemented using a single target hit on Conditional Release call (without revise) and this matches the lower bound of $2\eps$. We therefore suspect that it might be possible to tighten the privacy analysis of top-1 one-shot selection.  We leave it as an interesting question for followup work. 
\end{remark}

\subsection{One-Shot Top-$k$ Selection}

In this section, we prove our results for top-$k$ selection.

We consider the natural one-shot algorithm for top-$k$ selection described in Algorithm~\ref{algo:top-k}, which (as mentioned in the introduction) generalizes the results presented in \cite{DBLP:conf/nips/DurfeeR19,DBLP:conf/icml/QiaoSZ21}, which were tailored for selecting from $1$-Lipschitz functions, using the Exponential Mechanism or the Report-Noise-Max paradigm. 

We prove the following privacy theorem for Algorithm~\ref{algo:top-k}.

\begin{theorem}\label{thm:top-k-privacy}
Suppose $\eps < 1$. Assume that each $\calA_i$ is $(\eps,0)$-DP. Then, for every $\delta \in (0,1)$, Algorithm~\ref{algo:top-k} is $(\eps \cdot O(\sqrt{k\log(\frac{1}{\delta})} +\log(\frac{1}{\delta})),\delta)$-DP.
\end{theorem}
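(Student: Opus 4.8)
The plan is to reduce one‑shot top‑$k$ selection to the conditional‑release‑with‑revisions template (Algorithm~\ref{algo:conditionalrelease}) and then plug into the target‑charging bound of Theorem~\ref{thm:TCprivacy}. First I would describe the threshold‑sweep simulation, generalizing the $k=1$ construction of Algorithm~\ref{algo:vanilla-selection} to arbitrary $k$: fix neighboring datasets, run each $\calA_i(D)$ once and store its outcome, then sweep a threshold $t$ downward over a finite quantile set $X$ (defined exactly as in the proof of Theorem~\ref{thm:vanilla-selection}, which reduces continuous score distributions to the discrete case up to an arbitrarily small $\eta$ in statistical distance). Concretely, issue a \CR call for each candidate at the top threshold $t_0$ with target $\{(y,s):s\ge t_0\}$, and for each subsequent threshold $t_{j+1}$ issue an \RCR call for each candidate with extension $\top'=\{(y,s):s\in[t_{j+1},t_j)\}$; a candidate's stored score is published the first time the sweeping threshold drops to or below it. Since scores are published in decreasing order, stopping after exactly $k$ publications produces precisely the top‑$k$ triplets of Algorithm~\ref{algo:top-k}, so (in the limit $\eta\to 0$) the two algorithms have identical output distributions.

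Next I would do the per‑call accounting. By Lemma~\ref{CRprivacy:lemma} each initial \CR call is an $\eps$‑DP computation with a \NotPrior target, and by Lemma~\ref{lemma:analyze-revise} each \RCR call is a $2\eps$‑DP computation with a \NotPrior target. For uniformity I would view every call as a $2\eps$‑DP computation (an $\eps$‑DP algorithm is $2\eps$‑DP, and a $q$‑target remains a $q'$‑target for $q'\le q$), so that a single $\eps_0:=2\eps$ and a single $q:=\tfrac{1}{e^{2\eps}+1}$ suffice for the whole interaction. Since $\eps<1$ we get $q\ge \tfrac{1}{e^{2}+1}=\Omega(1)$, i.e.\ $1/q=O(1)$. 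The sweep yields exactly $k$ target hits (one per published score), and because every $\calA_i$ is $(\eps,0)$‑DP there are no $\delta$‑charges ($C_\delta=0$).

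Then I would invoke Theorem~\ref{thm:TCprivacy} on this TCT instance with target‑hit budget $\tau:=\max\{k,\,c\log(1/\delta)\}$ for a suitable absolute constant $c$ and with $\alpha:=1$ (taking $\tau$ larger than the actual number of hits only inflates the bound). The advanced‑composition branch gives privacy parameters $\big(O(\eps\sqrt{\tfrac{\tau}{q}\log(1/\delta)}),\ \delta^*(\tau,1)+C_\delta\big)$ with $1/q=O(1)$ and $\delta^*(\tau,1)\le e^{-\Omega(\tau)}\le\delta$ by the choice of $c$, hence $\eps'=O(\eps\sqrt{\tau\log(1/\delta)})$. Substituting $\tau=\max\{k,\,c\log(1/\delta)\}$ and using $\sqrt{\max\{a,b\}\,b}\le\sqrt{ab}+b$ yields $\eps'=O\big(\eps(\sqrt{k\log(1/\delta)}+\log(1/\delta))\big)$, as claimed; combining with the $\eta\to0$ limit of the discretization costs nothing.

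The main obstacle is the first step: arguing cleanly that the threshold sweep is an \emph{exact} simulation of Algorithm~\ref{algo:top-k} that genuinely fits the interface of Algorithm~\ref{algo:targetchargeapprox} with \RCR calls accounted by Lemma~\ref{lemma:analyze-revise} — in particular getting the quantile‑set discretization right for continuous score distributions and checking that publishing the top scores in sweep order (rather than as an unordered set) is exactly what Algorithm~\ref{algo:top-k} reveals. Everything afterward is a mechanical substitution into Theorem~\ref{thm:TCprivacy}; the only genuine design choice is padding the target‑hit budget to $\tau=\Theta(k+\log(1/\delta))$ so that the failure probability $\delta^*$ stays below $\delta$ even when $k$ is small, which is what produces the additive $\eps\log(1/\delta)$ term.
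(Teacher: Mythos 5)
Your proposal is correct and follows essentially the same route as the paper's own proof: simulate the one-shot selection by a downward threshold sweep implemented with \CR and \RCR calls (generalizing the $k=1$ simulation and its quantile discretization for continuous scores), account each call as a $2\eps$-DP computation with a \NotPrior target so that exactly $k$ target hits occur, and plug $\tau=\Theta(k+\log(1/\delta))$ into Theorem~\ref{thm:TCprivacy}. Your write-up is in fact more explicit than the paper's brief argument (e.g., padding $\tau$ and the uniform $2\eps$ treatment), but the underlying ideas coincide.
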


\begin{remark}
The constant hidden in the big-Oh depends on $\eps$. For the setting that $\eps$ is close to zero so that $e^\eps\approx 1$ and $\delta \ge 2^{o(k)}$, the privacy bound is roughly $(\eps',\delta)$-DP where $\eps' = \eps\cdot (4\sqrt{k\log(1/\delta)} + o(\sqrt{k}))$.
\end{remark}

\begin{remark}
We can take $\calA_i$ as the Laplace mechanism applied to a $1$-Lipschisz quality function $f_i$ (namely, $\calA_i(D)$ outputs  a pair $(i, f_i(D) + \Lap(1/\eps))$, where $i$ denotes the ID of the $i$-th candidate, and $f_i(D) + \Lap(1/\eps)$ is the noisy quality score of Candidate $i$ with respect to the data $D$). In this way, Theoerem~\ref{thm:top-k-privacy} recovers the main result of \cite{DBLP:conf/icml/QiaoSZ21} easily.

Moreover, Theorem~\ref{thm:top-k-privacy} improves over \cite{DBLP:conf/icml/QiaoSZ21} from three aspects: Firstly, Theorem~\ref{thm:top-k-privacy} allows us to report the noisy quality scores of selected candidates for free, while \cite{DBLP:conf/icml/QiaoSZ21} needs to run one additional round of Laplace mechanism to publish the quality scores. Second, our privacy bound has no dependence on $m$, while the bound in the prior work \cite{DBLP:conf/icml/QiaoSZ21} was $(O(\eps \sqrt{k\log(m/\delta)}),\delta)$-DP. Lastly, Theorem~\ref{thm:top-k-privacy} applies more generally to \emph{any} private-preserving algorithms, instead of the classic Laplace mechanism.
\end{remark}


\begin{proof}
The proof is similar to that of Theorem~\ref{thm:vanilla-selection}. Namely, we run each $\calA_i(D)$ once and store all results. Then we maintain a threshold $T$, which starts with $T=\infty$. We gradually decrease $T$, and use Algorithm~\ref{algo:conditionalrelease} (Conditional Release with Revised Calls) to find outcomes with a quality score larger than $T$. We keep this process until we identify $k$ largest outcomes. The claimed privacy bound now follows from Lemma~\ref{lemma:analyze-revise} and Theorem~\ref{thm:TCprivacy}.
\end{proof}


\end{document}